\documentclass[12pt]{article} %
\usepackage{amsmath,amssymb,amsthm,mathrsfs,bm
}
\textwidth=14cm \textheight=20cm %

\numberwithin{equation}{section} %
\theoremstyle{plain}
   \newtheorem{thm}{\hspace{\parindent}{\sc Theorem}}[section] %
   \newtheorem{pro}[thm]{\hspace{\parindent}Proposition}
   \newtheorem{cor}[thm]{\hspace{\parindent}Corollary}
   \newtheorem{lem}[thm]{\hspace{\parindent}Lemma}
\theoremstyle{remark} %
   \newtheorem{rem}{\hspace{\parindent}Remark}[section] %
\newtheorem{exmp}{\hspace{\parindent}Example}[section]%
\newcommand{\adots}{a = 0,1,2,\dots}
\newcommand{\bC}{\mathbb{C}}
\newcommand{\bR}{\mathbb{R}}

\newcommand{\Csw}{\mathcal{C}_{\ds w}(t,s)}

\newcommand{\Cspace}{C^{\infty}}
\newcommand{\Czerospace}{C^{\infty}_0(\bR^d)}
\newcommand{\domain}{[0,T]\times \bR^d}
\newcommand{\dH}{\mathfrak{H}}
\newcommand{\ddp}{\mathfrak{p}}

\newcommand{\df}{\mathfrak{f}}
\newcommand{\dl}{\mathfrak{l}}
\newcommand{\dM}{\mathfrak{M}}
\newcommand{\ds}{\mathfrak{s}}
\newcommand{\dw}{\mathfrak{w}}
\newcommand{\dW}{\mathfrak{W}}
\newcommand{\Fw}{\mathcal{F}_w}
\newcommand{\h}{\mathfrak{h}}
\newcommand{\Htilde}{\widetilde{H}}
\newcommand{\jdots}{j = 0,1,\dots,\nu-1}
\newcommand{\kdelta}{K_{w\Delta }}
\newcommand{\limepsilon}{\lim_{\epsilon\rightarrow 0+0}}
\newcommand{\Ll}{(L^2)^l}
\newcommand{\Mlc}{M_{l}(\bC)}
\newcommand{\qdelta}{q_{\Delta}}
\newcommand{\qts}{q^{t,s}_{x,y}}
\newcommand{\rittaiim}{\text{\rm{Im\hspace{0.05cm}}}}
\newcommand{\rittaire}{\text{\rm{Re\hspace{0.05cm}}}}
\newcommand{\Rjw}{R_{JW}}
\newcommand{\Rtjw}{\widetilde{R}_{JW}}
\newcommand{\Rtlw}{\widetilde{R}_{LW}}
\newcommand{\Rtrw}{\widetilde{R}_{RW}}

\newcommand{\Sspace}{{\cal S}}
\newcommand{\Tphi}{\mathcal{T}_{\varphi}}
\newcommand{\Ujw}{U_{JW}}
\newcommand{\Ulw}{U_{LW}}
\newcommand{\Urw}{U_{RW}}
\newcommand{\Utjw}{\widetilde{U}_{JW}}
\newcommand{\Utlw}{\widetilde{U}_{LW}}
\newcommand{\Utrw}{\widetilde{U}_{RW}}
\newcommand{\Wsx}{W_{\ds}(x)}
\newcommand{\Wstx}{W_{\ds}(t,x)}
\newcommand{\ts}{t,s}
\newcommand{\zbar}{\overline{z}}
\pagestyle{plain} %
\def\dbar{{\mathchar'26\mkern-12mud}}
\begin{document}
\title{From 
 each  of  Feynman's and  von Neumann's postulates  to the restricted Feynman path integrals: a mathematical theory of temporally continuous quantum measurements}
\author{Wataru Ichinose
\thanks{This work was supported by JSPS KAKENHI Grant Number JP22K03384.
%\endgraf 2020 Mathematics Subject Classification. Primary 46T12: Secondary 81P15, 81Q05.
 }
} 
\date{}
\maketitle %
\begin{quote}
{\small Department of Mathematics, Shinshu University,
Matsumoto 390-8621, Japan. 
\quad E-mail: ichinose@math.shinshu-u.ac.jp}%
\end{quote}\par
\begin{abstract}
Feynman proposed a postulate or a method of quantization in his celebrated paper in 1948.  Applying Feynman's postulate to  temporally continuous quantum measurements  of the positions of  particles,  Mensky   proposed the restricted Feynman path integrals for continuous quantum measurements after phenomenological considerations. Our aim in the present paper is to give a rigorous proof  that  Mensky's restricted Feynman path integrals emerge out of the Feynman's postulate under a simple approximation. In addition,  it is proved that the  restricted Feynman path integrals emerge out of  von Neumann's postulate on instantaneous measurements as well as Feynman's postulate.  The quantum systems that we study include spin systems. These results are applied to  formulations of the multi-split experiments, the quantum Zeno and the Aharanov-Bohm effects.
\end{abstract}
%
%{\bf Keywords} Restricted Feynman path integral $\cdot$ Continuous quantum measurement $\cdot$ Weighted %probability amplitude $\cdot$ Schr\"odinger equation $\cdot$ Pauli equation. 
%
%\vspace{0cm}\\
% {\bf Mathematics Subject Classification}  46T12 $\cdot$ 81P15 $\cdot$ 81Q05.
%
%

%%%%%%%%%%%%%
%%%%%%%%%%%%%%%%%%
%%%%%%%%%%%%%%%%%%%
\section{Introduction}%
 Let $T > 0$ be an arbitrary constant, $0 \leq t \leq T$ and 
$x = (x_1,\dotsc,x_d)\in \bR^d$. We consider a particle with mass $m > 0$ and charge $\mathfrak{q} \in \bR$ moving in $\bR^d$ with  electric field $E(t,x) = (E_1,
\dots, E_d) \in \bR^d$ and a  magnetic strength tensor $B(t,x) = (B_{jk}(t,x))_{1\leq j < k \leq d}
\in \bR^{d(d-1)/2}$. 
Let   $(V(t,x),A(t,x)) = (V,A_1,
\dots,A_d)  \in \bR^{d+1}$ be an electromagnetic potential, i.e.
\begin{align} \label{1.1}
     & E = -\frac{\partial A}{\partial t} - \frac{\partial V}{\partial 
x},\notag \\
          & B_{jk} =  \frac{\partial A_k}{\partial x_j}  -\frac{\partial 
A_j}{\partial x_k}
\quad (1 \leq j <  k \leq d),
\end{align}
where  $\partial V/\partial x = (\partial V/\partial x_1,\dots,\partial V/\partial x_d)$. 
Then the Lagrangian function and the classical action are given by
\begin{equation} \label{1.2}
      \mathscr{L}(t,x,\dot{x})=  \frac{m}{2}|\dot{x}|^2 + \mathfrak{q}\dot{x}\cdot A(t,x) - \mathfrak{q}V(t,x),
\   \dot{x}\in \bR^d
\end{equation}
and 
\begin{equation} \label{1.3}
   S(t,s;q) = \int_s^t {\mathscr{L}}(\theta,q(\theta),\dot{q}(\theta))d\theta,\quad \dot{q}(\theta) = \frac{dq(\theta)}{d\theta}
\end{equation}
for a path $q(\theta) \in \bR^d\ (s \leq \theta \leq t)$, respectively.
The corresponding Schr\"odinger equation is defined by
\begin{align} \label{1.4}
& i\hbar \frac{\partial u}{\partial t}(t)  = H(t)u(t)\notag\\ 
& := \left[ \frac{1}{2m}\sum_{j=1}^d
      \left(\frac{\hbar}{i}\frac{\partial}{\partial x_j} - \mathfrak{q}A_j(t,x)\right)^2 + \mathfrak{q}V(t,x)\right]u(t),
\end{align}
where $\hbar$ is the Planck constant.
Throughout this paper we always consider solutions to the Schr\"odinger equations in the sense of distribution. Let $L^2 = L^2(\bR^d)$ denote the space of all square integrable functions on
$\bR^d$ with the inner
product $(f,g) := \int f(x)\overline{g(x)}dx$ and the norm $\Vert f\Vert$, where $\overline{g(x)}$ denotes the complex conjugate of $g(x)$. 
\par
Let's perform a measurement of the position of the particle during the time interval $[0,T]$.
Let $\{a(t) \in \bR^d;0 \leq t \leq T\}$ be its result and $\delta> 0$ its resolution or error  of the measuring device. The measurement gives a change of the probability amplitude of the particle, called wave-function reduction (cf. p.36 in \S 10 of \cite{Dirac}, \S 2.1.1 of \cite{Mensky 1993}, \S III.3 and \S IV.3 of \cite{Neumann} and \S 1.4 of \cite{Sakurai}). Let $f \in L^2$ be a probability amplitude of the particle at an initial time $t=0$.  Then,  Feynman's postulates I and II on p. 371 of \cite{Feynman 1948} say that the probability amplitude in the continuous measurement is heuristically given by the `` sum " of $e^{i\hbar^{-1}S(t,0;q)}f(q(0))$ over the set $\Gamma(t,x;a,\delta)$ of all paths $q$ satisfying $q(t) = x$ and $|q(\theta) - a(\theta)| \leq \delta$ for all $\theta \in [0,t]$, i.e.
\begin{equation} \label{1.5}
     \int_{\Gamma(t,x;a,\delta)} e^{i\hbar^{-1}S(t,0;q)}f(q(0)){\cal D}q.
\end{equation}
\par
An alternative Feynman path integral description has been proposed by Mensky in \S 4.2 of \cite{Mensky 1993} and \S 5.1.3 of \cite{Mensky 2000}, written formally as the `` sum " of $e^{i\hbar^{-1}S(t,0;q)}\dw_a(q)f(q(0))$ with a weight $\dw_a(q)$  over the set $\Gamma(t,x)$  of all paths $q$ satisfying $q(t) = x$, i.e.
\begin{equation} \label{1.6}
     \int_{\Gamma(t,x)} e^{i\hbar^{-1}S(t,0;q)}\dw_a(q)f(q(0)){\cal D}q
\end{equation}
(cf. \S 10.5.4 of \cite{Albeverio et all}, \S 5.1 of \cite{Mazzucchi}).
At first, Mensky defined $\dw_a(q)$ by
\begin{equation*}
\exp \left\{-\frac{1}{t\delta^2}\int_0^t|q(\theta) - a(\theta)|^2d\theta \right\}
\end{equation*}
(cf. (4.9) of \cite{Mensky 1993}), but after phenomenological considerations  he finally defined it by
\begin{equation} \label{1.7}
   \dw_a(q) = \exp \left\{-\gamma \int_0^t|q(\theta) - a(\theta)|^2d\theta \right\}
\end{equation}
with a constant $\gamma > 0$  for the reason that \eqref{1.7} is realistic. Mensky called $\gamma$ the width of the  quantum corridor and \eqref{1.6} the restricted path integral. In the present paper we call \eqref{1.6} the restricted Feynman path integral (RFPI).
In \cite{Ichinose 2023} the author has given the mathematical definition of \eqref{1.6} with \eqref{1.7} in the $L^2$ space and proved that \eqref{1.6} is equal to the solution to
\begin{equation} \label{1.8}
i\hbar \frac{\partial u}{\partial t}(t)  = \bigl[H(t) -i\hbar W(t,x;a)\bigr]u(t)
\end{equation}
with $u(0) = f$, where $W(t,x;a) = \gamma|x-a(t)|^2$.
\par
%%%%%%
%%%%%%%	
	Mensky made an effort in \S 2.3.3 and \S 4.2 of \cite{Mensky 1993}, \S 3 of \cite{Mensky 1998} and \S 5.1.3 of \cite{Mensky 2000} to justify  that \eqref{1.6} with \eqref{1.7} expresses the probability amplitude in the continuous measurement. See also \cite{Calarco, Sverdlov} for other efforts for justifications. All their justifications are not rigorous and seem inadequate to be understood.	
	%%%%%%%%%
	%%%%%%%%%%%%%%
\par
 Our aim in the present paper is to prove by means of rigorous arguments that RFPIs \eqref{1.6} with \eqref{1.7} and $\gamma = 1/(2\delta^2)$ emerge out of both of the Feynman postulate \eqref{1.5} and the continuous limit of a sequence of von Neumann's instantaneous reductions.
 \par
 Our fundamental idea for the proof of the above emergence of RFPIs  is simple.  We write \eqref{1.5} formally as
\begin{equation*} 
\int_{\Gamma(t,x)}e^{i\hbar^{-1}S(t,0;q)} \Biggl\{ \prod_{\theta \in [0,t]} \Omega\bigl((q(\theta) - a(\theta))/\delta\bigr) \Biggr\}
f(q(0)) {\cal D}q,
\end{equation*}
where 
	\begin{equation}\label{1.9}
\Omega(x) = \begin{cases} 1, & |x| \leq 1, \\
                           0, & |x| > 1
	\end{cases}
\end{equation}
and $\prod_{\theta \in [0,t]}(\cdot)$ denotes the formal product of an uncountable number of functions.
Replacing $\Omega(x)$ with 
\begin{equation} \label{1.10}
G(x) = e^{-|x|^2/2}
\end{equation}
in the above as in \S 3-2 of \cite{Feynman-Hibbs} and \S 2.3.3 of \cite{Mensky 1993}, we approximate the above expression by
\begin{equation*} 
\int_{\Gamma(t,x)}e^{i\hbar^{-1}S(t,0;q)} \Biggl\{ \prod_{\theta \in [0,t]} G\bigl((q(\theta) - a(\theta))/\delta\bigr)\Biggr\} f(q(0)){\cal D}q.
\end{equation*}
As a result, we have
\begin{equation} \label{1.11}
\int_{\Gamma(t,x)}e^{i\hbar^{-1}S(t,0;q)}
\Biggl\{\exp -\frac{1}{2\delta^2} \sum_{\theta \in [0,t]}|(q(\theta) - a(\theta))|^2
 \Biggr\}f(q(0)) {\cal D}q
\end{equation}
formally as an approximation of \eqref{1.5}. In the present paper we will rigorously   prove the following two results in a  general form: (i) The expression \eqref{1.11} is well defined in $L^2$ and gives the RFPI \eqref{1.6}. (ii)  The RFPI \eqref{1.6} emerges out of the continuous limit of a sequence of von Neumann's instantaneous reductions: $L^2 \ni g \to G\bigl((\cdot - a(t))/\delta \bigr)^{\Delta' t}g \in L^2$ with a separating time $\Delta' t$. It should be noted that we don't use the reduction: $L^2 \ni g \to G\bigl((\cdot - a(t))/\delta \bigr)g \in L^2$ commonly used by many authors (cf. \cite{Caves,Exner-Ichinose,Friedman,Jacobs,Mensky 1993} and etc.). Our  results will be stated just below in more detail.
%%%%%%%%%%%%%%%
%%%%%%
\par
	In the present paper we will consider more general functions $W(t,x)$ than $|x-a(t)|^2/(2\delta)^2$ and write 
\begin{equation} \label{1.12}
   S_w(t,s;q) = S(t,s;q) + i\hbar\int_s^tW(\theta,q(\theta))d\theta.
\end{equation}
Let $\nu \geq 1$ be an arbitrary integer, $\Delta: 0 = \tau_0 < \tau_1 < \dots < \tau_{\nu-1} < \tau_{\nu}= t$ a subdivision of $[0,t]$ 
 and $|\Delta|:= \max\{\tau_{j+1}- \tau_j; j = 0,1,\dots,\nu-1\}$.  We take constants $\kappa_j \in [\tau_j,\tau_{j+1}]\ (j=0,1,\dots,\nu-1)$ arbitrarily.
 One of our aims in the present paper is to prove
\begin{align} \label{1.13}
&\lim_{|\Delta|\to 0}\int_{\Gamma(t,x)}e^{i\hbar^{-1}S(t,0;q)}
\Bigl\{\exp - \sum_{j=0}^{\nu-1}(\tau_{j+1}-\tau_{j})W(\kappa_{j},q(\kappa_{j}) )
 \Bigr\}f(q(0)) {\cal D}q \notag \\
 & = \int_{\Gamma(t,x)}e^{i\hbar^{-1}S(t,0;q)}
e^{-\int_{0}^{t}W(\theta,q(\theta))d\theta}f(q(0)) {\cal D}q \notag \\
& \equiv \int_{\Gamma(t,x)}e^{i\hbar^{-1}S_{w}(t,0;q)}f(q(0)) {\cal D}q
\end{align}
 in $L^{2}$ for $f \in L^2$, where both of the Feynman path integrals on the left-hand  
and the right-hand sides of \eqref{1.13} have been defined rigorously in \cite{Ichinose 2006} and \cite{Ichinose 2023} respectively.
The formula \eqref{1.13} shows that RFPIs emerge out of Feynman's postulate.
\par
%%%%%%%%%%%%%%%
%%%%%%
	Let $U(t,s)g$ be the solution to \eqref{1.4} with $u(s) = g$. Then from Theorem 2.2 of \cite{Ichinose 2006} we can see that \eqref{1.13} is expressed as
\begin{align} \label{1.14}
& \lim_{|\Delta|\to 0}U(t,\kappa_{\nu-1})e^{-(t-\tau_{\nu-1})W(\kappa_{\nu-1})}
U(\kappa_{\nu-1},\kappa_{\nu-2})e^{-(\tau_{\nu-1}-\tau_{\nu-2})W(\kappa_{\nu-2})} 
\notag \\
& \quad \cdots e^{-(\tau_{2}-\tau_{1})W(\kappa_{1})}U(\kappa_{1},\kappa_{0})e^{-\tau_{1}W(\kappa_{0})}U(\kappa_0,0)f 
 \notag \\
& = \int_{\Gamma(t,x)}e^{i\hbar^{-1}S(t,0;q)}
e^{-\int_{0}^{t}W(\theta,q(\theta))d\theta}f(q(0)) {\cal D}q.
\end{align}
We note that each of $e^{-(\tau_{j+1}-\tau_{j})W(\kappa_{j}) }g = e^{-(\tau_{j+1}-\tau_{j})W(\kappa_{j},x) }g(x)$ for $g \in L^{2}$ means von Neumann's projection  at time $\kappa_j$ with a weight, because $e^{-(\tau_{j+1}-\tau_{j})W(\kappa_{j}) }g$ is expressed as the ``sum'' of projections $g(y)\delta_d(x-y)$ of $g$   at $y \in \bR^{d}$ with weights  $e^{-(\tau_{j+1}-\tau_{j})W(\kappa_{j},y) }$, i.e. $\displaystyle{\int} e^{-(\tau_{j+1}-\tau_{j})W(\kappa_{j},y) }g(y)\delta_d(x-y)dy $, where $\delta_d(x)$ is the Dirac delta function. Consequently 
 the formula \eqref{1.14} shows that RFPIs emerge out of von Neumann's postulate.
 \par
%%%%%%%%%
%%%%%%%%%%%%%
	Let $\sigma > 0$ be a constant.  Then, besides \eqref{1.13} and \eqref{1.14}, we will prove
\begin{align} \label{1.15}
&\lim_{|\Delta|\to 0}\int_{\Gamma(t,x)}e^{i\hbar^{-1}S(t,0;q)}
\Bigl\{\exp - \sum_{j=0}^{\nu-1}(\tau_{j+1}-\tau_{j})^{1+\sigma}W(\kappa_{j},q(\kappa_{j}) )
 \Bigr\}f(q(0)) {\cal D}q \notag \\
 & = \int_{\Gamma(t,x)}e^{i\hbar^{-1}S(t,0;q)}
f(q(0)) {\cal D}q 
\end{align}
and 
\begin{align} \label{1.16}
& \lim_{|\Delta|\to 0}U(t,\kappa_{\nu-1})e^{-(t-\tau_{\nu-1})^{1+\sigma}W(\kappa_{\nu-1})}
U(\kappa_{\nu-1},\kappa_{\nu-2})e^{-(\tau_{\nu-1}-\tau_{\nu-2})^{1+\sigma}W(\kappa_{\nu-2})} \notag \\
&\cdot \cdots e^{-(\tau_{2}-\tau_{1})^{1+\sigma}W(\kappa_{1})}U(\kappa_{1},\kappa_{0})e^{-\tau_{1}^{1+\sigma}W(\kappa_{0})}U(\kappa_0,0)f
 \notag \\
& = \int_{\Gamma(t,x)}e^{i\hbar^{-1}S(t,0;q)}f(q(0)) {\cal D}q.
\end{align}
Both of \eqref{1.15} and \eqref{1.16} mean that the measurements given by $e^{-(\tau_{j+1}-\tau_{j})^{1+\sigma}W(\kappa_{j})}\\
 (\jdots)$ don't influence the movement of the particle at all.  \par
 Ultimately we will extend the formulas \eqref{1.13} - \eqref{1.16} to the system of a particle with spin, where the 
positions of all components of the spin are continuously measured during the time interval.
\par
	  Mensky in \S 2.3.3 of \cite{Mensky 1993} and Caves in \S III.C of \cite{Caves} (cf. (4.16) in \cite{Caves}) attempted to study
\begin{align} \label{1.17}
& \lim_{\nu \to \infty}(\sqrt{\pi}\delta)^{-d\nu/2}U(t,\tau_{\nu-1})e^{-W(\tau_{\nu-1})}
U(\tau_{\nu-1},\tau_{\nu-2})e^{-W(\tau_{\nu-2})}\cdots \notag \\
& \qquad \quad \cdot U(\tau_{2},\tau_{1}) e^{-W(\tau_{1})}U(\tau_{1},0)f
\end{align}
in place of the left-hand side of  \eqref{1.14}, where $W(t,x) = (2\delta^{2})^{-1}|x-a(t)|^{2}$ as in \eqref{1.7} and $\tau_{j+1}-\tau_{j} = \Delta't := t/\nu\ (\jdots)$.
On the other hand, Jacobs in \cite{Jacobs} also studied \eqref{1.17} and  derived the stochastic Schr\"odigner equations 
similar to \eqref{1.8}. There, he has replaced each $(\sqrt{\pi}\delta)^{-d/2}\bigl\{\exp -W(\tau_{j},x)\bigr\}\ (\jdots)$ in \eqref{1.17} with $(2\pi \delta^{2}/\Delta't)^{-d/2}\bigl\{\exp -(\Delta't)W(\tau_{j},x)\bigr\}$, which gives the distribution of the independent measurement error $\xi_j$ at time $\tau_j$, in order that the  variance $\delta^2/(\nu\Delta't)$ of the error's average $\nu^{-1}\sum_{j=0}^{\nu-1}\xi_{j}$
will remain fixed as $\nu \to \infty$ (cf. \S 3.1.1, (3.36) in \cite{Jacobs} and lines 3-12 from the bottom on p.33 in \cite{Mensky 1993}). Under this replacement, \eqref{1.17} is much like the left-hand side of \eqref{1.14}.
%%%%%%%%%
%%%%%%%%%%%%%
\par
	 As applications of the formulas  \eqref{1.13} and \eqref{1.14}, in the present papaer   we will give mathematical formulations to the multi-split experiments, 
	including the Stern-Gerlach experiment (cf.  chaps. 1, 3 and 5 of \cite{F-L-S III}, chap. 1 of \cite{Feynman-Hibbs} and \S 1.1 of \cite{Sakurai}), the quantum Zeno effect (cf. \S 2.2 of \cite{Mensky 1993} and \S 12.5 of \cite{Peres}) and the Aharonov-Bohm effect (cf. \cite{Aharonov-Bohm}, \S 15-5 of \cite{F-L-S II}, \S 4.2 of \cite{Peres} and \S 2.7 of \cite{Sakurai}). Then  we get the expressions of their probability amplitudes
	in both forms of RFPI and the continuous limt of von Neumann's projections. 
	In each of the above problems the particle can't move out of the open set $\mathcal{O}$ specified in $\bR^d$. Let $\chi_{\mathcal{O}}$ be the indication function of $\mathcal{O}$, i.e.
	$\chi_{\mathcal{O}}(x) = 1$ if $x \in \mathcal{O}$ and otherwise $\chi_{\mathcal{O}}(x) = 0$. 
	Many authors study 
\begin{equation} \label{1.18}
 \lim_{\nu \to \infty}\chi_{\mathcal{O}}(\cdot)U(t,\tau_{\nu-1})\chi_{\mathcal{O}}(\cdot)
U(\tau_{\nu-1},\tau_{\nu-2})\chi_{\mathcal{O}}(\cdot)\cdots \chi_{\mathcal{O}}(\cdot)U(\tau_{1},0)\chi_{\mathcal{O}}(\cdot)f
\end{equation}
with $\tau_{j+1}-\tau_{j} = t/\nu\ (\jdots)$ for the above problems in the same form as \eqref{1.17} (cf. \cite{Exner-Ichinose, Friedman, Mensky 1993}), using that $\chi_{\mathcal{O}}(\cdot)g$ is expressed as the ``sum'' of  projections $g(y)\delta_d(x-y)$ at $y \in \bR^{d}$ over 
$\mathcal{O}$, i.e. $\displaystyle{\int_{\mathcal{O}}}g(y)\delta_d(x-y)dy$. It should be noted that we do not have a logical basis for why we use \eqref{1.18} for the description of continuous measurements and the above problems, but
 we can write \eqref{1.18} as 
\begin{align} \label{1.19}
 \lim_{\nu \to \infty}& \chi_{\mathcal{O}}(\cdot)^{t-\tau_{\nu-1}}U(t,\tau_{\nu-1})\chi_{\mathcal{O}}(\cdot)^{\tau_{\nu-1}-\tau_{\nu-2}}
U(\tau_{\nu-1},\tau_{\nu-2})\chi_{\mathcal{O}}(\cdot)^{\tau_{\nu-2}-\tau_{\nu-3}}\cdots \notag \\
& \qquad \cdot \chi_{\mathcal{O}}(\cdot)^{\tau_{2}-\tau_{1}}U(\tau_{2},\tau_1)\chi_{\mathcal{O}}(\cdot)^{\tau_1}f
\end{align}
in the same form as \eqref{1.14}. This means that the commonly used expression \eqref{1.18} or \eqref{1.19} follows from Feynman's postulate like \eqref{1.15}, though it is not strictly proved.
\par
	We will prove our formulas \eqref{1.13}-\eqref{1.16} as follows. Let $U_w(t,s)f$ be the solution to \eqref{1.8} with $u(s) = f$, where $W(t,x;a)$ is replaced with a general $W(t,x)$.  In   \cite{Ichinose 2023} we  have already proved 
\begin{equation} \label{1.20}
U_w(t,0)f = \int_{\Gamma(t,x)}e^{i\hbar^{-1}S_{w}(t,0;q)}f(q(0)) {\cal D}q.
\end{equation}
First, we will show by using the energy method  for partial differential equations that the left-hand sides of \eqref{1.14} and \eqref{1.16} are equal to $U_w(t,0)f$ and $U(t,0)f$, respectively.  Then we get \eqref{1.14} and \eqref{1.16} from \eqref{1.20}.  As their corollaries we can prove \eqref{1.13} and \eqref{1.15}, applying the results in \cite{Ichinose 2006} to the left-hand sides of \eqref{1.13} and \eqref{1.15}, respectively.
\par
%%%%%
%%%%
	The plan of the present paper is as follows.  In \S 2 we will state the mathematical results on the Feynman path integrals
	a little more generally than in \cite{Ichinose 2006,Ichinose 2023}.
	Their complete proofs will be given in the appendix. In \S 3 we will state all main results as Theorems 3.1 - 3.5. In \S 4 the applications of our main results will be stated. In \S 5 we prove Theorem 3.1, which shows \eqref{1.14} in general.  In \S 6 Theorem 3.2 - 3.5 will be proved.
		%%%%%%%%%%%%%%%%%%%%%%%%%%%%%%%%%%%%%%
%%%%%%%%%%%%%%%%%%%%%%%%%%%%%%%%%%%%%%%
%%%%%%%%%%%%%%%%%%%%%%%%%%%%%%%%%%%%%%
 \section{Feynman path integrals}
	In this section we introduce some theorems on the Feynman path integrals being a little more general than in \cite{Ichinose 2006,Ichinose 2023}
	 to state the main results in \S  3.  Their complete proofs will be given in the appendix B.
	\par
	%%%%%%%%%%%%
	 For a multi-index $\alpha = (\alpha_1,\dots,\alpha_d)$ and $x \in \bR^d$ we write $|x|^2 = \sum_{j=1}^dx_j^2,\\ <x> = \sqrt{1+|x|^2}, \partial_{x_j} = \partial/\partial x_j$,
	 $|\alpha| = 
\sum_{j=1}^d
\alpha_j$, $x^{\alpha} =  x_1^{\alpha_1}
\cdots  x_d^{\alpha_d}$ and $\partial_x^{\alpha} =  \partial_{x_1}^{\alpha_1}\cdots  \partial_{x_d}^{\alpha_d}$.  We denote the space of all rapidly decreasing functions on $\bR^d$ by $\Sspace = \Sspace(\bR^d)$.
 In the present paper we often use symbols $C, C_{\alpha}, C_{\alpha\beta}$, $C_a$  and $\delta_{\alpha}$ to write down constants, though these values are different in general. We introduce the weighted Sobolev spaces 
\begin{align} \label{2.1}
& B^a(\mathbb{R}^d)  := \{f \in  L^2(\mathbb{R}^d);
 \|f\|_a := \|f\| + \sum_{|\alpha| =  a}\bigl(\|x^{\alpha}f\| +\|\partial_x^{\alpha}f\|\bigr) < \infty\} 
 \end{align}
 for $a = 1,2,\dots.$ We denote the dual space of $B^a$ by $B^{-a}$ (cf. Lemma 2.4 in \cite{Ichinose 1995}) and the $L^2$ space by $B^0$. We write $M_l(\bC)$ for the space of all complex square matrices of degree $l$ and $\Vert A \Vert_{\bC^l}$ 
 for the operator norm of $A \in M_l(\bC)$ from $\bC^l $ to $\bC^l$, where $l \geq 1$ is an integer.  In the present paper we will use elementary results on vector-valued functions on $[0,T]$ without notice, i.e. they take values  in a Banach space.  See  \S 1 in Chapter 1 of \cite{Kato} or 
 Chapter 10 of \cite{Kolmogorov-Fomin} for elementary results.
 \par
%%%%%%%%%%
	Hereafter we always consider a particle with $l$ spin components for the sake of simplicity. Then we will prove the formulas 
	corresponding to \eqref{1.13}-\eqref{1.16}.
	In the case of $l = 1$ we obtain the results 
	for a particle without spin.  Let $H_{\ds}(t,x) = (h_{ij}(t,x);i,j = 1,2,\dots,l)$ and $W_{\ds}(t,x) = (w_{ij}(t,x);i,j = 1,2,\dots,l)$ in  
	$\Mlc$ be Hermitian matrices denoting  a spin term and  a weight term respectively, where each of $h_{ij}(t,x)$ and $w_{ij}(t,x)$ denotes the $(i,j)$ component. In particular, $W_{\ds}$ is assumed to be bounded below in $\domain$ and corresponding to $\gamma |x - a(t)|^2$ in \eqref{1.7}. We define the effective Lagrangian function (cf. \S  4.2 of \cite{Mensky 1993}) by
\begin{equation} \label{2.2}
\mathscr{L}_{sw}(t,x,\dot{x}) = \mathscr{L}(t,x,\dot{x}) - \hbar H_{\ds}(t,x) + i\hbar W_{\ds}(t,x),
 \end{equation}
using \eqref{1.2}.  Then the corresponding  quantum equation becomes
\begin{equation}  \label{2.3}
   i\hbar\frac{\partial u}{\partial t}(t) = \bigl[H(t) + \hbar H_{\ds}(t,x)  - i\hbar W_{\ds}(t,x)\bigr]u(t) \equiv H_w(t)u(t)
\end{equation}
with a non-hermitian potential $i\hbar W_{\ds}(t,x)$, where $H(t)$ is defined by \eqref{1.4}.  For a path $q(\theta) \in \bR^d \ (s \leq \theta \leq t)$ we define $\mathcal{F}_w(\theta,s;q)= \left(\mathcal{F}_{ij}(\theta,s;q);i,j = 1,2,\dots,l\right) \\ (s \leq \theta \leq t)$ in $\Mlc$ by the solution $\mathcal{U}(\theta)$  to 
\begin{equation} \label{2.4}
\frac{d}{d\theta}\,\mathcal{U}(\theta) = -\bigl\{iH_{\ds}(\theta,q(\theta)) + W_{\ds}(\theta,q(\theta))\bigr\}\mathcal{U}(\theta)
 \end{equation}
with $\mathcal{U}(s) = I$ as in (2.13) of \cite{Ichinose 2023}, where $I$ is the identity matrix.  
	%%%%%%%%%%%%%%%%%%%%%%%%%%%%%%%%%%%%%%
	\par
	%%%%%%%%%%%%%%%%
	Now we will define the restricted Feynman path integrals or RFPIs as follows, which give a mathematical meaning to \eqref{1.6} or the right-hand side of \eqref{1.13} in the case of a particle with spin $0$.  We  take a $\chi \in \Cspace_0(\bR^d)$, i.e. an infinitely differentiable function on $\bR^d$ with compact support such that $\chi(0) = 1$, and  fix it through the present paper. For simplicity we suppose that $\chi$ is real-valued. 
	Let $x \in \bR^d$ be fixed.  Let $\nu \geq 1$ be an arbitrary integer and  $\Delta: 0 = \tau_0 < \tau_1 < \dots < \tau_{\nu-1} < \tau_{\nu}= t$   a subdivision of $[0,t]$. We take arbitrary points
$x^{(j)} \in \bR^d\ (j = 0,1,\dotsc,\nu-1)$ and  determine the piecewise free moving path or the piecewise straight line $\qdelta(\theta;x^{(0)},\dotsc,x^{(\nu-1)},x) \in \bR^d \ (0 \leq \theta \leq t)$ by joining  $x^{(j)}$ at $\tau_j\ (j = 0,1, \dotsc,\nu, x^{(\nu)} = x)$  in order. 
 Then we will determine the approximation  of the RFPI by 
 \begin{align} \label{2.5}
& K_{w\Delta}(t,0)f  =  \limepsilon  \prod_{j=0}^{\nu-1}\sqrt{\frac{m}{2\pi i(\tau_{j+1} - 
\tau_{j})}}^{\ d}
        \int_{\bR^d}\cdots\int_{\bR^d} e^{i\hbar^{-1} S(t,0;\qdelta)}\notag \\
       & \quad \times  \mathcal{F}_w(t,0;\qdelta)f(\qdelta(0)) \prod_{j=1}^{\nu-1}\chi(\epsilon x^{(j)})
          dx^{(0)}dx^{(1)}\cdots dx^{(\nu-1)}
 \end{align}
for $f \in \Cspace_0(\bR^d)^l = \bC^l\otimes \Cspace_0(\bR^d)$ as in (2.15) of \cite{Ichinose 2023}. The right-hand side of \eqref{2.5} is called an oscillatory integral and will be denoted by
 \begin{align*} 
   \prod_{j=0}^{\nu-1}\sqrt{\frac{m}{2\pi i(\tau_{j+1} - 
\tau_{j})}}^{\ d}
      & \text{Os} -  \int_{\bR^d}\cdots\int_{\bR^d} e^{i\hbar^{-1}S(t,0;\qdelta)} \Fw(t,0;\qdelta)\\
      & \times f(\qdelta(0)) 
          dx^{(0)}dx^{(1)}\cdots dx^{(\nu-1)}
 \end{align*}
 (cf. p. 45 of \cite{Kumano-go}).
The $B^a$-norm  of $f = {}^t(f_1,\dots,f_l) \in B^a(\bR^d)^l$ is defined by $\Vert f \Vert_a := \sqrt{\sum_{j=1}^l \Vert f_j\Vert^2_a}\ (a = 0,1,2,\dots)$.
%%%%%%%%%%%%%%%%%%%%%%%%%%%%%%%%%%%%%%
\par
%%%%%%%%%%%%%%%%%%%
%
   Throughout the present paper we assume that all functions $\partial_x^{\alpha}E_j(t,x), \\ \partial_x^{\alpha}B_{jk}(t,x), \partial_x^{\alpha}h_{ij}(t,x)$ and $\partial_x^{\alpha}w_{ij}(t,x)$ are continuous in $\domain$ for all $\alpha$.  Then, all $\partial_x^{\alpha}\partial_tB_{jk}(t,x)$ are also continuous in $\domain$ for all $\alpha$ because of Faraday's law $\partial_tB_{jk} =  - \partial E_k/\partial x_j + \partial E_j/\partial x_k$, which follows from \eqref{1.1}.
\par
%%%%%%%%%%
%%%%%%%%%%%%%%%%%%%%%%%%%%%%%%%%%%%%
{\bf Assumption 2.A.}  We assume 
\begin{equation} \label{2.6}
|\partial_x^{\alpha}E_j(t,x)| \leq C_{\alpha},\  |\alpha| \geq 1,\quad j = 1,2,\dots,d,
\end{equation}
\begin{equation} \label{2.7}
|\partial_x^{\alpha}B_{jk}(t,x)| \leq C_{\alpha}<x>^{-(1+ \delta_{\alpha})},\  |\alpha| \geq 1, \quad 1 \leq j < k \leq d
\end{equation}
in $\domain$ with constants $C_{\alpha} \geq 0$ and $\delta_{\alpha} > 0$.
\par
%%%%%%%%%%%%%	
%%%%%%%%%%%%%
{\bf Assumption 2.B.}  We assume \eqref{2.6} and 
\begin{equation} \label{2.8}
|\partial_x^{\alpha}\partial_tB_{jk}(t,x)| \leq C_{\alpha}<x>^{-(1+ \delta_{\alpha})},\  |\alpha| \geq 1, \quad 1 \leq j < k \leq d
\end{equation}
in $\domain$ with constants $C_{\alpha} \geq 0$ and $\delta_{\alpha} > 0$.
\par
%%%%%%%%%
%%%%%%%%%
{\bf Assumption 2.C.} We assume that $\partial_x^{\alpha}A_j(t,x)\ (j = 1,2,\dots,d)$ and $\partial_x^{\alpha}V(t,x)$ are continuous in $\domain$ for all $\alpha$ and satisfy
\begin{equation} \label{2.9}
|\partial_x^{\alpha}A_j(t,x)| \leq C_{\alpha},\ |\alpha| \geq 1,\quad j = 1,2,\dots,d,
\end{equation}
\begin{equation} \label{2.10}
|\partial_x^{\alpha}V(t,x)| \leq C_{\alpha}<x>,\quad |\alpha| \geq 1
\end{equation}
in $\domain$ with constants $C_{\alpha} \geq 0$.
\par
%%%%%%%%%%%%%
%%%%%%%%%%%%%
{\bf Assumption 2.D.}  There exist a continuous function $w(t,x) \geq 0$ in $\domain$ and a constant $C_W \geq 0$ such that
\begin{equation} \label{2.11}
W_{\ds}(t,x) \geq (w(t,x) - C_W)I,
\end{equation}
\begin{equation} \label{2.12}
\Vert\partial_x^{\alpha}W_{\ds}(t,x)\Vert_{\bC^l} \leq C_{\alpha}\bigl\{1  + w(t,x)\bigr\},\quad |\alpha| \geq 1,
\end{equation}
\begin{equation} \label{2.13}
\Vert\partial_x^{\alpha}W_{\ds}(t,x)\Vert_{\bC^l} \leq C_{\alpha}<x>,\quad |\alpha| \geq 1
\end{equation}
in $\domain$ with constants $C_{\alpha} \geq 0$.
\par
%%%%%%%%%%%%%%	
\begin{rem} \label{Remark 2.1}
In Theorems 2.1-2.4 of \cite{Ichinose 2023} we assumed
\begin{equation*} 
\Vert\partial_x^{\alpha}W_{\ds}(t,x)\Vert_{\bC^l} ^{p_{\alpha}}\leq C_{\alpha}\bigl\{1  + w(t,x)\bigr\},\quad |\alpha| \geq 1
\end{equation*}
in $\domain$ with constants $C_{\alpha} \geq 0$ and $p_{\alpha} \geq 1$  instead of \eqref{2.12}.  These assumptions are clearly  equivalent to \eqref{2.12} 
because $|f(x)| \leq |f(x)| ^{p_{\alpha}}$ if $|f(x)| \geq 1$ and $p_{\alpha} \geq  1$.
\end{rem}
\par
%%%%%%
	We have proved the following in Theorem of \cite{Ichinose 1995} and (i) of Theorem 2.1 of \cite{Ichinose 2020}.
\begin{thm}  Suppose Assumption 2.C, \eqref{2.11} with $w(t,x) = 0$, \eqref{2.13} and 
 \begin{equation} \label{2.14}
 \Vert \partial_{x}^{\alpha}H_{\ds}(t,x)\Vert_{\bC^{l}} \leq  C_{\alpha}<x>, \quad |\alpha| \geq 1
 \end{equation}
 %.
 in $\domain$. Let $s \in [0,T)$.  Then for any $f \in (B^{a})^{l}\ (a = 0, \pm1,\pm2,\dots)$
 there exists the unique solution $u(t) = U_{w}(t,s)f$ in $C^0_t([s,T];(B^{a})^{l}) \cap C^1_t([s,T];(B^{a-2})^{l})$  with $u(s) = f$ to the equation \eqref{2.3},
 where  $C^j_t([s,T];(B^{a})^{l})$ denotes the space of all $(B^{a})^{l}$-valued, 
 $j$-times continuously differentiable functions in $t \in [s,T]$. This solution $U_{w}(t,s)f$ satisfies
 \begin{equation} \label{2.15}
 \Vert U_{w}(t,s)f\Vert_a\leq  C_a\Vert f\Vert_a,  \ 0 \leq s \leq t \leq T
 \end{equation}
 with a constant $C_{a} \geq 0$.  In particular, when $W_{\ds}(t,x) = 0$ in $\domain$, we have
 \begin{equation} \label{2.16}
 \Vert U(t,s)f\Vert = \Vert f\Vert,  \ 0 \leq s \leq t \leq T,
 \end{equation}
 where we wrote $U(t,s)f$ for $U_{w}(t,s)f$ with $W_{\ds}(t,x) = 0$ as in the introduction.
 %%%%%%%%
 \end{thm}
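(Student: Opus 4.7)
The plan is to run the standard energy method for the evolution equation \eqref{2.3}, treating the non-Hermitian term $-i\hbar W_{\ds}$ as a controllable perturbation: under \eqref{2.11} with $w\equiv 0$ we have $W_{\ds}(t,x)\geq -C_{W}I$, which provides the one-sided bound needed for a Gronwall argument. For a smooth solution $u$ of \eqref{2.3}, the Hermiticity of $H(t)$ and of $\hbar H_{\ds}(t,x)$ combined with this lower bound gives
\begin{equation*}
\frac{d}{dt}\|u(t)\|^{2}=-2\bigl(W_{\ds}(t,\cdot)u(t),u(t)\bigr)\leq 2C_{W}\|u(t)\|^{2}.
\end{equation*}
Gronwall's inequality then yields \eqref{2.15} for $a=0$, and, by linearity applied to the difference of two solutions, uniqueness in $(B^{0})^{l}$. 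When $W_{\ds}\equiv 0$ the inequality becomes an equality, giving \eqref{2.16}.

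For $a\geq 1$ I would commute each weight operator $P_{\alpha}$ — either $x^{\alpha}$ or $\partial_{x}^{\alpha}$ with $|\alpha|=a$ — through \eqref{2.3} to obtain
\begin{equation*}
i\hbar\,\partial_{t}(P_{\alpha}u)=H_{w}(t)(P_{\alpha}u)+[P_{\alpha},H_{w}(t)]u.
\end{equation*}
The key structural input is that by Assumption 2.C together with \eqref{2.14} and \eqref{2.13}, every derivative of $A_{j},V,H_{\ds},W_{\ds}$ of order at least one is controlled by $C_{\beta}<x>$; hence $[P_{\alpha},H_{w}(t)]$ is a sum of operators of order strictly less than $a$ whose coefficients grow at most linearly in $x$, and it is bounded into $(L^{2})^{l}$ with norm controlled by the lower-order weighted norms of $u$. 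An induction on $a$, each step closed by the $L^{2}$-energy identity for $P_{\alpha}u$ and Gronwall, then delivers \eqref{2.15}. The continuity and differentiability in $t$ asserted in the statement follow from the equation itself together with approximation of the initial data by smoother functions.

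Existence would be obtained by a regularization scheme: replace the coefficients by versions cut off by $\chi(\epsilon x)$ so that $H_{w}(t)$ becomes bounded on $L^{2}$ and generates a classical evolution, then pass to the limit $\epsilon\to 0+0$ using the uniform $B^{a}$-estimates derived above; equivalently, the time-slicing parametrix \eqref{2.5} can serve the same role. The extension to negative $a$ is by duality: the formal adjoint of \eqref{2.3} is of the same type with $W_{\ds}$ replaced by $-W_{\ds}$ and time reversed, and running the identical argument backwards produces a propagator bounded on $(B^{|a|})^{l}$ whose transpose furnishes the required bounded extension to $(B^{-|a|})^{l}$. The main technical difficulty I expect is the bookkeeping of the commutators between the magnetic Schr\"odinger operator $H(t)$ and the weights $x^{\alpha},\partial_{x}^{\alpha}$ when $V$ is allowed to grow quadratically; but the growth conditions in Assumptions 2.C and 2.D are precisely calibrated so that every commutator drops one step in the $B^{a}$-scale of weighted Sobolev spaces.
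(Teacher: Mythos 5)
The paper does not prove Theorem 2.1 internally: it cites Theorem of \cite{Ichinose 1995} and (i) of Theorem 2.1 of \cite{Ichinose 2020}, and then re-derives a sharper variant of \eqref{2.15} (with explicit Gronwall constant $e^{K_a(t-s)}$) as Proposition 5.2 via exactly the energy/commutator method you outline. So for the $L^2$ estimate, the higher-order $B^a$ induction, and the bookkeeping of commutators $[\,\partial_x^\alpha, H_w\,]$ and $[\,x^\alpha, H_w\,]$, your plan coincides with the machinery the paper actually deploys. The one technical ingredient the paper states that you omit is how the formal commutator bounds become $L^2$-operator bounds: the paper writes the commutator as $[\,\partial_x^\alpha, H_w\,]\Lambda_a^{-1}(\Lambda_a u)$ and invokes the Calder\'on--Vaillancourt theorem to get \eqref{5.10} and \eqref{5.14}; your phrase ``bounded into $(L^2)^l$ with norm controlled by the lower-order weighted norms'' is the right intuition but needs that specific tool.

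Two smaller points where your sketch is loose. First, the duality extension to negative $a$: the formal adjoint of $H_w = H + \hbar H_{\ds} - i\hbar W_{\ds}$ is $H + \hbar H_{\ds} + i\hbar W_{\ds}$, so $W_{\ds}$ is not ``replaced by $-W_{\ds}$'' in the adjoint operator; the sign flip appears only because one solves the adjoint equation \emph{backward} in $t$, and it is precisely this backward direction that makes the $+iW_{\ds}$ term dissipative again under \eqref{2.11}. Your conclusion is right, but the stated reason is imprecise. Second, the existence argument as written does not work: multiplying coefficients by $\chi(\epsilon x)$ does not make $H_w(t)$ bounded on $L^2$, since the Laplacian term is unaffected. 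One needs a genuine construction — the time-slicing parametrix you mention in passing, or Friedrichs mollification / elliptic regularization of the full operator, combined with the a priori $B^a$ bounds — to produce solutions. With those repairs the argument is complete and is the standard one underlying the cited references.
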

\par
%%%%%%%%%%
	Theorems 2.2 - 2.5 mentioned below are a little more  general than the theorems in \cite{Ichinose 2006, Ichinose 2023}.  We will give
	their complete  proofs in the appendix B.
	%%%%%%%%%%%%%%%%%%%%%%%%%%%%%%%%%%%%%%
%
\begin{thm} \label{thm 2.2}
Suppose that Assumptions 2.A, 2.D and  
 \begin{equation} \label{2.17}
 \Vert \partial_{x}^{\alpha}H_{\ds}(t,x)\Vert_{\bC^{l}} \leq  C_{\alpha}, \quad |\alpha| \geq 1
 \end{equation}
 in $\domain$ are satisfied.  Then, there exists a constant $\rho^* > 0$ such that  the following statements hold for arbitrary potentials $(V,A)$ with continuous 
$V, \partial V/\partial x_j, \partial A_j/\partial t, \partial A_j/\partial x_k\ (j,k = 1,2,\dots,d)$ %
in $\domain$, all $\Delta$ satisfying $|\Delta| \leq \rho^*$ and all $t \in [0,T]$:
\\
(1) $K_{w\Delta}(t,0)f$ defined on $f \in \Czerospace^l$ by \eqref{2.5} is
determined independently of the choice of $\chi$ and $K_{w\Delta}(t,0)$ can be uniquely extended to a bounded operator on $(L^{2})^{l}$.
\\
(2) For all $f \in (L^{2})^{l}$, as $|\Delta| \to 0$, $\kdelta(t,0)f$ converges in $(L^{2})^{l}$ uniformly in $t \in [0,T]$ to 
an element $K_{w}(t,0)f \in (L^{2})^{l}$, which we call the  RFPI of $f$ and  express as 
\begin{equation*}
\int_{\Gamma(t,x)} e^{i\hbar^{-1}S(t,0;q)}\mathcal{F}_{w}(t,0;q)f(q(0)) {\cal D}q.
\end{equation*}
\\
(3) For all $f \in (L^{2})^{l}$, $K_{w}(t,0)f$  belongs to   $C_{t}^{0}([0,T];(L^{2})^{l})$.  In addition, 
$K_{w}(t,0)f$ is the unique solution  in $C^{0}_{t}([0,T];(L^{2})^{l})$ to \eqref{2.3} with $u(0) = f$. Consequently we have
\begin{equation*}
U_w(t,s)f = \int_{\Gamma(t,x)} e^{i\hbar^{-1}S(t,0;q)}\mathcal{F}_{w}(t,0;q)f(q(0)) {\cal D}q.
\end{equation*}
\\
(4) Let $\psi(t,x) \in C^1(\domain)$ be a real-valued function such that $\partial_{x_j}\partial_{x_k}\psi(t,x)$ and $\partial_{t}\partial_{x_j}\psi(t,x)$ $(j,k = 1,2,\dotsc,d)$ are continuous in $[0,T] \times \bR^d$ and consider the gauge transformation 
\begin{equation}  \label{2.18}
    V' = V -\frac{\partial\psi}{\partial  t}, \quad  A'_j = A_j + \frac{\partial\psi}{\partial  x_j}\quad (j= 1,2,\dots,d).
\end{equation}
We write \eqref{2.5} for this $(V',A')$  as $K'_{w\Delta}(t,0)f$.  Then we have the formula 
\begin{equation*}  
     K'_{w\Delta}(t,0)f  = e^{i\hbar^{-1}\psi (t,\cdot)}K_{w\Delta}(t,0)\left(e^{-i\hbar^{-1}\psi (0,\cdot)}f\right)
\end{equation*}
for  all $f \in (L^2)^{l}$  and  the analogous relation between the limits $K'_{w}(t,0)f$ and $K_{w}(t,0)f$.
\end{thm}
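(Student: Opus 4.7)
The plan is to analyse the short-time factors in \eqref{2.5} one by one via the Kumano-go oscillatory-integral calculus, compose them into a uniform-in-$\Delta$ bound on $\kdelta(t,0)$, and then identify the $|\Delta|\to 0$ limit with the solution $U_{w}(t,0)f$ given by Theorem 2.1. On a single subinterval $[\tau_j,\tau_{j+1}]$ the integrand in \eqref{2.5} has phase with quadratic principal part $\tfrac{m}{2(\tau_{j+1}-\tau_j)}|x^{(j+1)}-x^{(j)}|^{2}$ and a remainder whose $x$-derivatives are controlled by Assumption 2.A via \eqref{2.6}-\eqref{2.7}; the amplitude $\Fw(\tau_{j+1},\tau_j;\qdelta)$ solves \eqref{2.4} and satisfies $\|\Fw\|_{\bC^l}\leq e^{C_W(\tau_{j+1}-\tau_j)}$ by \eqref{2.11}, with polynomial $x$-derivative bounds from \eqref{2.12}-\eqref{2.13} and \eqref{2.17} placing it in the symbol class covered by the calculus of \cite{Ichinose 2006,Ichinose 2023}.

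For (1), the $\chi$-independence follows from the standard cutoff-removal argument for oscillatory integrals: integration by parts against the non-stationary quadratic phase in $|x^{(j)}|\to\infty$ cancels the $\chi$-dependence in the oscillatory-integral sense. The calculus gives each single-step operator $\Ll$-boundedness with norm $\leq 1+C(\tau_{j+1}-\tau_{j})$, so composing the $\nu$ factors yields $\|\kdelta(t,0)\|_{\Ll\to\Ll}\leq\prod_{j}(1+C(\tau_{j+1}-\tau_j))\leq e^{CT}$ uniformly for $|\Delta|\leq\rho^{*}$, and density of $\Czerospace^{l}$ in $\Ll$ extends $\kdelta(t,0)$ uniquely to $\Ll$.

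For (2) and (3), I would restrict to $f\in\Sspace^{l}$ and show by stationary phase on each single-step integral, compared against the short-time expansion of $U_{w}(\tau_{j+1},\tau_{j})f$ produced by \eqref{2.3} and \eqref{2.15}, that both equal $f-i\hbar^{-1}(\tau_{j+1}-\tau_{j})H_{w}(\kappa_{j})f$ to first order in the step size, so their difference is $O((\tau_{j+1}-\tau_{j})^{2})$ in $\Ll$ with constants controlled by a finite $B^{a}$-norm of $f$. Telescoping across the $\nu$ subintervals and using the uniform bound of step (1) gives $\|\kdelta(t,0)f-U_{w}(t,0)f\|_{\Ll}\to 0$ uniformly in $t\in[0,T]$, and density of $\Sspace^{l}$ together with the uniform $\Ll$-bound promotes the convergence to all $f\in\Ll$, identifying $K_{w}(t,0)f=U_{w}(t,0)f$. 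Uniqueness in $C_{t}^{0}([0,T];\Ll)$ then follows from the energy inequality for \eqref{2.3}, in which the accretivity supplied by \eqref{2.11} absorbs the non-hermitian contribution $-i\hbar W_{\ds}$.

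For (4), the pathwise identity $S'(\tau_{j+1},\tau_j;\qdelta)-S(\tau_{j+1},\tau_j;\qdelta)=\dq(\psi(\tau_{j+1},x^{(j+1)})-\psi(\tau_j,x^{(j)}))$, coming from $\mathscr{L}'-\mathscr{L}=\dq\,d\psi(\theta,q(\theta))/d\theta$, telescopes across the subdivision and leaves only the endpoint factors $e^{i\hbar^{-1}\dq\psi(t,x)}$ and $e^{-i\hbar^{-1}\dq\psi(0,x^{(0)})}$, which pull out of the oscillatory integral as pointwise multiplications on output and input respectively; passing $|\Delta|\to 0$ via (2) transfers the identity to $K_{w}(t,0)$. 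The hardest step will be the uniform Kumano-go-type control of the composed oscillatory integral when $W_{\ds}$ is only bounded below and may grow, so that $\Fw$ decays in $x$ rather than being uniformly bounded. The remedy, as in \cite{Ichinose 2023}, is to exploit \eqref{2.12}-\eqref{2.13}: each $x$-derivative falling on $\Fw$ gains a factor of $1+w(t,x)$ or $\langle x\rangle$, which is absorbed by the decay of $\Fw$ itself coming from \eqref{2.11}. This cancellation places the effective amplitude into a symbol class for which the composition estimates of \cite{Ichinose 2006} supply the uniform-in-$\Delta$ bound that drives the convergence argument.
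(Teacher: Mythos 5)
Your high-level architecture is right (short-time factorization, uniform single-step $L^{2}$ bounds composed across $\nu$ subintervals, telescoping against $U_{w}$, density argument, pathwise gauge identity for (4)), but you gloss over the two steps that the paper actually has to work for, and both are genuine gaps in the proposal.

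First, the crucial single-step bound $\Vert\Csw(t,s)\Vert_{(L^{2})^{l}\to(L^{2})^{l}}\leq 1+C(t-s)$ does not fall out of ``the Kumano-go calculus'' or the symbol classes of \cite{Ichinose 2006,Ichinose 2023} as you assert. The Calder\'on--Vaillancourt theorem gives an $L^{2}$ bound of the form $C\cdot\sup$ of finitely many derivatives of the amplitude, and that multiplicative $C$ can be strictly larger than $1$; with it the composed bound $\prod_{j}(1+C\rho_{j})$ is useless as $\nu\to\infty$. Moreover Remark 2.3 makes explicit that the present Theorem 2.2 is \emph{more general} than the earlier results you cite: in \cite{Ichinose 2007,Ichinose 2023} the matrix weight and spin terms were assumed uniformly bounded, i.e. \eqref{2.23}--\eqref{2.24}, not merely \eqref{2.13} and \eqref{2.17}. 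The paper's proof of Proposition B.3 instead writes $\Csw^{\dagger}\chi(\epsilon\cdot)^{2}\Csw$ as a pseudo-differential operator, passes to the limit $\epsilon\to 0$, and then applies Theorem A.1 --- the matrix-valued extension of Zworski's semiclassical $L^{2}$ bound, proved via the FBI transform on the Bargmann-type space $H_{\Phi}$, which gives $\Vert p(x,\h D_{x},x')\Vert_{(L^{2})^{l}\to(L^{2})^{l}}\leq\sup_{x,\xi}\Vert p(x,\xi,x)\Vert_{\bC^{l}}+O(\h)$ with $\h=(t-s)/m$. That refined \emph{operator-norm} sup bound with leading constant exactly $\sup\Vert\Fw\Vert_{\bC^{l}}\leq 1$ is what makes the constant $1+O(\rho)$ possible for matrix amplitudes, and nothing in your proposal produces it.

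Second, your plan for (2)--(3) uses stationary phase with errors controlled by $B^{a}$-norms of $f$, but the oscillatory-integral estimates behind this (Propositions B.1, B.5, B.7 in the appendix) require polynomial growth bounds on $(V,A)$ of the type in Assumption 2.C (or \eqref{B.4}), whereas Theorem 2.2 assumes only that $V$, $\partial V/\partial x_{j}$, $\partial A_{j}/\partial t$, $\partial A_{j}/\partial x_{k}$ are continuous, with Assumption 2.A placed on the physical fields $E,B$ rather than on the potentials. The paper resolves this by first proving Theorem 2.3 under Assumption 2.C (where the $B^{a}$ machinery applies) and then deriving Theorem 2.2 by choosing a gauge $(V',A')$ satisfying Assumption 2.C from Assumption 2.A and transferring the result through the gauge transformation \eqref{2.18} --- that is, (4) is not just a separate statement but the mechanism by which (1)--(3) reach their stated generality. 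Your proposal treats (4) as an afterthought and tries to run the $B^{a}$-estimate argument directly on arbitrary continuous $(V,A)$, which does not close.

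The remaining parts of the proposal --- the $\chi$-independence, the exponential decay of $\Fw$ compensating the polynomial growth of its derivatives (Lemma B.2), the energy-method uniqueness via accretivity of $W_{\ds}$, and the telescoping of the pathwise gauge identity --- align with the paper's strategy and are fine.
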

%%%%%%%%%%%%
%%%%%%%%%%%%%%%%%%%%%%%
\begin{thm} \label{thm 2.3}
Suppose that either Assumption 2.A or 2.B is satisfied.  In addition, we suppose Assumptions 2.C, 2.D and \eqref{2.17}. Then there exists another constant $\rho^* > 0$  such that the same statements (1) - (4) as in Theorem 2.2 hold for all $\Delta$ satisfying $|\Delta| \leq \rho^*$ and all $t \in [0,T]$.  In addition, for all $f \in B^a(\bR^d)^l\ (a = 1,2,\dots)$ $K_{w\Delta}(t,0)f$  belongs to   $(B^a)^l$ and  as $|\Delta| \to 0$, $\kdelta(t,0)f$ converges in $(B^a)^l$ uniformly in $t \in [0,T]$ to $K_w(t,0)f$, which belongs to  $C^0_t([0,T];(B^a)^l)$.  
\end{thm}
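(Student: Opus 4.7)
The plan is to combine the Kumano-go--type oscillatory-integral machinery that underlies Theorem 2.2 with commutator estimates on the weight and derivative operators that define $B^a$. The simplifying feature of Theorem 2.3 is that Assumption 2.C imposes polynomial symbol bounds directly on the potentials $(V,A)$, so one does not need to fabricate a specific gauge; under Assumption 2.A or 2.B the phase $S(\tau_{j+1},\tau_j;\qdelta)$ on piecewise straight lines inherits the standard Kumano-go symbol estimates. I would therefore repeat almost verbatim the $(L^2)^l$-convergence argument for Theorem 2.2 with these sharper hypotheses on $(V,A)$ read directly off Assumption 2.C instead of being recovered after a gauge transformation; this yields statements (1)--(3), and statement (4) is a formal gauge identity between oscillatory integrals that transfers without change.

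For the uniform $B^a$-bound, write $K_{w\Delta}(t,0)=K_{w\Delta}(t,\tau_{\nu-1})\cdots K_{w\Delta}(\tau_1,0)$, so that the task reduces to a short-time estimate of the form $\|K_{w\Delta}(\tau_{j+1},\tau_j)g\|_a \leq (1+C_a(\tau_{j+1}-\tau_j))\|g\|_a$. Applying $\partial_x^{\alpha}$ to the oscillatory integral \eqref{2.5} brings down, through the chain rule on $e^{i\hbar^{-1}S}$, time-integrals of $\partial_x^{\beta}A$ and $\partial_x^{\beta}V$ along $\qdelta$, which by Assumption 2.C are bounded by constants and by $\langle x\rangle$ respectively; applying $x^{\alpha}$ is handled by integration by parts against the Gaussian kernel, which shifts $x^{\alpha}$ onto $x^{(j)}$-derivatives at the price of $\langle x\rangle$-weights. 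Assumption 2.D and \eqref{2.17} control derivatives of $\Fw$ in the same way. An induction on $a$ then produces the desired short-time estimate, and composition over $\nu$ factors yields $\|K_{w\Delta}(t,0)f\|_a \leq e^{C_a T}\|f\|_a$ uniformly in $|\Delta|\leq \rho^*$ and $t\in[0,T]$.

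For convergence in $(B^a)^l$, I would use the interpolation inequality $\|g\|_a \leq C\|g\|_{a+1}^{a/(a+1)}\|g\|_0^{1/(a+1)}$ on the weighted Sobolev scale: combined with the $(L^2)^l$-convergence from the first step and the uniform $B^{a+1}$-bound from the second, this yields convergence of $\kdelta(t,0)f$ to $K_w(t,0)f$ in $(B^a)^l$ for $f\in(B^{a+1})^l$, and a density argument using the $B^a$-bound extends it to $f\in(B^a)^l$. The continuity $K_w(t,0)f\in C_t^0([0,T];(B^a)^l)$ then follows from Theorem 2.1 applied to the $(B^a)^l$-valued solution of \eqref{2.3}.

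The principal obstacle is the short-time $B^a$-estimate: one must show that each $x$- and $\partial_x$-derivative applied to the oscillatory integral produces only an $O(\tau_{j+1}-\tau_j)$ symbol correction, uniformly in the regularizing cutoff $\chi(\epsilon x^{(j)})$, so that composition of $\nu$ factors does not blow up as $|\Delta|\to 0$. Tracking the growth of the resulting symbols through the action phase along broken geodesics and through the spin factor $\Fw$, while respecting only the $\langle x\rangle$-weights permitted by Assumptions 2.C and 2.D, is the real bookkeeping burden and the place where the sharper hypotheses of Theorem 2.3, compared with those of Theorem 2.2, are genuinely used.
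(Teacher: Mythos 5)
Your proposal is essentially correct, and it follows the paper's route for the stability parts but diverges in how it upgrades convergence from $L^2$ to $B^a$. The paper (Appendix B) also factors $K_{w\Delta}(t,0)$ through short-time propagators $\mathcal{C}_{\mathfrak{s}w}(\tau_{j+1},\tau_j)$, proves short-time $B^a$-stability $\|\mathcal{C}_{\mathfrak{s}w}f\|_a \le e^{K_a(t-s)}\|f\|_a$ via the commutator identities of Lemma B.4 together with the matrix Zworski/FBI estimate for the $L^2$ base case (Theorem A.1, Proposition B.3), exactly in the spirit of your second paragraph. The difference is the final step: the paper proves a direct $B^a$-consistency estimate $\|\mathcal{C}_{\mathfrak{s}w}(t,s)f - U_w(t,s)f\|_a \le C_a(t-s)^{3/2}\|f\|_{a+M''}$ (Lemma B.6 and Proposition B.7) and then telescopes in each $(B^a)^l$ separately, whereas you establish convergence only in $(L^2)^l$ and bootstrap via the interpolation inequality $\|g\|_a \lesssim \|g\|_{a+1}^{a/(a+1)}\|g\|_0^{1/(a+1)}$ plus density. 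Both work; your route avoids having to track the fixed loss of regularity $M''$ through the consistency estimate at each level $a$, at the cost of invoking an interpolation inequality for the $B^a$ scale that you do not verify (it is true since $B^a$ agrees with the domain scale of the harmonic oscillator, but that equivalence and the interpolation it gives are not quite immediate from the paper's equivalent norm $\|\Lambda_a f\|$, because $\Lambda_a \ne \Lambda_1^a$). One small caution: the continuity $K_w(t,0)f\in C^0_t([0,T];(B^a)^l)$ cannot be read off ``from Theorem 2.1 applied to the solution'' until you have identified $K_w(t,0)f$ with $U_w(t,0)f$; that identification comes from the $L^2$-convergence step and uniqueness, so you should make this order explicit rather than presenting the continuity as independent.
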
 
%%%%%%%%%%%%%%%
	For an arbitrary interger $N \geq 1$ we take $0 \leq t_1 < t_2 <\dots <t_{N-1} < t_N \leq t$ and matrix-valued functions $Z_j(x) \in \Mlc\ (j= 1,2,\dots,N)$ such that 
 \begin{equation} \label{2.19}
 \Vert \partial_{x}^{\alpha}Z_j(x)\Vert_{\bC^{l}} \leq  C_{\alpha}<x>^{\dM_j}
 \end{equation}
 for all $\alpha$ with integers $\dM_j \geq 0$.  We set $\dM = \sum_{j=1}^N \dM_j$. Let $\qdelta(\theta;x^{(0)},\dotsc,
 x^{(\nu-1)},\\x) \in \bR^d \ (0 \leq \theta \leq t)$ be the piecewise free moving path defined in the early part of this section.
  Noting that $ Z_j(q(t_j))
  \in \Mlc \ (j = 1,2,\dots,N)$ are determined
 for each path $q: [0,t] \to \bR^d$,  
  we define 
 \begin{align} \label{2.20}
& \left<\prod_{j=1}^NZ_j(\qdelta(t_j))\right>_wf  :=  \limepsilon  \prod_{j=0}^{\nu-1}\sqrt{\frac{m}{2\pi i(\tau_{j+1} - 
\tau_{j})}}^{\ d}
        \int_{\bR^d}\cdots\int_{\bR^d} e^{i\hbar^{-1}S(t,0;\qdelta)}\notag \\
       & \quad \times  \mathcal{F}_w(t,t_N;\qdelta)Z_N(\qdelta(t_N))\mathcal{F}_w(t_N,t_{N-1};\qdelta)\cdots Z_1(\qdelta(t_{1}))\notag \\
       &
     \qquad \times \mathcal{F}_w(t_1,0;\qdelta)f(\qdelta(0)) \prod_{j=1}^{\nu-1}\chi(\epsilon x^{(j)})
          dx^{(0)}dx^{(1)}\cdots dx^{(\nu-1)}
 \end{align}
for $f \in \Cspace_0(\bR^d)^l$ in the same way as we did (2.5).
%%%%%%%
%%%%%%%%%%
%%%%%%%%%%%%%%%%%%%%%%%%
%
\begin{thm} \label{thm 2.4}
Let $\dM_j = 0\ (j= 1,2,\dots,N)$ in \eqref{2.19}  and consider \eqref{2.20}.
We suppose the same assumptions as in Theorem 2.2 and let $\rho^*>0$ be the constant dertermined in Theorem 2.2.
Then we have the same statements (1), (2) and (4) for $\left<\prod_{j=1}^NZ_j(\qdelta(t_j))\right>_wf $ as in Theorem 2.2 in place of 
$\kdelta(t,0)f$.  In addition, we have the following statement:
\\
(3) For all $f \in (L^{2})^{l}$, the limit $\left<\prod_{j=1}^NZ_j(q(t_j))\right>_wf $ of $\left<\prod_{j=1}^NZ_j(\qdelta(t_j))\right>_wf$
as $|\Delta|\to 0$ belongs to   $C_{t}^{0}([t_N,T];(L^{2})^{l})$  and equals  
 \begin{equation} \label{2.21}
 U_w(t,t_N)Z_N(\cdot)U_w(t_N,t_{N-1})\cdots Z_1(\cdot)U_w(t_1,0)f.
  \end{equation}
\end{thm}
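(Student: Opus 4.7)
The strategy is to reduce Theorem \ref{thm 2.4} to repeated application of Theorem \ref{thm 2.2} by factoring the oscillatory integral in \eqref{2.20} across the insertion times $t_{1},\dotsc,t_{N}$.

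I would first restrict attention to subdivisions $\Delta$ that refine $\{0,t_{1},\dotsc,t_{N},t\}$, so that each $t_{k}$ coincides with some partition point $\tau_{i_{k}}$. Let $\Delta_{k}$ denote the restriction of $\Delta$ to $[t_{k},t_{k+1}]$, where $t_{0}:=0$ and $t_{N+1}:=t$. Additivity of the classical action yields $S(t,0;\qdelta)=\sum_{k=0}^{N}S(t_{k+1},t_{k};\qdelta)$, and the ODE \eqref{2.4} gives the multiplicative identity $\Fw(t,0;\qdelta)=\Fw(t,t_{N};\qdelta)\cdots\Fw(t_{1},0;\qdelta)$. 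Because the integration variable $x^{(i_{k})}$ is the common endpoint of the adjacent piecewise paths and $Z_{k}(\qdelta(t_{k}))=Z_{k}(x^{(i_{k})})$ is uniformly bounded (from $\dM_{j}=0$ in \eqref{2.19}), the oscillatory integral in \eqref{2.20} factors, after taking $\epsilon \to 0{+}0$ and justifying the interchange of this limit with the $x^{(i_{k})}$-integrations by the cutoff estimates of Appendix~B, into the operator composition
\begin{equation*}
K_{w\Delta_{N}}(t,t_{N})\,Z_{N}(\cdot)\,K_{w\Delta_{N-1}}(t_{N},t_{N-1})\,Z_{N-1}(\cdot)\,\cdots\,Z_{1}(\cdot)\,K_{w\Delta_{0}}(t_{1},0)\,f,
\end{equation*}
where each $K_{w\Delta_{k}}(t_{k+1},t_{k})$ is the operator of \eqref{2.5} on $[t_{k},t_{k+1}]$.

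From this factorization, statements (1) and (4) follow factor-wise from their counterparts in Theorem \ref{thm 2.2}: independence of the cutoff $\chi$ holds separately in each $K_{w\Delta_{k}}$, and in the gauge transformation the intermediate boundary factors $e^{\pm i\hbar^{-1}\psi(t_{k},\cdot)}$ introduced by gauging each $K_{w\Delta_{k}}$ telescope, leaving only $e^{i\hbar^{-1}\psi(t,\cdot)}$ and $e^{-i\hbar^{-1}\psi(0,\cdot)}$ on the outside. For statements (2) and (3) in this refining case, as $|\Delta|\to 0$ each $|\Delta_{k}|\to 0$, so by (2) of Theorem \ref{thm 2.2} each $K_{w\Delta_{k}}(t_{k+1},t_{k})f$ converges in $(L^{2})^{l}$ to $U_{w}(t_{k+1},t_{k})f$, uniformly in $t\in[t_{N},T]$ for the last factor. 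Composition together with the uniform $(L^{2})^{l}$-boundedness of each $U_{w}(t_{k+1},t_{k})$ (Theorem 2.1) and of each multiplication by $Z_{j}(\cdot)$ then yields the product formula \eqref{2.21} as a limit in $C^{0}_{t}([t_{N},T];(L^{2})^{l})$.

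The remaining step, which I expect to be the main obstacle, is to remove the refining restriction: for a general $\Delta$ with $|\Delta|\to 0$ the insertion times need not lie in $\Delta$. Here I would set $\Delta':=\Delta\cup\{t_{1},\dotsc,t_{N}\}$, which does refine $\{t_{1},\dotsc,t_{N}\}$ and satisfies $|\Delta'|\leq|\Delta|$, and compare $\langle\prod_{j}Z_{j}(\qdelta(t_{j}))\rangle_{w}f$ with $\langle\prod_{j}Z_{j}(q_{\Delta'}(t_{j}))\rangle_{w}f$. The paths $\qdelta$ and $q_{\Delta'}$ coincide outside the at most $N$ subintervals of $\Delta$ that straddle the $t_{j}$; inside each such subinterval they differ by a piecewise linear modification of spatial diameter controlled by $|\Delta|$ times the relevant free-motion velocity, and the extra variable present in $\Delta'$ can be integrated out against a quadratic oscillatory Gaussian factor. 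Using the Lipschitz property of each $Z_{j}$ (from $\dM_{j}=0$ in \eqref{2.19}) together with the stationary-phase/oscillatory-integral stability estimates developed in Appendix~B that already underlie the convergence statement of Theorem \ref{thm 2.2}, one shows that this difference tends to $0$ in $(L^{2})^{l}$ as $|\Delta|\to 0$. Once this comparison is established, statements (1), (2), (4) for general $\Delta$ and the identification (3) of the limit all follow from the refining case.
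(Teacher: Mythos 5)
Your plan is broadly sound but takes a genuinely different route than the paper, and the crux is hidden in the step you yourself flag as the main obstacle. The paper (for Theorem 2.5, from which Theorem 2.4 is deduced by the gauge trick) never restricts to refining subdivisions: it locates $j$ with $\tau_j\leq t_1<\tau_{j+1}$, factors the oscillatory integral at the partition points $\tau_i$ (not at $t_1$), and packages the segment straddling $t_1$ into a single modified short--time operator $\widetilde{\mathcal{C}}_{\ds w}(\tau_{j+1},t_1,\tau_j)$ which inserts $Z_1(\qts(t_1))$ along the straight segment. It then Taylor--expands $Z_1(\qts(t_1))$ around $Z_1(y)$ (spatial increment $\frac{t_1-\tau_j}{\tau_{j+1}-\tau_j}\sqrt{\rho}\,v$), obtaining $\widetilde{\mathcal{C}}_{\ds w}(\tau_{j+1},t_1,\tau_j)=\mathcal{C}_{\ds w}(\tau_{j+1},\tau_j)Z_1(\cdot)+\sqrt{\rho}\,\mathfrak{P}$ where $\mathfrak{P}$ is a uniformly bounded oscillatory--integral remainder, and concludes first $\to U_w(t,\tau_j)Z_1(\cdot)U_w(\tau_j,0)f$ and then replaces $\tau_j$ by $t_1$ via the elementary continuity estimate $\Vert U_w(t,s)f-f\Vert_a\leq C_a(t-s)\Vert f\Vert_{a+2}$. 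By contrast, you first prove the result for refining $\Delta$ (where the factorization across $t_k$ is clean and the intermediate gauge phases telescope as you say), and then want to pass from a general $\Delta$ to $\Delta'=\Delta\cup\{t_1,\dotsc,t_N\}$.

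That passage is not a soft stability remark: after restricting to the block $[\tau_j,\tau_{j+1}]$ it amounts to proving $\Vert\widetilde{\mathcal{C}}_{\ds w}(\tau_{j+1},t_1,\tau_j)-\mathcal{C}_{\ds w}(\tau_{j+1},t_1)Z_1(\cdot)\mathcal{C}_{\ds w}(t_1,\tau_j)\Vert_{(L^2)^l\to(L^2)^l}\to 0$, i.e.\ comparing a $d$--dimensional oscillatory integral with a $2d$--dimensional one, which is strictly harder than the paper's one--block comparison. The claim that ``the paths differ by a spatial modification controlled by $|\Delta|$ times the free velocity'' is heuristic: the free velocity $v=(x-y)/\sqrt{\rho}$ is not bounded, so Lipschitz continuity of $Z_j$ alone does not make the difference small; you need to trade the factor $|v|$ against the quadratic phase in the sense of the Appendix~B symbol calculus, exactly as the paper does when it bounds $\sqrt{\rho}\,\mathfrak{P}$. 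It would be simpler (and is closer to what the machinery already provides) to skip the refining reduction entirely and argue, as the paper does, directly with the modified block operator. Also note that you are silently invoking Theorem~2.2 on shifted time intervals $[t_k,t_{k+1}]$ and using uniform $L^2$--boundedness of $K_{w\Delta_k}$ to pass products of limits to the limit of products; both are true (Proposition~B.3) but should be stated.
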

%%%%%%%%%%%%
%
\begin{thm} \label{thm 2.5}
We suppose the same assumptions as in Theorem 2.3 and let $\rho^*>0$ be the constant dertermined in Theorem 2.3. We take $Z_j(x)\ (j=1,2,\dots,N)$ satisfying \eqref{2.19} and consider \eqref{2.20}.
Then we have the following statements for $|\Delta| \leq \rho^*$ and $a = 0,1,2,\dots$:
\\
(1) The function $\left<\prod_{j=1}^NZ_j(\qdelta(t_j))\right>_wf$ defined on $f \in \Czerospace^l$ is
determined independently of the choice of $\chi$ and  can be uniquely extended to a bounded operator from $(B^{a+\dM})^{l}$ to $(B^a)^l$.
\\
(2) For all $f \in (B^{a+\dM})^{l}$, as $|\Delta| \to 0$, $\left<\prod_{j=1}^NZ_j(\qdelta(t_j))\right>_wf$ converges in $(B^{a})^{l}$ to an element $\left<\prod_{j=1}^NZ_j(q(t_j))\right>_wf  \in (B^{a})^{l}$ uniformly in $t \in [t_N,T]$.
\\
(3) For all $f \in (B^{a+\dM})^{l}$, $\left<\prod_{j=1}^NZ_j(q(t_j))\right>_wf$  belongs to   $C_{t}^{0}([t_N,T];(B^a)^{l})$ and  
equals \eqref{2.21}.
\end{thm}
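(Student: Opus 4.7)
The plan is to reduce Theorem~2.5 to Theorem~2.3 by splitting the piecewise linear path at the insertion times $t_{1}<\cdots<t_{N}$ and recognising the resulting oscillatory integral as a composition of RFPI-approximants interlaced with the multiplications by $Z_{j}$.

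First I would refine the subdivision $\Delta$ so that each $t_{j}$ is one of its grid points; this only decreases $|\Delta|$, so it is harmless and the refined subdivisions $\Delta^{(k)}$ induced on $[t_{k},t_{k+1}]$ (with $t_{0}:=0$ and $t_{N+1}:=t$) still satisfy $|\Delta^{(k)}|\leq|\Delta|\leq\rho^{*}$. Using the additivity $S(t,0;\qdelta)=\sum_{k=0}^{N}S(t_{k+1},t_{k};\qdelta)$ and the cocycle property of $\Fw$, the integral \eqref{2.20} factors as
\begin{equation*}
\left<\prod_{j=1}^{N}Z_{j}(\qdelta(t_{j}))\right>_{w}\!f
=\kdelta(t,t_{N})\,Z_{N}(\cdot)\,\kdelta(t_{N},t_{N-1})\,Z_{N-1}(\cdot)\cdots Z_{1}(\cdot)\,\kdelta(t_{1},0)\,f,
\end{equation*}
where each $\kdelta(t_{k+1},t_{k})$ stands for the RFPI-approximant on $[t_{k},t_{k+1}]$ built from $\Delta^{(k)}$. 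Independence from $\chi$ in claim (1) is then inherited factor by factor from Theorem~2.3.

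For the boundedness in (1), the Leibniz rule together with \eqref{2.19} shows that multiplication by $Z_{j}$ is bounded $(B^{b})^{l}\to (B^{b-\dM_{j}})^{l}$ whenever $b\geq \dM_{j}$. Combined with the uniform $(B^{b})^{l}$-bound on $\kdelta(t_{k+1},t_{k})$ supplied by Theorem~2.3, a starting element $f\in (B^{a+\dM})^{l}$ is carried through the chain $(B^{a+\dM})^{l}\to(B^{a+\dM-\dM_{1}})^{l}\to\cdots\to(B^{a})^{l}$ with norm bounds independent of $\Delta$. For the convergence in (2), writing $\Pi_{\Delta}$ for the right-hand side of the factorization above and $\Pi_{\infty}$ for the analogous product with every $\kdelta(t_{k+1},t_{k})$ replaced by $U_{w}(t_{k+1},t_{k})$, I would telescope
\begin{equation*}
\Pi_{\Delta}-\Pi_{\infty}=\sum_{k=0}^{N}\bigl[\kdelta(t,t_{N})Z_{N}\cdots Z_{k+1}\bigr]\bigl[\kdelta(t_{k+1},t_{k})-U_{w}(t_{k+1},t_{k})\bigr]\bigl[Z_{k}U_{w}(t_{k},t_{k-1})\cdots U_{w}(t_{1},0)\bigr],
\end{equation*}
with the bracketed compositions read as empty when $k=0$ or $k=N$. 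By Theorem~2.3 each middle difference tends strongly to $0$ in the pertinent $(B^{b})^{l}$-norm uniformly in $t\in[t_{N},T]$, while the surrounding operators remain uniformly bounded on the required Sobolev scales. Claim~(3) is then immediate: the limit is \eqref{2.21} by construction, and strong continuity of $t\mapsto U_{w}(t,t_{N})g$ on $(B^{a})^{l}$ from Theorem~2.3 carries through the composition.

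The main obstacle will be the Sobolev bookkeeping in the telescoping step: after each substitution of a single $\kdelta$ by $U_{w}$, one must verify that the remaining factors still act between the correct $(B^{b})^{l}$-spaces with $\Delta$-uniform operator norms. Since Theorem~2.3 delivers such uniform $(B^{b})^{l}$-bounds simultaneously for $\kdelta$ and for $U_{w}$ at every integer $b$, this reduces to a careful tracking of Sobolev orders rather than any new analytic input.
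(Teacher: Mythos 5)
Your overall strategy --- split the oscillatory integral at the insertion times via the cocycle identity \eqref{B.31} for $\mathcal{F}_w$, recognize \eqref{2.20} as a composition of short-time propagator-approximants on the subintervals interlaced with multiplications by the $Z_j$, and then telescope against the corresponding product of $U_w$'s --- is structurally the same blueprint the paper follows, and your Sobolev bookkeeping for the $Z_j$ multiplications is correct. But there is a genuine gap at the very first step: you ``refine $\Delta$ so that each $t_j$ is one of its grid points'' and declare this harmless because it decreases $|\Delta|$. Refining is not harmless here, because adding grid points changes the approximant. Each new node of $\Delta$ introduces one more integration variable $x^{(j)}$ in \eqref{2.20} and changes the piecewise-free path $\qdelta$, so the refined expression is a different operator from the given $\left<\prod_j Z_j(\qdelta(t_j))\right>_w f$. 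Since part (2) of the theorem asserts convergence as $|\Delta|\to 0$ over \emph{arbitrary} subdivisions, establishing convergence only along the cofinal subfamily of refined subdivisions does not prove the claim unless you also supply a comparison estimate between the original approximant and its refinement, which your sketch omits entirely.

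The paper supplies precisely this missing analytic ingredient without refining. Taking $N=1$ and $\tau_j\le t_1<\tau_{j+1}$, it keeps the original subdivision and isolates the one irregular factor $\widetilde{\mathcal{C}}_{\ds w}(\tau_{j+1},t_1,\tau_j)$ from \eqref{B.34}, in which $Z_1$ is inserted strictly inside a subinterval. Equation \eqref{B.38} then proves $\widetilde{\mathcal{C}}_{\ds w}(t,t_1,s)f=\mathcal{C}_{\ds w}(t,s)Z_1(\cdot)f+\sqrt{t-s}\,\mathfrak{P}(t,t_1,s)f$, i.e.\ the irregular factor equals the regular propagator applied after multiplication by $Z_1$, up to an error of order $\sqrt{\tau_{j+1}-\tau_j}\le\sqrt{|\Delta|}$. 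A second correction \eqref{B.41} is then needed to pass from the limit $U_w(t,\tau_j)Z_1(\cdot)U_w(\tau_j,0)f$ --- note the insertion sits at the grid point $\tau_j$, not at $t_1$ --- over to $U_w(t,t_1)Z_1(\cdot)U_w(t_1,0)f$. Both of these error bounds are substantive; your proposal has no counterpart to either, so as written it would establish parts (2) and (3) only for subdivisions already passing through every $t_j$, not the full statement.
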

%%%%%%%%%%%%
\begin{rem} \label{rem 2.2}  We consider a particle with $l$ spin components and measure  the position of each spin component during an interval $[0,T]$ of time.
Let $a^{(j)}(t)\ (j = 1,2,\dots,l)$ be the result for the $j$-th spin component and $\delta > 0$ the resolution or the error of the measuring device.  Let's  take the diagonal matrix in $M_l(\bC)$ defined by
\begin{equation} \label{2.22}
W_{\ds}(t,x) =
 \begin{pmatrix} 
 w_{11}(t,x)                                        \\
            & w_{22}(t,x)       &        & \text{\huge{0}}                           \\
    &  & \ddots     \\
 & \text{\huge{0}}    & & \ddots \\
                        & &    &          &w_{ll}(t,x)     
  \end{pmatrix}
 \end{equation}
 with $w_{jj}(t,x) = (2\delta^2)^{-1}|x - a^{(j)}(t)|^2$ in Theorems 2.2 and 2.3.
 We suppose that $a^{(j)}(t)\ (j= 1,2,\dots,l)$ are continuous on $[0,T]$.  Then $W_{\ds}(t,x)$ satisfies 
 Assumption 2.D as $w(t,x) = (4\delta^2)^{-1}|x|^2$. In fact we can easily prove \eqref{2.11} from
 \begin{equation*}
 \frac{1}{2\delta^2}|x - a^{(j)}(t)|^2 -  \frac{1}{4\delta^2}|x|^2 \geq \frac{|x|}{4\delta^2}(|x|-4A)
 = \frac{|x|^{2}}{4\delta^2}\left(1-\frac{4A}{|x|}\right),
 \end{equation*}
 where $A = \max_{1\leq j \leq l}\max_{0\leq t \leq T}|a^{(j)}(t)|$.
   This matrix-valued function $W_{\ds}(t,x)$
  corresponds to $W(t,x;a)$ in \eqref{1.8}.
\end{rem}
%%%%%%%%%%%%
\begin{rem} \label{rem 2.3}
In the case of  $l = 1$ Theorems 2.2 and 2.3 in the present paper give the same results as in \cite{Ichinose 2023}.  On the other hand, when $l \geq 2$, they give more general results than Theorem 2.3 in \cite{Ichinose 2007} and Theorem 2.3 of \cite{Ichinose 2023}, where we have assumed
\begin{equation} \label{2.23}
W_{\ds} = 0\quad \text{or}\quad \Vert\partial_x^{\alpha}W_{\ds}(t,x)\Vert_{\bC^d} \leq C_{\alpha},
\end{equation}
\begin{equation} \label{2.24}
\Vert\partial_x^{\alpha}H_{\ds}(t,x)\Vert_{\bC^d} \leq C_{\alpha}
\end{equation}
in $\domain$ for all $\alpha$ in place of \eqref{2.13} and \eqref{2.17} respectively.
\end{rem}
%%%%%%%%%%%%%%%%
\begin{rem} \label{rem 2.4}
When $l = 1$ and $W_{\ds}= 0$ , Theorems 2.4 and 2.5 in the present paper give  the same results as in Theorem 2.2 of \cite{Ichinose 2006}.   %
\end{rem}
%  %%%%%%%%%%%%
%%%%%%%%
%%%%%%%%
%%%%%%%%%%%%%%%%%%%%%%%%%%
\section{Main theorems and examples}
%%%%%%%%%%%%%%%%%%%%%%%
%%%%%%%%%%%%%%%%%%%%%%%%%%%%
We will prove the formula \eqref{1.14} in a general form as follows.
%%%%%%%
%
\begin{thm} \label{thm 3.1}
Besides the assumptions of Theorem 2.1 we suppose that there exists a non-decreasing function $\sigma(\varrho) \geq 0\ (\varrho \in [0,T])$
satisfying 
\begin{align} \label{3.1}
& \sup_{x \in \bR^d} \Vert<x>^{-2}\partial_x^{\alpha}\bigl\{W_{\ds}(t,x)- W_{\ds}(s,x)\bigr\}\Vert_{\bC^l} \leq \sigma(t - s),\quad t,s \in [0,T],
\notag \\
& 0\leq \varrho \leq \sigma(\varrho) ,\quad  \lim_{\varrho\to 0+0}\sigma(\varrho) = 0.
\end{align}
 Let $U_w(t,s)f$ and $U(t,s)f$ be the solutions defined in Theorem 2.1 for  $f \in (B^a)^l\ (a = 0,1,2,\dots)$. 
For a subdivision $\Delta = \{\tau_j\}_{j=1}^{\nu-1}$ we take $\kappa_j$ and $\kappa'_j\ (\jdots)$ in $[\tau_j,\tau_{j+1}]$ arbitrarily.
Then we have
\begin{align} \label{3.2}
& \lim_{|\Delta|\to 0}U(t,\kappa_{\nu-1})e^{-(t-\tau_{\nu-1})W_{\ds}(\kappa'_{\nu-1})}
U(\kappa_{\nu-1},\kappa_{\nu-2})e^{-(\tau_{\nu-1}-\tau_{\nu-2})W_{\ds}(\kappa'_{\nu-2})} U(\kappa_{\nu-2},\kappa_{\nu-3}) \notag \\
&\quad \cdot \cdots e^{-(\tau_{2}-\tau_{1})W_{\ds}(\kappa'_{1})}U(\kappa_{1},\kappa_{0})e^{-\tau_{1}W_{\ds}(\kappa'_{0})}U(\kappa_0,0)f
= U_w(t,0)f
\end{align}
in $(B^a)^l$ uniformly in $t \in [0,T]$.
\end{thm}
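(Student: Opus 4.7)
The plan is to realize the operator product on the left-hand side of \eqref{3.2}, applied to $f$, as the terminal value at time $t$ of a piecewise-defined propagation. Define $u_\Delta(r)$ on $[0,t]$ by $u_\Delta(0)=f$, by $i\hbar\partial_r u_\Delta=(H(r)+\hbar H_{\ds}(r,x))u_\Delta$ on each open subinterval of the refined partition $\{0\le \kappa_0<\kappa_1<\cdots<\kappa_{\nu-1}\le t\}$, and by the multiplicative jump $u_\Delta(\kappa_j^+)=e^{-(\tau_{j+1}-\tau_j)W_{\ds}(\kappa'_j,x)}u_\Delta(\kappa_j^-)$ at each $\kappa_j$. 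Setting $u(r):=U_w(r,0)f$, the goal is to show that $\|u_\Delta(t)-u(t)\|_a\to 0$ uniformly in $t\in[0,T]$ as $|\Delta|\to 0$.

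The argument rests on a telescoping identity. Write $\sigma_j:=\tau_{j+1}-\tau_j$ and let $V_j:=U(\tau_{j+1},\kappa_j)\,e^{-\sigma_j W_{\ds}(\kappa'_j)}\,U(\kappa_j,\tau_j)$ be the one-step propagator of $u_\Delta$ across $[\tau_j,\tau_{j+1}]$, and $\widetilde U_\Delta(t,\tau_k):=V_{\nu-1}\cdots V_k$ the composition of the subsequent ones. Then
\begin{equation*}
u_\Delta(t)-u(t)=\sum_{j=0}^{\nu-1}\widetilde U_\Delta(t,\tau_{j+1})\bigl(V_j-U_w(\tau_{j+1},\tau_j)\bigr)U_w(\tau_j,0)f.
\end{equation*}
By Theorem 2.1 the operators $U(\cdot,\cdot)$ and $U_w(\cdot,\cdot)$ are bounded on $(B^a)^l$ and on $(B^{a+2})^l$, while the multiplication $e^{-\sigma W_{\ds}(\kappa',\cdot)}$ maps $(B^a)^l$ into itself with operator norm at most $e^{\sigma C_W}$ times a constant depending on derivatives of $W_{\ds}$ bounded by \eqref{2.11}--\eqref{2.13}; telescoping these factorwise bounds shows that $\widetilde U_\Delta(t,\tau_{j+1})$ and $U_w(\tau_j,0)$ are uniformly bounded on both $(B^a)^l$ and $(B^{a+2})^l$ in $\Delta$ and $t$.

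The core of the proof is the local estimate
\begin{equation*}
\|V_j-U_w(\tau_{j+1},\tau_j)\|_{(B^{a+2})^l\to(B^a)^l}\le C\sigma_j\bigl(\sigma_j+\sigma(|\Delta|)\bigr),
\end{equation*}
obtained by expanding both sides in first-order Duhamel form,
\begin{equation*}
V_j-U(\tau_{j+1},\tau_j)=-\int_0^{\sigma_j}U(\tau_{j+1},\kappa_j)W_{\ds}(\kappa'_j)e^{-sW_{\ds}(\kappa'_j)}U(\kappa_j,\tau_j)\,ds,
\end{equation*}
\begin{equation*}
U_w(\tau_{j+1},\tau_j)-U(\tau_{j+1},\tau_j)=-\int_0^{\sigma_j}U(\tau_{j+1},\tau_j+s)W_{\ds}(\tau_j+s)U_w(\tau_j+s,\tau_j)\,ds,
\end{equation*}
and comparing the two integrands through four intermediates: replace $W_{\ds}(\tau_j+s)$ by $W_{\ds}(\kappa'_j)$, $U_w(\tau_j+s,\tau_j)$ by $U(\tau_j+s,\tau_j)$, then by $U(\kappa_j,\tau_j)$, and finally insert $e^{-sW_{\ds}(\kappa'_j)}$. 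The four residuals are bounded respectively by \eqref{3.1} (giving $\sigma(|\Delta|)\langle x\rangle^2$), by a short Duhamel estimate for $U_w-U$ (giving $O(\sigma_j)\langle x\rangle^2$), by the strong continuity $\|U(\tau_{j+1},\tau_j+s)-U(\tau_{j+1},\kappa_j)\|_{(B^{a+2})^l\to(B^a)^l}=O(\sigma_j)$ coming from the identity $\partial_sU=(i/\hbar)U(H+\hbar H_{\ds})$, and by $\|I-e^{-sW_{\ds}(\kappa'_j)}\|_{(B^{a+2})^l\to(B^a)^l}=O(s)$. Since $\sigma(\varrho)\ge\varrho$, summing the telescoping gives
\begin{equation*}
\|u_\Delta(t)-u(t)\|_a\le C\sum_{j=0}^{\nu-1}\sigma_j(\sigma_j+\sigma(|\Delta|))\,\|f\|_{a+2}\le C\,T\,\sigma(|\Delta|)\,\|f\|_{a+2}\to 0
\end{equation*}
uniformly in $t$, for each $f\in(B^{a+2})^l$. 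The convergence on all of $(B^a)^l$ follows by a standard density argument together with the uniform bounds of the second paragraph.

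The main obstacle is that $W_{\ds}(t,x)$ grows like $\langle x\rangle^2$ (cf.\ Remark~\ref{rem 2.2} and \eqref{2.13}), so it cannot be handled as a bounded perturbation of the Schr\"odinger generator $H(t)+\hbar H_{\ds}(t,x)$; every Duhamel step and every Taylor expansion of $e^{-\sigma W_{\ds}}$ must be carried out in the weighted Sobolev scale, at the price of two derivatives per step. Hypothesis \eqref{3.1} is calibrated precisely to this loss — the weight $\langle x\rangle^{-2}$ on its left-hand side matches the $\langle x\rangle^2$ growth of $W_{\ds}$ — and the condition $\sigma(\varrho)\to 0$ is exactly what forces the telescoping error to vanish uniformly over the arbitrary selection of $\kappa_j,\kappa'_j\in[\tau_j,\tau_{j+1}]$.
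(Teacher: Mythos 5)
Your proposal follows essentially the same strategy as the paper's proof: approximate the product on the left of \eqref{3.2} by a composition of one-step propagators, telescope against the corresponding composition of exact $U_w$-propagators, and control each telescoping summand by a local Duhamel comparison whose error is $O\bigl(\sigma_j(\sigma_j+\sigma(|\Delta|))\bigr)$ in a weighted Sobolev norm, then sum and pass to all of $(B^a)^l$ by density and uniform boundedness. The one structural difference is cosmetic: the paper further splits each subinterval $[\tau_j,\tau_{j+1}]$ at $\kappa_j$ and compares the half-step propagators $U_{LW}(\kappa_j,\tau_j;\kappa'_j)=e^{-(\kappa_j-\tau_j)W_{\ds}(\kappa'_j)}U(\kappa_j,\tau_j)$ and $U_{RW}(\tau_{j+1},\kappa_j;\kappa'_j)=U(\tau_{j+1},\kappa_j)e^{-(\tau_{j+1}-\kappa_j)W_{\ds}(\kappa'_j)}$ separately against $U_w$ (Propositions 5.3--5.5 and Lemma 5.4), whereas you compare the full one-step propagator $V_j=U(\tau_{j+1},\kappa_j)e^{-\sigma_j W_{\ds}(\kappa'_j)}U(\kappa_j,\tau_j)$ to $U_w(\tau_{j+1},\tau_j)$ in one go. The paper's split has the advantage of producing two cleaner objects, each of which solves the full equation up to a one-sided remainder $R_{JW}$; your version is slightly more compact but requires five, not four, intermediate substitutions to carry the second integrand into the first (you omit the replacement $U(\tau_{j+1},\tau_j+s)\to U(\tau_{j+1},\kappa_j)$ from your list, though the bound you quote belongs to it).

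The one genuine slip is quantitative: the claimed local bound $\Vert V_j-U_w(\tau_{j+1},\tau_j)\Vert_{(B^{a+2})^l\to(B^a)^l}\le C\sigma_j(\sigma_j+\sigma(|\Delta|))$ cannot hold with a loss of only two derivatives. Every residual except the first is sandwiched between $W_{\ds}(\kappa'_j)$ (which costs two derivatives, since $W_{\ds}$ grows like $\langle x\rangle^2$) and the short-time estimates $\Vert U_w-U\Vert_{(B^{b+2})^l\to(B^b)^l}=O(\sigma_j)$, $\Vert U(\cdot,\cdot)-I\Vert_{(B^{b+2})^l\to(B^b)^l}=O(\sigma_j)$, or $\Vert I-e^{-sW_{\ds}}\Vert_{(B^{b+2})^l\to(B^b)^l}=O(s)$, each of which itself costs two more. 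So the estimate lives on $(B^{a+4})^l\to(B^a)^l$ at best; the paper's Proposition 5.5, obtained via a second-order Duhamel expansion, lands in $(B^{a+6})^l\to(B^a)^l$. This does not damage the argument — you still conclude convergence for $f$ in a dense subspace, and the uniform bounds on $\widetilde U_\Delta(t,\tau_{j+1})$ and $U_w(\tau_j,0)$ on $(B^a)^l$ (the analogue of Proposition 5.3's inequality \eqref{5.19}, which does rely on $W_{\ds}\ge 0$ and on the derivative bounds \eqref{2.12}--\eqref{2.13}) carry the limit to all of $(B^a)^l$ — but the indices must be adjusted.
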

%
%%%%%%%%
	When $W_{\ds}(t,x) = 0$, we express \eqref{2.20}   and its limit as $|\Delta| \to 0$  as 
	\\ $\left<\prod_{j=1}^N Z_j(\qdelta(t_j))\right>f$ and $\left<\prod_{j=1}^NZ_j(q(t_j))\right>f$ respectively, removing $w$. In the following theorem we take  $\kappa_j$ and 
	$\exp\bigl\{-(\tau_{j+1}-\tau_j)W_{\ds}(\kappa_j,x)\bigl\}\ (j=0,\dots,\nu-1)$  as $t_j$ and $Z_j(x)\ (j=1,\dots,N)$ in \eqref{2.20}, respectively.
%%%%%%%%%%
%%%%%%%%%%%%%%%%
\begin{thm} \label{thm 3.2}
Besides the assumptions of Theorem 2.2 we suppose \eqref{3.1}. Let $(V,A)$ be a potential satisfying the properties stated in Theorem 2.2.
 For a subdivision $\Delta' = \{\tau_j\}_{j=1}^{\nu-1}$ we take $\kappa_j\ (\jdots)$  in $[\tau_j,\tau_{j+1}]$ arbitrarily. 
 Let $f \in (L^2)^l$.
 Then, there exists $\left<\prod_{j=0}^{\nu-1}\exp\Bigl\{-(\tau_{j+1}-\tau_j)W_{\ds}(\kappa_j,q(\kappa_j))\Bigr\}\right>f \in C^0_t([\kappa_{\nu-1},T];(L^2)^l)$   and we have
\begin{align} \label{3.3}
& \lim_{|\Delta'|\to 0}\left<\prod_{j=0}^{\nu-1}\exp\Bigl\{-(\tau_{j+1}-\tau_j)W_{\ds}(\kappa_j,q(\kappa_j))\Bigr\}\right>f \notag \\
& = \int_{\Gamma(t,x)} e^{i\hbar^{-1}S(t,0;q)}\mathcal{F}_{w}(t,0;q)f(q(0)) {\cal D}q
\end{align}
in $(L^2)^l$ uniformly in $t \in [0,T]$.  
\end{thm}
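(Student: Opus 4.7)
The strategy is to reduce the path-integral-side convergence to the operator-product convergence already established in Theorem 3.1, using Theorem 2.4 as the bridge. Fix the subdivision $\Delta'=\{\tau_j\}$ with intermediate points $\kappa_j\in[\tau_j,\tau_{j+1}]$. Set $t_j:=\kappa_{j-1}$ and
\begin{equation*}
Z_j(x):=\exp\bigl\{-(\tau_j-\tau_{j-1})\,W_{\ds}(\kappa_{j-1},x)\bigr\},\qquad j=1,2,\dots,\nu,
\end{equation*}
so that the product inside the bracket $\langle\cdot\rangle$ on the left-hand side of \eqref{3.3} equals $\prod_{j=1}^{\nu} Z_j(q(t_j))$. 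Because the bracket carries no subscript $w$, the factor $\Fw$ in its definition \eqref{2.20} is evaluated with $W_{\ds}=0$, so the measurement weight is now entirely absorbed into the $Z_j$'s and Theorem 2.4 is the natural tool (after a routine perturbation to ensure strict ordering of the $\kappa_j$'s).

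In order to apply Theorem 2.4 with $\dM_j=0$, one must verify that each $Z_j$ satisfies \eqref{2.19} with constants $C_\alpha$ independent of $j$ and of $\Delta'$. Since $W_{\ds}$ is Hermitian with $W_{\ds}\geq -C_W I$ by Assumption 2.D, the matrix norm satisfies $\Vert Z_j(x)\Vert_{\bC^l}\leq e^{TC_W}$. For higher derivatives I would iterate the Duhamel formula
\begin{equation*}
\partial_{x_k} e^{-tW_{\ds}}=-t\int_0^1 e^{-stW_{\ds}}\bigl(\partial_{x_k}W_{\ds}\bigr)e^{-(1-s)tW_{\ds}}\,ds,
\end{equation*}
producing, for each multi-index $\alpha$, a finite sum of products of derivatives $\partial_x^\beta W_{\ds}$ sandwiched between matrix exponentials. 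Combining \eqref{2.12} with the elementary estimate $(tw)^m e^{-tw}\leq M_m$ for $t,w\geq 0$ absorbs the $(1+w)^{|\alpha|}$-growth of the derivative factors into the exponential decay in $w$, yielding $\Vert\partial_x^\alpha Z_j(x)\Vert_{\bC^l}\leq C_\alpha$ uniformly in $j$ and in the subdivision. This is the main technical step.

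Once the $Z_j$'s are under control, for every $\Delta'$ with $|\Delta'|\leq\rho^*$ Theorem 2.4(3) identifies the left-hand side of \eqref{3.3} with the operator product
\begin{equation*}
U(t,\kappa_{\nu-1})e^{-(t-\tau_{\nu-1})W_{\ds}(\kappa_{\nu-1})}U(\kappa_{\nu-1},\kappa_{\nu-2})\cdots e^{-\tau_1 W_{\ds}(\kappa_0)}U(\kappa_0,0)f
\end{equation*}
in $C^0_t([\kappa_{\nu-1},T];(L^2)^l)$. This is precisely the operator product appearing on the left-hand side of \eqref{3.2} with $\kappa'_j=\kappa_j$, so Theorem 3.1 (with $a=0$) yields its convergence to $U_w(t,0)f$ in $(L^2)^l$ uniformly in $t\in[0,T]$; Theorem 2.2(3) then identifies $U_w(t,0)f$ with the RFPI on the right-hand side of \eqref{3.3}, completing the proof. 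The expected main obstacle is the derivative-bound verification described in the second paragraph; the rest is a clean assembly of already-proven results.
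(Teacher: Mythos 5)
Your reduction --- pack the discretized weight into the $Z_j$'s, use Theorem 2.4 (with $W_{\ds}\equiv 0$ inside $\mathcal{F}_w$) to identify the path-integral bracket with the operator product \eqref{2.21}, converge that product by Theorem 3.1, and pass to the RFPI via Theorem 2.2(3) --- mirrors the skeleton of the paper's argument, and the derivative bound on $Z_j$ that you flag as the main obstacle is exactly the paper's Lemma 6.1, proved by the Duhamel iteration you describe, with $(tw)^m e^{-tw}\le M_m$ absorbing the polynomial growth in $w$ permitted by \eqref{2.12}.

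The genuine gap is in invoking Theorem 3.1. That theorem is stated under the assumptions of Theorem 2.1, which include Assumption 2.C (bounded $x$-derivatives of the potential $(V,A)$), whereas Theorem 3.2 assumes only Assumption 2.A on the fields together with an \emph{arbitrary} continuous potential, which in general does not satisfy 2.C. Theorem 2.4 is applicable in this setting only because a gauge argument is baked into its proof (it is derived from Theorem 2.5 exactly as Theorem 2.2 is from Theorem 2.3), but what it returns is the operator product built from the propagator for the original $(V,A)$, and Theorem 3.1 as stated does not reach that propagator. The paper closes the gap with a different ordering: it first proves Theorem 3.3 --- the Assumption-2.C analogue of Theorem 3.2 --- by combining Lemma 6.1, Theorem 2.5 and Theorem 3.1, and then obtains Theorem 3.2 from Theorem 3.3 by choosing a gauge-equivalent $(V',A')$ satisfying 2.C and applying the gauge transformation \eqref{2.18}. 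To repair your plan you must insert the analogous step: either perform the gauge transformation at the outset, or argue explicitly that \eqref{3.2} is gauge-covariant --- the multiplication operators $e^{-\rho W_{\ds}(\kappa',\cdot)}$ commute with $e^{i\hbar^{-1}\psi(t,\cdot)}$, so the gauge phases telescope through the product and through $U_w(t,0)$ --- and hence Theorem 3.1 extends to Assumption 2.A. As written, ``the rest is a clean assembly of already-proven results'' overstates what is available.
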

%%%%%%%%%%%%%
%%%%%%%%%%%%%%%%
\begin{thm} \label{thm 3.3}
Besides the assumptions of Theorem 2.3 we suppose \eqref{3.1}. Let $\Delta'$ and $\kappa_j\ (\jdots)$ be as in Theorem 3.2. Let $f \in (B^a)^l\ (a = 0,1,2,\dots)$. Then, there exists $\Bigl<\prod_{j=0}^{\nu-1}\exp\Bigl\{-(\tau_{j+1}-\tau_j)
W_{\ds}(\kappa_j,q(\kappa_j))\Bigr\}\Bigr>f \in C^0_t([\kappa_{\nu-1},T];(B^a)^l)$   and we have \eqref{3.3} 
 in $(B^a)^l$ uniformly in $t \in [0,T]$.
\end{thm}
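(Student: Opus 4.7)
The plan is to rewrite the bracket on the left-hand side of (3.3) as a concrete operator composition of spin-Schr\"odinger propagators $U(\cdot,\cdot)$ and multiplications by $e^{-sW_\ds}$, already at the level of $(B^a)^l$, then pass to the limit as $|\Delta'|\to 0$ by invoking Theorem 3.1, and finally identify the limit with the RFPI via Theorem 2.3(3). Concretely, I fix a subdivision $\Delta'$ with $|\Delta'|\leq\rho^*$ and choose $\kappa_j\in[\tau_j,\tau_{j+1}]$; then I apply Theorem 2.5 in the degenerate case where the weight inside the bracket is taken to be $0$ (so that $\mathcal{F}_w$ in (2.20) reduces to the pure spin parallel transport generated by $H_\ds$), with the $N=\nu$ matrix-valued functions
\[
Z_j(x):=\exp\bigl\{-(\tau_{j+1}-\tau_j)W_\ds(\kappa_j,x)\bigr\},\qquad j=0,1,\dots,\nu-1.
\]
Provided the hypothesis (2.19) is verified for these $Z_j$, Theorem 2.5 yields that the bracket lies in $C^0_t([\kappa_{\nu-1},T];(B^a)^l)$ and equals
\[
U(t,\kappa_{\nu-1})Z_{\nu-1}(\cdot)\,U(\kappa_{\nu-1},\kappa_{\nu-2})Z_{\nu-2}(\cdot)\cdots Z_0(\cdot)\,U(\kappa_0,0)f
\]
in $(B^a)^l$.

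The first key step is to verify (2.19) with $\dM_j=0$, i.e.\ that each $Z_j$ and all its spatial derivatives are bounded on $\bR^d$ uniformly in $x$ (and uniformly in the step size $s=\tau_{j+1}-\tau_j\in(0,T]$). Since $W_\ds(\kappa_j,x)$ is Hermitian and satisfies the spectral lower bound $W_\ds\geq(w-C_W)I$ from (2.11), the functional calculus gives $\|e^{-sW_\ds(\kappa_j,x)}\|_{\bC^l}\leq e^{sC_W}e^{-sw(\kappa_j,x)}$. Combining this with the Duhamel identity for derivatives of the matrix exponential and the growth estimate $\|\partial_x^\beta W_\ds\|_{\bC^l}\leq C_\beta(1+w)$ from (2.12), each $\partial_x^\alpha Z_j$ is bounded by a finite sum of products of factors $s\,\partial_x^\beta W_\ds$ sandwiched between exponentials $e^{-\theta sW_\ds}$, yielding an estimate of the form
\[
\|\partial_x^\alpha Z_j(x)\|_{\bC^l}\leq C_\alpha\,s^{|\alpha|}\bigl(1+w(\kappa_j,x)\bigr)^{|\alpha|}e^{sC_W}e^{-sw(\kappa_j,x)}\leq C'_\alpha,
\]
where the last bound is uniform in $x\in\bR^d$ and in $s\in[0,T]$ by an elementary one-variable maximization.

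With the identification in hand, the left-hand side of (3.3) coincides with the left-hand side of (3.2) in Theorem 3.1. Since Theorem 2.3's hypotheses together with (3.1) imply those of Theorem 3.1---Assumption 2.D implies (2.11) with $w\equiv 0$, the bound (2.17) implies (2.14), and Assumption 2.C is common to both---Theorem 3.1 applies and delivers convergence of the operator product displayed above to $U_w(t,0)f$ in $(B^a)^l$ uniformly in $t\in[0,T]$. Finally, Theorem 2.3(3) identifies $U_w(t,0)f$ with the RFPI $\int_{\Gamma(t,x)}e^{i\hbar^{-1}S(t,0;q)}\mathcal{F}_w(t,0;q)f(q(0))\,\mathcal{D}q$, yielding (3.3). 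The main technical obstacle is the verification in the second paragraph: the polynomial growth $(1+w)^{|\alpha|}$ arising from repeated differentiation of $W_\ds$ must be absorbed by the exponential decay $e^{-sw}$ inside $e^{-sW_\ds}$ uniformly in both the spatial variable $x$ and the time step $s$, so that $\dM_j=0$ works and Theorem 2.5 transfers the identification from $(L^2)^l$ up to $(B^a)^l$ without loss of regularity.
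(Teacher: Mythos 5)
Your proposal is correct and follows essentially the same route as the paper's own proof: verify via the Duhamel/exponential estimate (the paper's Lemma 6.1) that the $Z_j$ satisfy (2.19) with $\dM_j=0$, apply Theorem 2.5 to rewrite the bracket as the operator composition in (6.5), pass to the $|\Delta'|\to 0$ limit using Theorem 3.1 (formula (3.2)), and identify $U_w(t,0)f$ with the RFPI via Theorem 2.3(3). The only cosmetic difference is that your stated derivative bound $C_\alpha s^{|\alpha|}(1+w)^{|\alpha|}e^{-sw}$ keeps only the top power, whereas the paper's (6.4) retains the full sum $\sum_{n=1}^{|\alpha|}s^n(1+w)^n e^{-sw}$; both are absorbed uniformly in $s$ and $x$, so the conclusion is unaffected.
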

%%%%%%%%%%%%%
	We obtain the formulas \eqref{1.15} and \eqref{1.16}  in a general form as follows.
%%%%%%%%%%%
%
\begin{thm} \label{thm 3.4}
We suppose the same assumptions as in Theorem 2.1.
Let $\omega(\varrho) \in C^1([0,1])$ be a real-valued function satisfying
\begin{equation} \label{3.4}
\omega(0) = \omega'(0) = 0, \quad \omega'(\varrho) \geq \varrho \geq 0
\end{equation}
and that $\omega'(\varrho)$ is non-decreasing, where $\omega'(\varrho)= d \omega(\varrho)/d \varrho.$
Let $\Delta, \kappa_j$ and $\kappa'_j\ (j = 0,1,2,\dots,\nu-1)$ be as in Theorem 3.1.
Then for $f \in (B^a)^l\ (a = 0,1,2,\dots)$  we have
\begin{align} \label{3.5}
& \lim_{|\Delta|\to 0}U(t,\kappa_{\nu-1})e^{-\omega(t-\tau_{\nu-1})W_{\ds}(\kappa'_{\nu-1})}
U(\kappa_{\nu-1},\kappa_{\nu-2})e^{-\omega(\tau_{\nu-1}-\tau_{\nu-2})W_{\ds}(\kappa'_{\nu-2})} 
U(\kappa_{\nu-2},\kappa_{\nu-3}) \notag \\
&\quad \cdot  \cdots e^{-\omega(\tau_{2}-\tau_{1})W_{\ds}(\kappa'_{1})}U(\kappa_{1},\kappa_{0})e^{-\omega(\tau_{1})W_{\ds}(\kappa'_{0})}U(\kappa_0,0)f
= U(t,0)f
\end{align}
in $(B^a)^l$ uniformly in $t \in [0,T]$.
\end{thm}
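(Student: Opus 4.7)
The approach is the energy method combined with a Duhamel identity, exploiting that $\omega \in C^1([0,1])$ with $\omega'(0) = 0$ forces the generator-type perturbation of $H(t)$ to vanish as $|\Delta| \to 0$. Fix a subdivision $\Delta$ of $[0,T]$ and, for $t \in [\tau_j, \tau_{j+1}]$, define the continuous interpolation
$$v_\Delta(t) := U(t, \kappa_j)\, e^{-\omega(t - \tau_j) W_{\ds}(\kappa'_j)}\, g_j, \qquad g_j := v_\Delta(\tau_j),$$
so that $v_\Delta(0) = f$ and $v_\Delta$ agrees with the left-hand side of (3.5) when $t = \tau_\nu$. (For notational simplicity I take $\kappa_j = \tau_j$; the general case is identical after relabeling.) Differentiating on each cell,
$$i\hbar\, \partial_t v_\Delta(t) = H(t)\, v_\Delta(t) + r_\Delta(t),$$
$$r_\Delta(t) := -i\hbar\, \omega'(t - \tau_j)\, U(t, \tau_j)\, W_{\ds}(\kappa'_j)\, e^{-\omega(t - \tau_j) W_{\ds}(\kappa'_j)}\, g_j.$$
Since $\omega'$ is non-decreasing with $\omega'(0) = 0$, the scalar prefactor satisfies $|\omega'(t - \tau_j)| \leq \omega'(|\Delta|) \to 0$. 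This is the single feature distinguishing Theorem 3.4 from Theorem 3.1: in the latter $\omega$ is the identity and $r_\Delta$ has no vanishing prefactor, producing instead the generator $-i\hbar W_{\ds}$ of $U_w$.

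The plan is then to establish two uniform bounds for $|\Delta| \leq \rho^*$: (i) $\|g_j\|_{a+2} \leq C \|f\|_{a+2}$ for all $j$; and (ii) $\|W_{\ds}(\kappa'_j)\, e^{-\omega W_{\ds}(\kappa'_j)}\, h\|_a \leq C \|h\|_{a+2}$, with $C$ independent of $\Delta$, $j$, $t$. Estimate (ii) is immediate from (2.13) (which confines the growth of $W_{\ds}$ to at most $\langle x \rangle^2$) together with the uniform sup-bound $\|e^{-\omega W_{\ds}}\|_\infty \leq e^{\omega C_W}$ from (2.11). For (i), I would check that each multiplication factor $M_k := e^{-\omega(\tau_{k+1} - \tau_k) W_{\ds}(\kappa'_k)}$ satisfies $\|M_k h\|_{a+2} \leq (1 + C \omega_k)\, \|h\|_{a+2}$ with $\omega_k := \omega(\tau_{k+1} - \tau_k)$, by Leibniz-expanding $\partial_x^\alpha M_k$ and absorbing the $\langle x \rangle$-growth of $\partial_x^\beta W_{\ds}$ into one factor of $\|h\|_{a+2}$ per derivative. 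Combined with the multiplicative energy bound $\|U(s, r) h\|_{a+2} \leq e^{C(s - r)} \|h\|_{a+2}$ underlying Theorem 2.1, the telescoped product $g_j = U(\tau_j, \tau_{j-1}) M_{j-1} \cdots M_0 U(\tau_1, 0) f$ is bounded by $e^{CT}\exp\!\bigl(CT\,\omega'(|\Delta|)\bigr)\|f\|_{a+2}$, uniformly in $\Delta$ and $j$.

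Combining (i) and (ii) gives $\|r_\Delta(t)\|_a \leq C\hbar\, \omega'(|\Delta|)\, \|f\|_{a+2}$. Duhamel's formula applied to $w(t) := v_\Delta(t) - U(t, 0)f$, which solves $i\hbar\, \partial_t w = H(t)w + r_\Delta$ with $w(0) = 0$, then yields
$$\|w(t)\|_a \leq \frac{1}{\hbar} \int_0^t \bigl\|U(t, s)\bigr\|_{(B^a)^l \to (B^a)^l}\, \bigl\|r_\Delta(s)\bigr\|_a\, ds \leq C T\, \omega'(|\Delta|)\, \|f\|_{a+2},$$
which tends to zero uniformly in $t \in [0, T]$. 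To upgrade from $f \in (B^{a+2})^l$ to general $f \in (B^a)^l$, I would repeat (i) in the $(B^a)^l$-norm to obtain a uniform operator bound on the left-hand side of (3.5), and pass to the limit by a standard $3\varepsilon$-density argument. The main technical obstacle is estimate (i): since the multiplication operators $M_k$ do not enjoy a semigroup structure in time, the Leibniz expansion of $\partial_x^\alpha M_k$ must be bounded term by term, exploiting both the smallness of $\omega_k$ and the polynomial growth of the derivatives of $W_{\ds}$ from (2.13), so as to avoid a geometric blow-up as the number $\nu$ of subdivision intervals grows.
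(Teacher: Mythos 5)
Your approach is correct in spirit and takes a genuinely different route from the paper. You form a continuous interpolation $v_\Delta(t)$ on each cell, differentiate to obtain a perturbed Schr\"odinger equation whose remainder carries the vanishing prefactor $\omega'(t-\tau_j)\le\omega'(|\Delta|)$, and then integrate by Duhamel; the paper instead factors $\omega(\rho_j)=\rho_j\omega_1(\rho_j)$ (so $\omega_1(\rho_j)\le\omega'(\rho_j)\to 0$), splits each cell $[\tau_j,\tau_{j+1}]$ at the intermediate node $\kappa_j$, defines $\Utlw$, $\Utrw$ (see \eqref{6.8}), compares them cell-by-cell to $U$ via Proposition 6.4 (losing $6$ derivatives), and telescopes using the decomposition \eqref{6.19}. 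Both hinge on the same fact that $\omega'(0)=0$ makes the weight decay faster than the cell length; yours is conceptually cleaner and loses only $2$ derivatives.

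That said, there are two real gaps. First, your reduction to $\kappa_j=\tau_j$ ``after relabeling'' does not work: in the statement the weight increment is $\omega(\tau_{j+1}-\tau_j)$, living on the $\tau$-mesh, while the free propagation runs over $[\kappa_{j-1},\kappa_j]$, living on the $\kappa$-mesh, and $\kappa_{j+1}-\kappa_j$ can be arbitrarily small or even zero (e.g.\ $\kappa_j=\tau_{j+1}=\kappa_{j+1}$), so a naive continuous interpolation on the $\kappa$-mesh would have an unbounded remainder prefactor. The paper's $\omega=\rho\,\omega_1$ factorization combined with the split of $e^{-\rho_j\omega_1(\rho_j)W_{\ds}(\kappa'_j)}$ into $e^{-(\tau_{j+1}-\kappa_j)\omega_1 W_{\ds}}e^{-(\kappa_j-\tau_j)\omega_1 W_{\ds}}$ (see \eqref{6.19}) is precisely what circumvents this. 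Second, estimate (ii) is not ``immediate from \eqref{2.13} and \eqref{2.11}'': bounding $\|\partial_x^\alpha(W_{\ds}e^{-\omega W_{\ds}}h)\|$ requires controlling $\partial_x^\gamma e^{-\omega W_{\ds}}$, which generates $\omega^k\langle x\rangle^k$ per iterated derivative from \eqref{5.20}; the estimate does close with loss of two derivatives after careful Leibniz bookkeeping with $\omega\le 1$, but it takes an argument of the kind carried out in Proposition 5.3 and Lemma 6.1 rather than being a one-liner. A minor slip: $U(t,s)$ here solves $i\hbar\partial_t u=(H(t)+\hbar H_{\ds}(t,x))u$, so your differentiated equation should read $i\hbar\partial_t v_\Delta = \bigl(H(t)+\hbar H_{\ds}(t,x)\bigr)v_\Delta + r_\Delta$; this does not affect the estimates.
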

%
%%%%%%%%%%%%%%%%
\begin{thm} \label{thm 3.5}
Suppose the same assumptions as in Theorem 2.3. Let $\Delta'$ and $\kappa_j\ (\jdots)$ be as in Theorem 3.2 and  $\omega(\varrho)$ the function introduced in Theorem 3.4.
   Let $f \in (B^a)^l\ (a = 0,1,2,\dots)$. Then, there exists $\left<\prod_{j=0}^{\nu-1}\exp\bigl\{-\omega(\tau_{j+1}-\tau_j)W_{\ds}(\kappa_j,q(\kappa_j))\bigr\}\right>f \in C^0_t([\kappa_{\nu-1},T];(B^a)^l)$   and  we have
\begin{align} \label{3.6}
& \lim_{|\Delta'|\to 0}\left<\prod_{j=0}^{\nu-1}\exp\Bigl\{-\omega(\tau_{j+1}-\tau_j)W_{\ds}(\kappa_j,q(\kappa_j))\Bigr\}\right>f \notag \\
& = \int_{\Gamma(t,x)} e^{i\hbar^{-1}S(t,0;q)}\mathcal{F}(t,0;q)f(q(0)) {\cal D}q
\end{align}
in $(B^a)^l$ uniformly in $t \in [0,T]$, where $\mathcal{F}(t,0;q)$ denotes $\mathcal{F}_w(t,0;q)$  with $W_{\ds} = 0$.
\end{thm}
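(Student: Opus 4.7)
The strategy is the path-integral analogue of Theorem 3.4: use Theorem 2.5 to identify the bracket functional on the left-hand side of \eqref{3.6} with an operator product of propagators $U(\cdot,\cdot)$ interleaved with multiplication operators, let $|\Delta'|\to 0$ by Theorem 3.4, and identify the limit $U(t,0)f$ with the unweighted Feynman path integral via Theorem 2.3 (applied to the system with $W_{\ds}\equiv 0$). This parallels the way Theorem 3.3 is obtained from Theorem 3.1 in this paper.

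Fix a subdivision $\Delta'$ and, for $j=1,\dots,\nu$, set $t_j=\kappa_{j-1}$ and
\[
Z_j(x):=\exp\bigl\{-\omega(\tau_j-\tau_{j-1})W_{\ds}(\kappa_{j-1},x)\bigr\}.
\]
I first check that each $Z_j$ meets \eqref{2.19} with $\dM_j=0$. Assumption 2.D gives $W_{\ds}\geq -C_W I$, hence $\Vert Z_j(x)\Vert_{\bC^l}\leq e^{\omega_j C_W}$ with $\omega_j:=\omega(\tau_j-\tau_{j-1})$. Higher derivatives are controlled by iterating the Duhamel-type identity
\[
\partial_{x_i}e^{-\omega_j W_{\ds}(\kappa_{j-1},x)}=-\omega_j\int_0^1 e^{-s\omega_j W_{\ds}}\,\partial_{x_i}W_{\ds}(\kappa_{j-1},x)\,e^{-(1-s)\omega_j W_{\ds}}\,ds
\]
together with the Leibniz rule, writing $\partial_x^{\alpha}Z_j$ as a finite sum of products of matrix factors $e^{-s\omega_j W_{\ds}}$ and derivatives $\partial_x^{\beta}W_{\ds}$. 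Bounding the latter by $C_\beta(1+w(\kappa_{j-1},x))$ via \eqref{2.12} and using that the scalar function $(1+w)^k e^{-\omega_j w}$ is bounded uniformly in $x$ for each fixed $\omega_j>0$ yields $\Vert\partial_x^{\alpha}Z_j(x)\Vert_{\bC^l}\leq C_{\alpha,\Delta'}$, so \eqref{2.19} holds with $\dM_j=0$. Theorem 2.5, applied to this data with the outer weight set to zero (i.e.\ with $\mathcal{F}$ in place of $\mathcal{F}_w$ and $U$ in place of $U_w$; the hypotheses are trivially satisfied by $W_{\ds}\equiv 0$), then produces the identity
\[
\Bigl<\prod_{j=0}^{\nu-1}e^{-\omega(\tau_{j+1}-\tau_j)W_{\ds}(\kappa_j,q(\kappa_j))}\Bigr>f=U(t,\kappa_{\nu-1})Z_{\nu}(\cdot)U(\kappa_{\nu-1},\kappa_{\nu-2})\cdots Z_1(\cdot)U(\kappa_0,0)f,
\]
which establishes both the existence of the bracket functional and its membership in $C^0_t([\kappa_{\nu-1},T];(B^a)^l)$.

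The right-hand side of the display above is exactly the operator product on the left-hand side of \eqref{3.5} with $\kappa'_j=\kappa_j$; Theorem 3.4 (whose hypotheses are implied by those of Theorem 2.3, since \eqref{2.11} with $w\geq 0$ implies $W_{\ds}\geq -C_W I$ and \eqref{2.17} implies \eqref{2.14}) therefore forces it to converge to $U(t,0)f$ in $(B^a)^l$ uniformly in $t\in[0,T]$ as $|\Delta'|\to 0$. Applying Theorem 2.3 to the system with $W_{\ds}\equiv 0$ gives $U(t,0)f=\int_{\Gamma(t,x)}e^{i\hbar^{-1}S(t,0;q)}\mathcal{F}(t,0;q)f(q(0))\mathcal{D}q$, and \eqref{3.6} follows. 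The main delicate point I foresee is the verification step: the constants $C_{\alpha,\Delta'}$ can blow up like a negative power of $\omega_j$ as some subinterval of $\Delta'$ shrinks. This is harmless, however, because Theorem 2.5 is only invoked for each fixed $\Delta'$; the uniform-in-$\Delta'$ control in the outer limit is supplied entirely by Theorem 3.4, where the requisite energy-method estimates have already been carried out.
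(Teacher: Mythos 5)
Your proposal is correct and follows essentially the same route as the paper: identify the bracket functional with the interleaved operator product via Theorem 2.5 (for each fixed $\Delta'$), pass to the $|\Delta'|\to 0$ limit via \eqref{3.5} in Theorem 3.4, and identify $U(t,0)f$ with the Feynman path integral via (3) of Theorem 2.3. The only cosmetic difference is in verifying \eqref{2.19} for the $Z_j$'s: the paper invokes Lemma 6.1 (giving bounds on $\partial_x^\alpha e^{-\omega(\varrho)W_{\ds}}$ that are uniform in $\varrho$, via the cancellation $\omega_j(1+w)e^{-\omega_j w}\leq C$), whereas you settle for $\Delta'$-dependent constants and correctly observe this is harmless since Theorem 2.5 is applied at each fixed $\Delta'$ while the uniform control in the outer limit comes entirely from Theorem 3.4.
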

%%%%%%%%%%%%%
	The proofs of Theorems 3.1 - 3.5 will be given in \S 5 and \S 6. Besides the example in Remark 2.2, we will give  examples of $W_{\ds}(t,x)$ satisfying Assumption 2.D
	below. For the sake of simplicity we consider only the diagonal form \eqref{2.22}.
	\begin{exmp} \label{exmp 3.1}
	Let $w_{jj}(t,x)\ (j = 1,2,\dots,l)$ be  functions bounded below satisfying 
\begin{equation*} 
|\partial_x^{\alpha}w_{jj}(t,x)| \leq C_{\alpha}, \quad |\alpha| \geq 1.
\end{equation*}
Then $W_{\ds}(t,x)$ defined by \eqref{2.22} satisfies Assumption 2.D with $w(t,x) = 0$. This $W_{\ds}(t,x)$ will be used in 
Application 4 of \S 4.
	\end{exmp}
%%%%%%%%%%%%%
	We can easily prove the following.
	\begin{lem}
	Let us define
\begin{equation} \label{3.7}
f(t) = \begin{cases} e^{-1/t}, & t > 0,\\
0, & t \leq 0.
\end{cases}
\end{equation}
Then $f(t)$ belongs to $C^{\infty}(\bR)$ and satisfies
\begin{align} \label{3.8}
& |f(t)| \leq 1 < \infty\ \  (t \in \bR),\quad |f^{(j)}(t)| \leq C_{j}t^{-1-j}\ \ (t \geq 1), \notag \\
& |f^{(j)}(t)| \leq C_{j} f(t)\ \ (t \geq 1)
\end{align}
for $j = 1,2,\dots$ with constants $C_{j} \geq 0$, where $f^{(j)}(t) = d^{j}f/dt^{j}$.
	\end{lem}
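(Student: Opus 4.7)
The plan is to derive an explicit formula for $f^{(j)}$ on $(0,\infty)$ by induction, then use the super-polynomial decay of $e^{-1/t}$ as $t \to 0^+$ to obtain smoothness on $\bR$, and finally read off the two pointwise bounds for $t \geq 1$ directly from that formula.

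First I would show by induction on $j$ that for $t > 0$ one has $f^{(j)}(t) = P_j(1/t) e^{-1/t}$, where $P_j$ is a polynomial determined by the recursion $P_0(x) = 1$ and $P_{j+1}(x) = x^2\bigl(P_j(x) - P_j'(x)\bigr)$. A second induction then shows that for $j \geq 1$ the polynomial $P_j$ has $x^{j+1}$ as a factor, i.e.\ $P_j(x) = x^{j+1} Q_j(x)$ for some polynomial $Q_j$; indeed, substituting this ansatz into the recursion yields $P_{j+1}(x) = x^{j+2}\bigl[x Q_j(x) - (j+1)Q_j(x) - x Q_j'(x)\bigr]$. Establishing this divisibility (rather than merely $x^2 \mid P_j$) is the one mildly subtle step, and it is precisely what permits the sharp $t^{-1-j}$ decay rate in the first estimate of \eqref{3.8}.

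To conclude $f \in C^{\infty}(\bR)$ it is enough to prove $f^{(j)}(0) = 0$ for every $j$, which I would again do by induction. Given $f^{(j)}(t) = P_j(1/t) e^{-1/t}$ for $t > 0$, the exponential dominates any polynomial in $1/t$, so $\lim_{t \to 0^+} f^{(j)}(t) = 0$, and this agrees with the left-hand limit since $f$ vanishes identically for $t \leq 0$. Differentiating one step further via the difference quotient $f^{(j+1)}(0) = \lim_{t \to 0} f^{(j)}(t)/t$, the same super-polynomial decay produces $f^{(j+1)}(0) = 0$, closing the induction.

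Finally, the bound $|f(t)| \leq 1$ is immediate because $e^{-1/t} \leq 1$ for $t > 0$ while $f \equiv 0$ otherwise. For $t \geq 1$ one has $0 < 1/t \leq 1$, so $|Q_j(1/t)| \leq C_j$ by continuity of $Q_j$ on $[0,1]$, hence $|P_j(1/t)| \leq C_j t^{-j-1}$; combined with $e^{-1/t} \leq 1$ this yields $|f^{(j)}(t)| \leq C_j t^{-1-j}$. Since also $f(t) = e^{-1/t} \geq e^{-1}$ when $t \geq 1$, the preceding bound immediately implies $|f^{(j)}(t)| \leq C_j \leq (C_j e)\, f(t)$, which is the last inequality of \eqref{3.8}.
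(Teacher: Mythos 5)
Your proof is correct and complete. The paper itself does not spell out a proof of Lemma 3.3 (it only says ``We can easily prove the following''), so there is no alternative argument to compare against; the route you take — the recursion $P_{j+1}(x) = x^2\bigl(P_j(x) - P_j'(x)\bigr)$, the $C^{\infty}$-smoothness at $0$ via difference quotients, and the sharpened divisibility $x^{j+1}\mid P_j$ — is the natural and standard one. The one point worth emphasizing is exactly the one you flagged: the crude observation $x^2 \mid P_j$ only yields $|f^{(j)}(t)|\leq C_j t^{-2}$, whereas the lemma asserts the stronger rate $t^{-1-j}$, and your second induction (with the explicit recursion $Q_{j+1}(x) = xQ_j(x) - (j+1)Q_j(x) - xQ_j'(x)$ confirming $Q_{j+1}$ is again a polynomial) supplies precisely that. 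The remaining steps — $e^{-1/t}\leq 1$ on $(0,\infty)$, boundedness of $Q_j$ on $[0,1]$, and $f(t)\geq e^{-1}$ for $t\geq 1$ giving $|f^{(j)}(t)|\leq C_j \leq (C_j e)f(t)$ — all check out.
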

%%%%%%%%%%%%%
%%%%%
\begin{exmp} \label{exmp 3.2}
Let $f(t)$ be the function defined by \eqref{3.7}.  For arbitrary points $a^{(i)} \in \bR^{d}$ and arbitrary constants $b_{i} \geq 0\ (i = 1,2,\dots,l)$
we set 
\begin{equation} \label{3.9}
 w_{ii}(x) = \left|x - a^{(i)}\right|^{2}f\left(\left|x-a^{(i)}\right|-b_{i}\right), \quad  w(x) = \frac{1}{2}|x|^{2}f(|x|)
\end{equation}
and define $W_{\ds}(x)$ by \eqref{2.22}.  Then this $W_{\ds}(x)$ satisfies Assumption 2.D as shown below.  We  note 
\begin{equation} \label{3.10}
\exp \bigl\{-w_{ii}(x)\bigr\}= \begin{cases}  1, & |x - a^{(i)}| \leq b_{i},\\
\exp \bigl\{-(b_{i} + \lambda)^{2}e^{-1/\lambda}\bigr\}, & |x - a^{(i)}|= b_{i} + \lambda
\end{cases}
\end{equation}
for $\lambda > 0$ from \eqref{3.7} and \eqref{3.9}.  First, we can prove \eqref{2.11} from
\begin{equation} \label{3.11}
\lim_{|x|\to \infty} \frac{w_{ii}(x)}{w(x)} = 2.
\end{equation}
We write $w_{ii}(x)$ as $g(x) = |x - a|^{2}f(|x-a|-b)$ with $a = a^{(i)}$ and $b = b_{i}$ for a while.  We have
\begin{equation} \label{3.12}
\partial_{x_{j}}g(x) = 2(x_{j}-a_{j})f(|x-a|-b) + |x-a|^{2}\frac{x_{j}-a_{j}}{|x-a|}f'(|x-a|-b),
\end{equation}
which shows \eqref{2.13} with $|\alpha| = 1$ from the first and the second inequalities of \eqref{3.8}.  In the same way we can prove
\eqref{2.13} generally.  Applying the third inequality of \eqref{3.8} to \eqref{3.12}, we have 
\begin{align*} 
|\partial_{x_{j}}g(x)|&\leq  2\frac{1}{|x-a|}|x - a|^{2}f(|x-a|-b) + C_{1}|x - a|^{2}f(|x-a|-b) \\
& \leq  C_{2}g(x), \quad |x - a| \geq b + 1,
\end{align*}
which shows \eqref{2.12} with $|\alpha| = 1$ from \eqref{3.11}.  In the same way we can prove
\eqref{2.12} generally.  The expression \eqref{3.10} is clear from \eqref{3.7}.  $W_{\ds}(x)$ in this remark will be used in Application 3 of \S 4.
\end{exmp}
\section{Applications}
\
{\bf Application 1.} Continuous quantum measurements of  the positions of spin components.
\par
	Let $W_{\ds}(x) = W_{\ds}(t,x)$ be defined in Remark 2.2.  We proved that this $\Wsx$ satisfies Assumption 2.D. 
	It is clear that this $\Wsx$ satisfies \eqref{3.1}. Consequently we can apply Theorems 3.1 - 3.5 to the quantum systems with this $\Wstx$.
	\\
	{\bf Application 2.} Multi-split experiments.
	\par
	We first consider a thin wall with a small hole in it.  Let
	\begin{equation*}
	\mathfrak{W} = \left\{(x',x_d) \in \bR^d; -\delta/2 \leq x_d \leq \delta/2\right\}, \ \delta > 0
	\end{equation*}
be a wall with thickness of $\delta$ and 
	\begin{equation*}
	\mathfrak{H} =  \left\{(x',x_d) \in \mathfrak{W}; |x' - a'| \leq \delta'\right\}, \ \delta' > 0
	\end{equation*}
a hole with a radius of $\delta'$ in $\dW$, where $a' \in \bR^{d-1}$. We suppose that the paths which a particle follows are restricted in $\dH \cup \dW^c$,
where $\dW^c$ is the complement of $\dW$. We write $\Omega(x_{d})$ on $\bR$ and $\Omega(x')$ on $\bR^{d-1}$ defined by \eqref{1.9} as $\Omega_{1}(x_{d})$ and 
$\Omega_{d-1}(x')$ respectively.
Then, the weight function corresponding to $\dw_a(x)$ in \eqref{1.6} is given by
\begin{equation}\label{4.1}
\dw(x) = \begin{cases} \Omega_{d-1}((x'-a')/\delta'), & x \in \dW, \\
                           1, & x \notin \dW.
	\end{cases}
\end{equation}
 We can write \eqref{4.1} formally as 
\begin{equation}\label{4.2}
\dw(x) = \exp\bigl[\Omega_1(x_d/\delta)\log \Omega_{d-1}((x'-a')/\delta')\bigr].
\end{equation}
Replacing $\Omega_1(x_d/\delta)$ and $\Omega_{d-1}((x'-a')/\delta')$ in \eqref{4.2} with $k(x_d) \in \Cspace(\bR)$ and $\exp\{-h_1(x'-a')\} \in \Cspace(\bR^{d-1})$
respectively, we approximate \eqref{4.1} by
\begin{equation}\label{4.3}
\dw(x) = \exp\{-W(x)\}, \quad W(x) = h_1(x'-a')k(x_d)
\end{equation}
in the same way as we did \eqref{1.5} by \eqref{1.11}.
\par
	Next, we consider the case that the above wall $\mathfrak{W}$ has $N$ holes
	\begin{equation*}
	\mathfrak{H}_j =  \left\{(x',x_d) \in \mathfrak{W}; |x' - a'^{(j)}| \leq \delta'\right\}, j = 1,2,\dots,N,
	\end{equation*}
where $a'^{(j)} \in \bR^{d-1}$ and $|a'^{(i)}-a'^{(j)}| > \delta'\ (i \not= j)$.  Then the expressions corresponding to \eqref{4.1}, \eqref{4.2} and \eqref{4.3} are written as 
\begin{equation*}
\dw(x) = \begin{cases} \sum_{j=1}^N \Omega_{d-1}((x'-a'^{(j)})/\delta'), & x \in \dW, \\
                           1, & x \notin \dW,
	\end{cases}
\end{equation*}
\begin{equation*}
\dw(x) = \exp\Biggl[\Omega_1(x_d/\delta)\log \Biggl\{\sum_{j=1}^N \Omega_{d-1}((x'-a'^{(j)})/\delta')\Biggr\}\Biggr]
\end{equation*}
and 
\begin{align}\label{4.4}
& \dw(x) = \exp\{-W(x)\},\notag \\ 
&  W(x) = \log N- k(x_d)\biggl[\log \biggl\{\sum_{j=1}^N\Bigl(\exp -h_1(x'-a'^{(j)})\Bigr) \biggr\}\biggr],
\end{align}
respectively, where we added the constant $\log N$ to simplify the arguments below.
%
%%%%%
	\begin{thm} \label{thm 4.1}
	We take $W(x)$ defined by \eqref{4.4}.  Let $h_1(x')$ and $k(x_d)$ be functions such that
\begin{equation}\label{4.5}
0 \leq h_1(x'), \quad 0 \leq k(x_d) \leq 1, 
\end{equation}
\begin{equation}\label{4.6}
|\partial_{x_d}^{\alpha}k(x_d)| \leq C_{\alpha} < \infty
\end{equation}
for $|\alpha| \geq 1$.  When $N = 1$, we assume 
\begin{equation}\label{4.7}
|\partial_{x'}^{\alpha} h_1(x')| \leq C_{\alpha}<x'>
\end{equation}
for all $\alpha$. When $N \geq 2$, we assume \eqref{4.7} for $|\alpha| \leq 1$ and 
\begin{equation}\label{4.8}
|\partial_{x'}^{\alpha} h_1(x')| \leq C_{\alpha} < \infty
\end{equation}
for $|\alpha| \geq 2$. Then we have
\begin{equation}\label{4.9}
W(x) \geq 0,
\end{equation}
\begin{align}\label{4.10}
|\partial_{x'}^{\alpha}\partial_{x_d}^{\beta} W(x)| & \leq C_{\alpha}<x'> |\partial_{x_d}^{\beta} k(x_d)| \notag \\
& \leq C_{\alpha\beta}<x'>,\quad |\alpha + \beta| \geq 1.
\end{align}
\end{thm}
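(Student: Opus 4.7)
Set $L(x') := \log\sum_{j=1}^N e^{-h_1(x'-a'^{(j)})}$, so that $W(x) = \log N - k(x_d) L(x')$. Both \eqref{4.9} and \eqref{4.10} then reduce to a sign case-analysis together with the single estimate $|\partial_{x'}^\alpha L(x')| \le C_\alpha \langle x'\rangle$ for every multi-index $\alpha$, after which \eqref{4.10} follows by the Leibniz rule combined with \eqref{4.6}. The case $N=1$ is essentially immediate: $W(x) = k(x_d) h_1(x'-a'^{(1)})$ is nonnegative by \eqref{4.5}, and \eqref{4.10} drops out of Leibniz once one notes that for $N=1$ the hypothesis \eqref{4.7} is assumed at $|\alpha|=0$ as well, giving $|h_1(x')|\le C_0\langle x'\rangle$.

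For \eqref{4.9} when $N\ge 2$: since $h_1\ge 0$, every term of $\sum_j e^{-h_1(\cdot-a'^{(j)})}$ lies in $(0,1]$, so $L(x')\le \log N$. If $L(x')\ge 0$ then by $0\le k\le 1$ we have $k(x_d) L(x')\le L(x')\le \log N$ and thus $W(x)\ge 0$; if $L(x')<0$ then $-k(x_d) L(x')\ge 0$ and $W(x)\ge \log N\ge 0$.

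The real work, and the main obstacle, is the derivative estimate on $L$ when $N\ge 2$. A naive expansion of $\partial_{x'}^\alpha \log\sum_j e^{-h_1(\cdot-a'^{(j)})}$ produces products of up to $|\alpha|$ factors of $\partial h_1$, each of size $O(\langle x'\rangle)$ by \eqref{4.7}, giving only the unacceptable bound $\langle x'\rangle^{|\alpha|}$. To recover linear growth I factor out the reference exponential: setting $\tilde h_j(x') := h_1(x'-a'^{(j)}) - h_1(x'-a'^{(1)})$ and $F(x') := \sum_j e^{-\tilde h_j(x')}$, one has
\begin{equation*}
L(x') = -h_1(x'-a'^{(1)}) + \log F(x').
\end{equation*}
The mean-value theorem applied to $\partial_m h_1$ on the segment from $x'-a'^{(1)}$ to $x'-a'^{(j)}$, together with \eqref{4.8}, gives $|\partial_{x'}^\gamma \tilde h_j(x')|\le C_\gamma$ for every $|\gamma|\ge 1$ (for $|\gamma|\ge 2$ one may simply bound each summand of $\partial^\gamma \tilde h_j$ individually). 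An induction on $|\alpha|$ then yields $|\partial_{x'}^\alpha \log F|\le C_\alpha$ for $|\alpha|\ge 1$: starting from $\partial_m \log F = -\sum_j \pi_j \partial_m \tilde h_j$ with probability weights $\pi_j := e^{-\tilde h_j}/F\in[0,1]$, one controls $\partial^\beta \pi_j$ by applying Fa\`a di Bruno to $\pi_j=\exp(\log\pi_j)$, obtaining $|\partial^\beta \pi_j|\le \pi_j\cdot P_\beta\le C_\beta$, where $P_\beta$ is a polynomial in derivatives of $\log\pi_j = -\tilde h_j - \log F$ of orders between $1$ and $|\beta|$, all bounded by the $\tilde h_j$ step or by the inductive hypothesis. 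Combining $|\partial_{x'}^\alpha \log F|\le C_\alpha$ with $|\partial_{x'}^\alpha h_1(x'-a'^{(1)})|\le C_\alpha\langle x'\rangle$ (from \eqref{4.7} when $|\alpha|\le 1$ and from \eqref{4.8} when $|\alpha|\ge 2$) delivers $|\partial_{x'}^\alpha L(x')|\le C_\alpha\langle x'\rangle$ for all $\alpha$, and Leibniz applied to $W=\log N - k(x_d) L(x')$ together with \eqref{4.6} finishes \eqref{4.10}.
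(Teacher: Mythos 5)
Your proposal is correct and follows essentially the same route as the paper: both factor out the reference term $h_1(x'-a'^{(1)})$ (compare your $L=-h_1(\cdot-a'^{(1)})+\log F$ with the paper's \eqref{4.15} and \eqref{4.20}), bound derivatives of the differences $\tilde h_j$ via the mean-value theorem and \eqref{4.8}, and conclude by the Leibniz rule; your explicit $\pi_j$/Fa\`a di Bruno induction just spells out what the paper's \eqref{4.17}--\eqref{4.18} leave as ``in the same way.'' One small wording slip: the line ``combining $|\partial^{\alpha}_{x'}\log F|\le C_\alpha$ \dots\ for all $\alpha$'' cannot invoke that bound at $\alpha=0$ (where $\log F$ is only $O(\langle x'\rangle)$, not $O(1)$), so the $\alpha=0$, $\beta\neq 0$ case of \eqref{4.10} needs the separate, easy estimate $|L(x')|\le C\langle x'\rangle$, obtainable from $-\min_j h_1(x'-a'^{(j)})\le L(x')\le\log N$ together with \eqref{4.7}.
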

	From Theorem 4.1 we can easily prove the following.
%%%%%%%%%
\begin{cor} \label{cor 4.2}
For arbitrary points $(a'^{(1i)},a'^{(2i)},\dots,a'^{(Ni)}) \in \bR^{N(d-1)}\ (i = 1,2,\dots,l)$ we determine $w_{ii}(x)$ by \eqref{4.4}
where each $a'^{(j)}$ is replaced with $a'^{(ji)}$. We define $\Wsx$  by \eqref{2.22}. Then, under the assumptions in Theorem 4.1 this $\Wsx$ satisfies  \eqref{2.11} with $w(t,x) = 0$ and \eqref{2.13}. Hence we can apply Theorems 3.1 and 3.4 to the quantum systems with this $\Wsx$.
\end{cor}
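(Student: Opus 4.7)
The plan is to apply Theorem~4.1 directly to each diagonal entry of $\Wsx$ and then transfer scalar bounds to matrix bounds. Since $\Wsx$ is diagonal with real-valued entries $w_{ii}(x)$, each of which is an instance of $W(x)$ from \eqref{4.4} associated to the point collection $(a'^{(1i)},\dots,a'^{(Ni)})$, and since Theorem~4.1 produces bounds uniformly in the choice of points, the corollary should follow essentially by bookkeeping.

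First I would verify \eqref{2.11} with $w(t,x)=0$. By \eqref{4.9} each $w_{ii}(x) \geq 0$, so viewed as a Hermitian matrix $\Wsx$ has all nonnegative eigenvalues, hence $\Wsx \geq 0 \cdot I$, which is \eqref{2.11} with $w=0$ and $C_W=0$. Next I would verify \eqref{2.13}. For a diagonal matrix the $\bC^l$-operator norm equals the maximum absolute value of the diagonal entries, and since $\partial_x^{\alpha} \Wsx$ remains diagonal, the same identity applies to derivatives. Splitting $x=(x',x_d) \in \bR^{d-1}\times \bR$ and applying \eqref{4.10} entry by entry gives $|\partial_x^{\alpha} w_{ii}(x)| \leq C_{\alpha,i}\langle x'\rangle \leq C_{\alpha,i}\langle x\rangle$ for every $|\alpha| \geq 1$, whence
\begin{equation*}
\Vert \partial_x^{\alpha}\Wsx \Vert_{\bC^l} \leq \bigl(\max_{1\leq i \leq l} C_{\alpha,i}\bigr) \langle x \rangle,
\end{equation*}
which is \eqref{2.13}.

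To invoke Theorem~3.1 the only remaining condition is \eqref{3.1}. Since this $\Wsx$ has no $t$-dependence, $W_{\ds}(t,x)-W_{\ds}(s,x)\equiv 0$, and one may take $\sigma(\varrho)=\varrho$, which trivially satisfies $0 \leq \varrho \leq \sigma(\varrho)$ and $\sigma(\varrho)\to 0$ as $\varrho\to 0+$. For Theorem~3.4 no such condition is required. The remaining hypotheses of Theorems~3.1 and 3.4 (namely Assumption 2.C on $(V,A)$ and condition \eqref{2.14} on $H_{\ds}$) concern the underlying quantum system rather than $\Wsx$ itself and are presupposed. Since the substantive analytic work was already carried out in Theorem~4.1, there is no real obstacle in the corollary; the only small point of care is the elementary inequality $\langle x'\rangle \leq \langle x \rangle$ when transferring the $\langle x'\rangle$-type bounds of \eqref{4.10} to the $\langle x\rangle$-type bounds demanded by \eqref{2.13}.
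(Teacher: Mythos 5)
Your proof is correct and is essentially the argument the paper has in mind when it says the corollary follows "easily" from Theorem 4.1: you reduce the matrix inequalities \eqref{2.11} and \eqref{2.13} to the scalar bounds \eqref{4.9} and \eqref{4.10} via the diagonal structure, observe $\langle x'\rangle \leq \langle x\rangle$, and note that \eqref{3.1} is trivial because this $\Wsx$ is time-independent. No gaps.
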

\begin{thm}\label{thm 4.3}
For $h_1(x')$ the function  in Theorem 4.1 we also assume
\begin{equation}\label{4.11}
\liminf_{|x'|\to \infty}\,\frac{h_1(x')}{|x'|} > 0.
\end{equation}
Let $h_2(x_d)$ be a function such that 
\begin{equation}\label{4.12}
0 \leq h_2(x_d),\quad |\partial_{x_d}^{\alpha}h_2(x_d)| \leq C_{\alpha} < \infty,\ |\alpha| \geq 1.
\end{equation}
We define $W(x)$ by \eqref{4.4}, replacing $k(x_d)$ with $\exp\{-h_2(x_d)\}$.  Then this $W(x)$ satisfies
\begin{equation}\label{4.13}
<x'>e^{-h_2(x_d)} \leq C^{*}(1+W(x))
\end{equation}
with a constant $C^{*} >  0$ and
\begin{equation}\label{4.14}
|\partial_{x}^{\alpha}W(x)|  \leq C_{\alpha}<x'>e^{-h_2(x_d)}  \leq C_{\alpha}<x'>,\quad  |\alpha| \geq 1.
\end{equation}
\end{thm}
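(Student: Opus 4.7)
The plan is to view $k(x_d)=e^{-h_2(x_d)}$ as a specific instance of the weight $k$ in Theorem 4.1 and to upgrade the resulting bounds by the pointwise factor $e^{-h_2(x_d)}$ coming from the chain rule. The quantitative estimate driving everything is
\[
|\partial_{x_d}^{\beta}e^{-h_2(x_d)}|\leq C_{\beta}\,e^{-h_2(x_d)}\qquad(\beta\geq 0),
\]
which I would establish by induction on $\beta$ from $\partial_{x_d}e^{-h_2}=-h_2'(x_d)e^{-h_2}$ together with the boundedness of $\partial_{x_d}^{\alpha}h_2$ for $|\alpha|\geq 1$ (a Fa\`a di Bruno expansion). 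In particular $e^{-h_2(x_d)}$ satisfies the hypotheses \eqref{4.5}--\eqref{4.6} imposed on $k$ in Theorem 4.1, so the derivative bounds of that theorem already apply to the current $W(x)$.

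For the derivative bound \eqref{4.14}, write
\[
W(x)=\log N-e^{-h_2(x_d)}F(x'),\qquad F(x'):=\log\Bigl\{\sum_{j=1}^{N}\exp\bigl(-h_1(x'-a'^{(j)})\bigr)\Bigr\},
\]
and split a multi-index as $\alpha=(\alpha',\beta)$ with $\alpha'$ acting on $x'$ and $\beta$ on $x_d$. When $|\alpha'|\geq 1$, Leibniz gives $\partial_{x'}^{\alpha'}\partial_{x_d}^{\beta}W=-(\partial_{x_d}^{\beta}e^{-h_2})\,\partial_{x'}^{\alpha'}F$; applying Theorem 4.1 with the constant choice $k\equiv 1$ yields $|\partial_{x'}^{\alpha'}F(x')|\leq C_{\alpha'}\langle x'\rangle$, and the first step absorbs the $x_d$-factor into $C_{\beta}e^{-h_2}$. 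When $|\alpha'|=0$ and $|\beta|\geq 1$, I need $|F(x')|\leq C\langle x'\rangle$: the upper bound $F\leq\log N$ comes from $h_1\geq 0$, while the $\alpha=0$ case of \eqref{4.7} (included in both the $N=1$ and $N\geq 2$ hypotheses of Theorem 4.1) gives $|h_1(x')|\leq C\langle x'\rangle$, whence $F\geq-\max_j h_1(x'-a'^{(j)})\geq-C\langle x'\rangle$. Combining yields \eqref{4.14}.

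The main obstacle is \eqref{4.13}, which is the only place where the new assumption \eqref{4.11} enters. That hypothesis gives constants $c_0>0$ and $R>0$ such that $h_1(y)\geq c_0|y|$ whenever $|y|\geq R$. For $|x'|\geq 2R+2\max_j|a'^{(j)}|$ one has $|x'-a'^{(j)}|\geq|x'|/2\geq R$ for every $j$, hence $\exp(-h_1(x'-a'^{(j)}))\leq\exp(-c_0|x'|/4)$; summing and taking the logarithm gives $F(x')\leq\log N-c_0|x'|/4$. Substituting, and using $\log N\geq 0$ together with $1-e^{-h_2(x_d)}\geq 0$,
\[
W(x)\geq\log N\bigl(1-e^{-h_2(x_d)}\bigr)+\frac{c_0}{4}|x'|e^{-h_2(x_d)}\geq\frac{c_0}{4}|x'|e^{-h_2(x_d)},
\]
so in this outer regime $\langle x'\rangle e^{-h_2(x_d)}\leq(1+|x'|)e^{-h_2(x_d)}\leq 1+(4/c_0)W(x)$. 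On the bounded complementary region $\langle x'\rangle e^{-h_2(x_d)}$ is uniformly bounded while $W\geq 0$, so enlarging the constant $C^{*}$ absorbs this case and delivers \eqref{4.13}.
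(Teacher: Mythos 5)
Your proposal is correct and takes essentially the same route as the paper: verify that $e^{-h_2}$ satisfies the hypotheses of Theorem~4.1, upgrade the derivative bound \eqref{4.10} to \eqref{4.14} via $|\partial_{x_d}^{\beta}e^{-h_2}|\leq C_{\beta}e^{-h_2}$, and prove \eqref{4.13} by using \eqref{4.11} to make the $x'$-factor in $W$ grow linearly for large $|x'|$. The paper's version of the lower bound is organized as $W(x)\geq\min_j h_1(x'-a'^{(j)})\,e^{-h_2(x_d)}$ (the estimate \eqref{4.24}), where you instead bound $F(x')=\log\sum_j e^{-h_1(x'-a'^{(j)})}$ directly from above; these are the same estimate in different bookkeeping.
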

%%%%%%%%%%%
%%%%%%%%%
%
	From Theorem 4.3 we can easily prove the following.
	\begin{cor} \label{cor 4.4}
	Let $\Wsx$ be the matrix-valued function determined in Corollary 4.2, where we replace $k(x_d)$ with $\exp\{-h_2(x_d)\}$
	as in Theorem 4.3. Then, under the assumptions of Theorem 4.3 this $\Wsx$ satisfies  Assumption 2.D, where
	$w(t,x) = (C^{*})^{-1}<x'>\exp\{-h_2(x_d)\}$ with another constant $C^{*} > 0$. Consequently we can apply Theorems 3.1 - 3.5 to the quantum systems with this $\Wsx$.
	\end{cor}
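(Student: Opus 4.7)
The plan is to invoke Theorem 4.3 on each diagonal entry of $\Wsx$ and then upgrade the resulting scalar estimates to matrix estimates via the elementary identity $\Vert D\Vert_{\bC^l} = \max_{i}|d_{ii}|$, valid for diagonal matrices $D \in \Mlc$. Since $W_{\ds}$ here is real and diagonal, Hermiticity is automatic, and no $t$-dependence appears, so several of the required conditions simplify.

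First, I set $w(x) := (C^{*})^{-1}<x'>e^{-h_2(x_d)}$ using the constant $C^{*}$ furnished by \eqref{4.13}; this function is non-negative and continuous on $\domain$. Applied componentwise, \eqref{4.13} gives $w_{ii}(x) \geq w(x) - 1$ for every $i$, which lifts directly to the matrix inequality $W_{\ds}(x) \geq (w(x)-1)I$, i.e.\ \eqref{2.11} with $C_W = 1$. For \eqref{2.13}, the operator-norm identity for diagonal matrices together with \eqref{4.14} yields $\Vert\partial_x^{\alpha}W_{\ds}(x)\Vert_{\bC^l} \leq C_{\alpha}<x'> \leq C_{\alpha}<x>$ for $|\alpha|\geq 1$. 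For \eqref{2.12}, the first estimate in \eqref{4.14} combined with the identity $<x'>e^{-h_2(x_d)} = C^{*}w(x)$ gives $\Vert\partial_x^{\alpha}W_{\ds}(x)\Vert_{\bC^l} \leq C_{\alpha}C^{*}w(x) \leq C'_{\alpha}(1+w(x))$. This completes the verification of Assumption 2.D.

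To deduce the applicability of Theorems 3.1--3.5, it remains to check the hypothesis \eqref{3.1} used in Theorems 3.1--3.5. Since $W_{\ds}$ is independent of $t$, the left-hand side of the first inequality in \eqref{3.1} is identically zero, so $\sigma(\varrho) = \varrho$ meets all the listed conditions. The continuity of $\partial_x^{\alpha}w_{ij}$ in $\domain$ assumed throughout \S 2 is immediate from the $C^{\infty}$ regularity of $h_1$, $h_2$ and of the logarithm/exponential (together with the fact that $\sum_j\exp(-h_1)$ stays positive).

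I do not expect any genuine analytic obstacle: the substantive bounds are already contained in Theorem 4.3, and Corollary 4.4 is essentially the diagonal packaging of those scalar estimates. The only mild subtlety is the freedom in choosing the constant in the definition of $w(x)$---the statement deliberately writes ``another constant $C^{*}$'' to allow a rescaling so that \eqref{2.11}, \eqref{2.12} and \eqref{2.13} hold simultaneously with the same function $w$ (absorbing any mismatched constants into $C_W$ or into the $C_{\alpha}$'s). This is purely a bookkeeping adjustment rather than a real step in the proof.
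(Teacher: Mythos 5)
Your proof is correct and is the approach the paper clearly has in mind (it only remarks "From Theorem 4.3 we can easily prove the following" without further detail): the diagonal structure and the operator-norm identity $\Vert D\Vert_{\bC^l}=\max_i|d_{ii}|$ reduce Assumption 2.D to the scalar estimates \eqref{4.13} and \eqref{4.14} applied to each $w_{ii}$, with the constants $C^{*}_i$, $C_{\alpha,i}$ maximized over $i$ (the "bookkeeping" you flag), and the $t$-independence of $W_{\ds}$ makes \eqref{3.1} trivial.
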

	\begin{exmp} \label{exmp 4.1}
	 Let $h_1(x') \in C^{\infty}(\bR^{d-1})$ and $h_2(x_d) \in C^{\infty}(\bR)$ be functions  satisfying 
\begin{equation*} 
h(y) = \frac{1}{2}|y|^2\ (|y| \leq 1),\ h(y) = |y|\ (|y| \geq 2),\  h(y) \geq \frac{1}{2} \ (|y| \geq 1).
\end{equation*}
Then it is clear that these $h_1(x')$ and $h_2(x_d)$ satisfy the assumptions of Theorem 4.3.

	\end{exmp}
 \begin{proof}[Proof of Theorem 4.1]
	The inequality \eqref{4.9} is clear from \eqref{4.5}.  When $N=1$, we can easily prove \eqref{4.10}  from \eqref{4.3}  at once,
	using \eqref{4.5}-\eqref{4.7}.
Consequently we have only to prove \eqref{4.10} for $N \geq 2$.
\par
	We first consider the case of $N = 2$.  Set $a'^{(1)} = 0$ and $a'^{(2)} = a'$ for the sake of simplicity. 
From \eqref{4.4} we can write
\begin{equation}\label{4.15}
W(x) = \log 2 + \Bigl[h_1(x')-\log\Bigl\{1+e^{h_1(x')-h_1(x'-a')}\Bigr\}\Bigr]k(x_d).
\end{equation}
From 
\begin{equation}\label{4.16}
h_1(x')-h_1(x'-a') = \sum_{j=1}^{d-1}\int_0^1 \frac{\partial h_1}{\partial x_j}(x'-a'+\theta a')d\theta \,a'_j
\end{equation}
and \eqref{4.8}  we have
\begin{equation}\label{4.17}
\left|\partial_{x'}^{\alpha}e^{h_1(x')-h_1(x'-a')}\right| \leq C_{\alpha}e^{h_1(x')-h_1(x'-a')}, \quad |\alpha| \geq 1.
\end{equation}
From \eqref{4.15} we have
\begin{equation}\label{4.18}
\frac{\partial W}{\partial x_j}(x) = \biggl[\frac{\partial h_1}{\partial x_j}(x') -\frac{\partial_{x_j}\,e^{h_1(x')-h_1(x'-a')}}{1+e^{h_1(x')-h_1(x'-a')}}\biggr]k(x_d)
\end{equation}
for $j = 1,2,\dots,d-1$. Consequently, noting \eqref{4.17}, we can prove \eqref{4.10} for $\alpha \not= 0$ and all $\beta$ from the assumptions
\eqref{4.5} - \eqref{4.8}.   Next we will prove \eqref{4.10} for $\alpha = 0$ and $\beta \not= 0$. For a  point $x' \in \bR^{d-1}$ satisfying 
$h_1(x') \leq h_1(x'-a')$ we have
\begin{align*}
|\partial_{x_d}^{\beta}W(x)| & = \left|\Bigl[h_1(x')-\log\Bigl\{1+e^{h_1(x')-h_1(x'-a')}\Bigr\}\Bigr]\partial_{x_d}^{\beta}k(x_d)\right| \\
& \leq \bigl\{h_1(x')+ \log 2\bigr\}|\partial_{x_d}^{\beta}k(x_d)| \leq (1+h_1(x'))|\partial_{x_d}^{\beta}k(x_d)| \\
& = \min \bigl\{1+h_{1}(x'), 1+h_{1}(x'-a')\bigr\}\cdot|\partial_{x_d}^{\beta}k(x_d)|
\end{align*}
from \eqref{4.15}. For a  point $x' \in \bR^{d-1}$ satisfying 
$h_1(x') > h_1(x'-a')$ we have the similar inequalities.  In the end we have
\begin{equation} \label{4.19}
|\partial_{x_d}^{\beta}W(x)|  \leq \min \bigl\{1+h_{1}(x'), 1+h_{1}(x'-a')\bigr\}\cdot|\partial_{x_d}^{\beta}k(x_d)|
\end{equation}
for all $x \in \bR^d$, which shows \eqref{4.10} for $\alpha = 0$.
\par
	We will prove \eqref{4.10}  in the general case of $N = 3,4,\dots$. We have
\begin{align}\label{4.20}
W(x) = \log N + & \Bigl[h_1(x'-a'^{(1)})-\log\Bigl\{1+\sum_{j=2}^Ne^{h_1(x'-a'^{(1)})-h_1(x'-a'^{(j)})}\Bigr\}\Bigr]\notag \\
& \quad \times k(x_d)
\end{align}
corresponding to \eqref{4.15}. Hence, noting \eqref{4.17}, we can prove \eqref{4.10} for $\alpha \not= 0$ and all $\beta$ as in the proof of the case of $N = 2$.  Let $\beta \not= 0$. Then we can prove
\begin{equation} \label{4.21}
|\partial_{x_d}^{\beta}W(x)|  \leq \min \Bigl\{1+h_1(x'-a'^{(j)}); j = 1,2,\dots,N\Bigr\}\cdot|\partial_{x_d}^{\beta}k(x_d)| 
\end{equation}
as in the proof of \eqref{4.19}.
   The expression \eqref{4.21} shows \eqref{4.10} for $\alpha = 0$.  Thus we have completed the proof.
\end{proof}
	\begin{proof}[Proof of Theorem 4.3]
We see from \eqref{4.12} that $k(x_d) = \exp\{-h_2(x_d)\}$ satisfies \eqref{4.5} and \eqref{4.6}.  Hence Theorem 4.1 holds.
Noting \eqref{4.12}, from \eqref{4.10} we have
\begin{align}\label{4.22}
|\partial_{x'}^{\alpha}\partial_{x_d}^{\beta}W(x)| &
 \leq C_{\alpha}<x'>\left|\partial_{x_d}^{\beta}e^{-h_2(x_d)}\right| \notag \\
 &  \leq C_{\alpha\beta}<x'>e^{-h_2(x_d)},\quad  |\alpha+\beta| \geq 1,
\end{align}
which shows \eqref{4.14}.
\par
	We will  prove \eqref{4.13}.  Let $N= 1$.  Then, using \eqref{4.3} and the assumption \eqref{4.11}, we can prove \eqref{4.13} because we have
	\begin{align*}
	& <x'>e^{-h_2(x_d)} \leq C_1<x'-a'>e^{-h_2(x_d)} \\
	&\quad  \leq C_2\Bigl(1+ h_1(x'-a')e^{-h_2(x_d)}\Bigr) = C_2\bigl(1+W(x)\bigr)
	\end{align*}
	with constants $C_j \geq 0\ (j= 1,2)$. Let $N=2$.
	  For the sake of simplicity we set
$a'^{(1)} = 0$ and $a'^{(2)} = a'$. For a  point $x' \in \bR^{d-1}$ satisfying 
$h_1(x') \leq h_1(x'-a')$ we have
\begin{align} \label{4.23}
W(x) & = \log 2 + \Bigl[h_1(x')-\log\Bigl\{1+e^{h_1(x')-h_1(x'-a')}\Bigr\}\Bigr]e^{-h_2(x_d)}
 \notag \\
& \geq \Bigl[h_1(x')+ \log 2 -\log\Bigl\{1+e^{h_1(x')-h_1(x'-a')}\Bigr\}\Bigr]e^{-h_2(x_d)} \notag \\
& \geq h_1(x')e^{-h_2(x_d)} = \min \bigl\{h_{1}(x'), h_{1}(x'-a')\bigr\}\cdot e^{-h_2(x_d)}
\end{align}
from \eqref{4.20}.
As in the proof of   \eqref{4.19} we can prove
\begin{equation*} 
W(x) \geq \min \bigl\{h_1(x'), h_1(x'-a')\bigr\}\cdot e^{-h_2(x_d)} 
\end{equation*}
for all $x \in \bR^{d}$, which shows   \eqref{4.13} as in the case of $N=1$.  In general for $N = 3,4,\dots$   
we have
\begin{equation} \label{4.24}
W(x) \geq  \min \bigl\{h_1(x'-a'^{(j)}); j = 1,2,\dots,N\bigr\}\cdot e^{-h_2(x_d)}
\end{equation}
as in the proof of $N=2$.
Consequently we can prove \eqref{4.13}  as in the cases of $N = 1,2$.
	\end{proof}
	
	\begin{rem} \label{rem 4.1}
We consider a finite number of thin walls with some holes.  For each wall we take $\Wsx$ determined in Corollary 4.2 (resp. Corollary 4.4). Let $\widetilde{W_{\ds}}(x)$ be the sum of these $\Wsx$. Then, $\widetilde{W_{\ds}}(x)$ satisfies the properties stated in Corollary 4.2 (resp. Corollary 4.4), where we take the sum of all $w(t,x)$ corresponding to $W_{\ds}(x)$ as $w(t,x)$
for $\widetilde{W_{\ds}}(x)$ in Corollary 4.4.
	\end{rem}
	\begin{rem} \label{rem 4.2}
We suppose that a free particle is  within the distance of $\delta > 0$ of $a \in \bR$ at a fixed time $t_0 \in (0,T)$.  Under this restriction on paths followed by the particle, Feynman and Hibbs expressed the probability amplitude by the path integral in \S 3-2 of \cite{Feynman-Hibbs}. See (3.20) and (3.21) in \cite{Feynman-Hibbs}. It should be noted that their Feynman path integral is not the one in Corollaries 4.2 and 4.4 of the present paper, but the one  expressed by our \eqref{2.20} or \eqref{2.21} with $W_{\ds} = 0$.
\end{rem}
	{\bf Application 3.} The quantum Zeno-effect.   
	\par
	Suppose that the particle with the $i$-th component is always restricted in the ball $\{x \in \bR^{d}; |x - a^{(i)}| \leq b_{i}\}$ with $a^{(i)} \in \bR^{d}$ and $b_{i} > 0\ (i = 1,2,\dots,l)$ when we perform the measurement of their positions during the time interval $[0,t]$.  Then, the weight corresponding to $\dw_{a}(x)$ in \eqref{1.6} is given by the diagonal matrix
\begin{equation} \label{4.25}
 \begin{pmatrix} 
 \Omega_{1}(|x-a^{(1)}|/b_{1})                                       \\
            & \Omega_{1}(|x-a^{(2)}|/b_{2})        &        & \text{\huge{0}}                           \\
    &  & \ddots     \\
 & \text{\huge{0}}    & & \ddots \\
                        & &    &          &\Omega_{1}(|x-a^{(l)}|/b_{l})
  \end{pmatrix}.
 \end{equation}
 Let $\Wsx$ and $w(x)$ be the functions defined in Example 3.2.  Then, noting \eqref{3.10}, we will approximate \eqref{4.25} by $n\Wsx$ with a large integer $n$.  Since $\Wsx$ satisfies Assumption 2.D, we can apply Theorem 3.1 - 3.5 to the quantum systems with this $n\Wsx$.
 \begin{rem}  \label{rem 4.3}
 Let $H(t)$ in \eqref{1.4}  be the free Hamiltonian operator $-\Delta$, i.e. $A_{j} = 0\ (j= 1,2,\dots,d), V = 0$ and $m=\hbar/2$ , and $\mathcal{O}$ an open set in 
 $\bR^{d}$ with a smooth boundary.  It was proved  in \cite{Exner-Ichinose, Friedman} that \eqref{1.18} in the present paper is equal to
 the solution
 $\Bigl[\exp\{-i\hbar^{-1}t(-\Delta_{\mathcal{O}})\}\Bigr]\chi_{\mathcal{O}}(\cdot)f$ with $\chi_{\mathcal{O}}(\cdot)f$ at $t = 0$ to the Dirichlet problem in $\mathcal{O}$
 for $-\Delta$, where $f \in L^{2}(\bR^{d})$.
 \end{rem}
 {\bf Application 4.} Aharonov-Bohm effect. 
 \par
  Let $\mathcal{O}$ be a bounded open set in $\bR^{d}$ and suppose
 \[
 \mathrm{supp}\,  E_{j}(t,\cdot) \subset \mathcal{O}, \quad   \mathrm{supp}\, B_{jk}(t,\cdot) \subset \mathcal{O}
 \]
 all of the time $[0,T]$ for all $E_{j}$ and $B_{jk}$ in \eqref{1.1}, where $\mathrm{supp}\,  E_{j}(t,\cdot)$ denotes the support of the  function $E_{j}(t,x)$ of $x$ variables.  Assume that a particle can never enter in $\mathcal{O}$.  This is the problem that Aharonov and Bohm discussed in \cite{Aharonov-Bohm}.  The weight corresponding to $\dw_{a}(x)$ in \eqref{1.6} is given by
\begin{equation}\label{4.26}
\dw_{\mathcal{O}}(x) = \begin{cases} 0, & x \in \mathcal{O}, \\
                           1, & x \notin \mathcal{O}.
	\end{cases}
\end{equation}
Let $\mathcal{O}'$ be another bounded open set in $\bR^{d}$ such that  $\overline{\mathcal{O}} \subset \mathcal{O}'$, where $\overline{\mathcal{O}}$ is the closure of $\mathcal{O}$.  Taking a non-negative function $h_{\mathcal{O}}(x) \in C^{\infty}_{0}(\bR^{d})$
such that 
\begin{equation}\label{4.27}
h_{\mathcal{O}}(x) = \begin{cases} 1, & x \in \mathcal{O}, \\
                           0, & x \notin \mathcal{O}'
	\end{cases}
\end{equation}
(cf. Theorem 1 on page 93 of \cite{Matsushima}),  we will approximate \eqref{4.26} by $\exp\{-nh_{\mathcal{O}}(x)\}$
with a large integer $n$.  That is, we define $\Wstx$ by \eqref{2.22}, setting $w_{ii}(t,x) = nh_{\mathcal{O}}(x)$.  Then we see from Example 3.1 that this $\Wstx$ satisfies Assumption 2.D with $w(t,x) = 0$.  Hence we can apply Theorems 3.1 - 3.5 to the quantum systems with this $\Wstx$.
\begin{rem} \label{rem4.4}
Taking the parameter $n$ to infinity in  Applications 3 and 4, we  can obtain the results corresponding to \eqref{1.18} or \eqref{1.19} by the study on what function the solution $u_n(t)$ to the  equation \eqref{2.3} with the parameter $n$ converges to as $n \to \infty$.  Since our aim in the present paper is to give a mathematical formulation,  we don't study it in the present paper. Results on it will be published somewhere.
\end{rem}
%
%%%%%%%%%%%%%%%%%%%
%%%%%%%%%%%%%%%%%%%%%
%%%%%%%%%%%%%%%%%%%%%%%%%%
\section{Proof of Theorem 3.1}
For the sake of simplicity we always assume $\hbar =1$ hereafter.
Let $C_{W} \geq 0$ be the constant  in Assumption 2.D.  If we replace $u(t)$ in \eqref{2.3} with $e^{-C_{W}(t-s)}u(t)$, $\Wstx$ in \eqref{2.3} changes to $\Wstx + C_{W}$.  Consequently, replacing $\Wstx$ with $\Wstx + C_W$, we can assume $C_{W} = 0$ without loss of generality in the proof of Theorems 2.2 - 2.5 and 3.1 - 3.5.   Hence hereafter throughout \S 5, \S 6 and the appendix we always assume $C_{W} = 0$.
\par
	Let $p(x,\xi,x')$ be a function on $\bR^{3d}$ such that  there exists an integer $M \in \bR$ satisfying
 \begin{equation} \label{5.1}
|\partial_{\xi}^{\alpha}\partial_{x}^{\beta}\partial_{x'}^{\beta'}p(x,\xi,x')| 
\leq C_{\alpha\beta\beta'}(1+|x|+|\xi| + |x'|)^M 
 \end{equation}
for all $\alpha, \beta$ and $\beta'$ with constants $ C_{\alpha\beta\beta'}$.  Then we define the pseudo-differential operator $p(x,D_x,x')f$ with the symbol $p(x,\xi,x')$
by
 \begin{equation} \label{5.2}
\int e^{ix\cdot \xi}\ \dbar\xi \int e^{-ix'\cdot \xi}p(x,\xi,x')f(x')dx',\quad \dbar\xi = (2\pi)^{-d}d\xi
 \end{equation}
for $f \in \Sspace(\bR^d)$, where $x\cdot \xi = \sum_{j=1}^dx_j\xi_j$ (cf. \S 2 in chap.2 of \cite{Kumano-go}).
\par
	We have proved the following in Lemmas 2.3 - 2.5 of \cite{Ichinose 1995}.
	\begin{lem} \label{lem 5.1}
(i) For $a = 0,1,2,\dots$ there exist constant $\mu_a \geq 0$ and $q_a(x,\xi)$ such that
 \begin{align} \label{5.3}
q_a(x,D_x)f & = (\mu_a + <x>^a + <D_x>^a)^{-1}f \notag \\
& \equiv \Lambda_a(x,D_x)^{-1}f, \quad f \in \Sspace,
 \end{align}
 \begin{equation} \label{5.4}
 |\partial_{\xi}^{\alpha}\partial_{x}^{\beta}q_a(x,\xi)| 
\leq C_{\alpha\beta}(1+|x|^a+|\xi|^a)^{-1} 
 \end{equation}
for all $\alpha$ and $\beta$. (ii) The norm $\Vert f \Vert_a$ for $f \in B^a\ (a = 0, 1,2,\dots)$ is equivalent to $\Vert\Lambda_a(x,D_x)f\Vert$. (iii) For $p(x,\xi,x')$ a function satisfying \eqref{5.1} we have
 \begin{equation} \label{5.5}
\Vert p(x,D_x,x')f\Vert_a \leq
 C_{a}\Vert f \Vert_{a+M}
 \end{equation}
for $a = 0,1,2,\dots$.
	\end{lem}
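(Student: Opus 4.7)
The plan is to establish the three parts sequentially: (i) constructs the parametrix symbol $q_a$, (ii) uses it to prove the norm equivalence on $B^a$, and (iii) follows from (ii) together with the standard composition calculus for double symbols developed in \cite{Kumano-go}.

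For (i), I start from the natural candidate for the principal symbol,
\begin{equation*}
q_a^{(0)}(x,\xi) := (\mu_a + <x>^a + <\xi>^a)^{-1},
\end{equation*}
and first verify \eqref{5.4} for $q_a^{(0)}$ directly by induction on $|\alpha| + |\beta|$: each derivative is a finite sum of expressions of the form (polynomial in $(x,\xi)$ of total degree less than $a$) divided by a power of $\mu_a + <x>^a + <\xi>^a$, and each such term admits the bound $C_{\alpha\beta}(1+|x|^a+|\xi|^a)^{-1}$. Next I compute the symbol of the composition $\Lambda_a(x,D_x) \circ q_a^{(0)}(x,D_x) = I + r(x,D_x)$ via the composition formula of \S 2 of Chapter 2 of \cite{Kumano-go}; the remainder symbol $r(x,\xi)$ belongs to the class \eqref{5.4} and, crucially, carries a prefactor that shrinks as $\mu_a \to \infty$, so that by the Calder\'on--Vaillancourt theorem the operator norm of $r(x,D_x)$ on $L^2$ tends to zero. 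For $\mu_a$ sufficiently large the Neumann series $\sum_{k \geq 0}(-r(x,D_x))^k$ converges in operator norm, and $q_a(x,D_x) := q_a^{(0)}(x,D_x)\sum_{k \geq 0}(-r(x,D_x))^k$ furnishes the exact inverse of $\Lambda_a(x,D_x)$ on $\Sspace$; its symbol is again in the class \eqref{5.4} by termwise application of the composition formula.

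For (ii), the direction $\|\Lambda_a(x,D_x) f\| \leq C\|f\|_a$ is immediate from the pointwise estimate $<x>^a \leq C(1 + \sum_{|\alpha|=a}|x^\alpha|)$, its Fourier-side counterpart for $<D_x>^a$ transferred via Plancherel, and the triangle inequality. For the reverse direction, write $f = q_a(x,D_x) g$ with $g := \Lambda_a(x,D_x) f$; it suffices to establish the $L^2$-boundedness of $x^\alpha q_a(x,D_x)$ and $\partial_x^\alpha q_a(x,D_x)$ for $|\alpha|=a$. By the composition formula these are pseudodifferential operators whose leading symbols are $x^\alpha q_a(x,\xi)$ and $(i\xi)^\alpha q_a(x,\xi)$ (plus remainders of the same order), and since $|x^\alpha|, |\xi^\alpha| \leq 1 + |x|^a + |\xi|^a$ the bound \eqref{5.4} together with the Leibniz rule gives uniform boundedness on $\bR^{2d}$ of these symbols and all their derivatives; Calder\'on--Vaillancourt then yields the required $L^2$-boundedness.

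For (iii), using (ii) applied with both exponents $a$ and $a+M$, the claim reduces to the $L^2$-boundedness of $\Lambda_a(x,D_x) \circ p(x,D_x,x') \circ q_{a+M}(x,D_x)$. The double-symbol composition calculus from \cite{Kumano-go} shows that this product is a pseudodifferential operator whose full symbol has total order $a + M - (a+M) = 0$ in the isotropic scale implicit in \eqref{5.1} and \eqref{5.4}, and Calder\'on--Vaillancourt concludes. The main obstacle throughout is the quantitative control of the parametrix remainder $r(x,\xi)$ in (i): I must track explicitly how its Calder\'on--Vaillancourt seminorms depend on $\mu_a$ to justify the smallness that drives the Neumann series. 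Once this is secured, the remaining assertions are routine consequences of the pseudodifferential calculus.
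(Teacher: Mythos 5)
The paper does not prove Lemma~5.1 at all; it is imported by citation from Lemmas~2.3--2.5 of \cite{Ichinose 1995}, so there is no in-paper argument to compare against. Your parametrix--plus--Neumann-series construction is the standard route to such an inverse in a global (Shubin/Beals type, $(0,0)$-type) calculus, and it is almost certainly what the cited reference does. Parts (ii) and (iii) of your sketch are routine once (i) is available: the forward half of (ii) uses only pointwise comparability of $<x>^a$, $<\xi>^a$ with the corresponding monomials, the reverse half composes $x^\alpha$ and $\partial_x^\alpha$ with $q_a(x,D_x)$ and invokes Calder\'on--Vaillancourt, and (iii) reduces by (ii) to $L^2$-boundedness of $\Lambda_a(x,D_x)\,p(x,D_x,x')\,q_{a+M}(x,D_x)$, whose total order is $0$.

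The one place where the sketch stays at the level of an assertion rather than a proof is exactly the point you yourself flag: the claim that the remainder $r(x,\xi)$ in $\Lambda_a(x,D_x)\,q_a^{(0)}(x,D_x)=I+r(x,D_x)$ has Calder\'on--Vaillancourt seminorms that tend to $0$ as $\mu_a\to\infty$. This is true, but it is not automatic from ``$r$ is a lower-order symbol''; one must actually track a factor of the form $q_a^{(0)}(x,\xi)^{1+\epsilon}$ (equivalently $(\mu_a+<x>^a+<\xi>^a)^{-1-\epsilon}$) surviving in every term of $r$ and in each of its derivatives up to the fixed order required by Calder\'on--Vaillancourt, so that each seminorm is $O(\mu_a^{-\epsilon})$. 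Since $\mu_a$-multiplication commutes exactly with $q_a^{(0)}(x,D_x)$ and left-multiplication by $<x>^a$ is exact, the only contribution to $r$ comes from $<D_x>^a\circ q_a^{(0)}(x,D_x)$, whose asymptotic terms with $|\alpha|\ge1$ each carry an extra negative power of $\Lambda_a(x,\xi)$; but the uniformity of the remainder in the composition expansion and the survival of this gain under $\partial_x^\beta\partial_\xi^\gamma$ still need to be checked explicitly. If you carry that out (a finite but genuine computation), the rest of the argument closes as stated.
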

	\begin{pro} \label{pro 5.2}
Let $U_w(t,s)f$ be the solution to the equation \eqref{2.3} with $u(s) = f \in (B^a)^l \ (a = 0,1,2,\dots)$ determined in Theorem 2.1.
Then under the assumptions of Theorem 2.1 we have
 \begin{equation} \label{5.6}
\Vert U_w(t,s)f\Vert_a \leq e^{K_a(t-s)}\Vert f \Vert_{a}, \ 0 \leq s \leq t \leq T
 \end{equation}
with constants $K_a \geq 0$, where $K_0 =  0$.
	\end{pro}
	\begin{proof}
We first note that we are assuming 
 \begin{equation} \label{5.7}
\Wstx \geq 0, \ (t,x) \in \domain.
\end{equation}
Let $f \in B^2$. Then we see $u(t) = U_w(t,s)f \in C^0_t([s,T];(B^2)^l) \cap C^1_t([s,T];(L^2)^l)$ from Theorem 2.1.
Hence we have
 \begin{equation} \label{5.8}
 \frac{d}{dt}\Vert u(t)\Vert^2 = 2\rittaire(\partial_tu(t),u(t)) = -2(W_{\ds}(t) u(t),u(t)) \leq 0
   \end{equation}
  from the equation \eqref{2.3}, which shows 
 \begin{equation*} 
\Vert U_w(t,s)f\Vert \leq \Vert f \Vert, \ 0 \leq s \leq t \leq T
 \end{equation*}
for  $f \in (B^2)^l$.  Let $f \in (L^2)^l$ and take $f_j \in (B^2)^l\ (j = 1,2,\dots)$ such that $f_j \to f$ in $(L^2)^l$ as $j \to \infty$.
Since we have  $\Vert U_w(t,s)f_j\Vert \leq \Vert f_j \Vert$, we can see \eqref{5.6} with $a = 0$ for $f \in (L^2)^l$ by \eqref{2.15}.
\par
	Let $f \in (B^3)^l$.  Then we see $u(t) = U_w(t,s)f \in C^0_t([s,T];(B^3)^l) \cap C^1_t([s,T];(B^1)^l)$ from Theorem 2.1.  We have
 \begin{align} \label{5.9}
 i\partial_t\,\partial_{x_j} u(t)& = (H + H_{\ds} - iW_{\ds})\partial_{x_j} u(t) \notag \\
 & +  \bigl[\partial_{x_j},H + H_{\ds} - iW_{\ds} \bigr]\Lambda_1^{-1}\bigl(\Lambda_1u(t)\bigr)
    \end{align}
  from \eqref{2.3}, where $[\cdot,\cdot]$ denotes the commutator. From the assumptions we can easily prove
 \begin{equation} \label{5.10}
\sup_{0 \leq t \leq T}\Vert \bigl[\partial_{x_j},H + H_{\ds} - iW_{\ds}\bigr]\Lambda_1^{-1}\Vert_{L^2\to L^2} < \infty
   \end{equation}
 as in the proof of Lemma 3.1 of \cite{Ichinose 1995} by using the Calder\'on-Vaillancourt theorem (cf. Theorem 1.6 on page 224 of \cite{Kumano-go}, Theorem 5.1 of \cite{Z}), where $\Vert\cdot\Vert_{L^2\to L^2}$ denotes the operator norm from $(L^2)^l$
into $(L^2)^l$.  Consequently we can prove 
 \begin{align*}
 \frac{d}{dt}\Vert \partial_{x_j}u(t)\Vert^2 & = 2\rittaire \bigl(\partial_t\partial_{x_j}u(t),\partial_{x_j}u(t)\bigr) \leq C_1\Vert\Lambda_1u(t)\Vert\cdot \Vert\partial_{x_j}u(t)\Vert
   \end{align*}
  as in the proof of \eqref{5.8}.
In the same way we have
\begin{equation*} 
\frac{d}{dt}\Vert x_ju(t)\Vert^2 \leq C_2\Vert\Lambda_1u(t)\Vert\cdot \Vert x_{j}u(t)\Vert.
 \end{equation*}
Hence, setting
 \begin{align*} 
v_{1\eta}(t) = \sqrt{\Vert u(t)\Vert^{2}+ \eta} + \sum_{j=1}^{d}\Bigl(\sqrt{\Vert x_{j}u(t)\Vert^{2}+ \eta} + \sqrt{\Vert \partial_{x_{j}}u(t)\Vert^{2}+ \eta} \ \Bigr)
\end{align*}
with a constant $\eta > 0$ for a while, and noting \eqref{2.1} and (ii) of Lemma 5.1, we have
 \begin{equation} \label{5.11}
\frac{d}{dt}v_{1\eta}(t) \leq \frac{(C_1+C_2)d}{2}\Vert\Lambda_1u(t)\Vert \leq K_{1}\Vert u(t)\Vert_{1} \leq K_{1}v_{1\eta}(t) 
\end{equation}
together with \eqref{5.8}, where  $K_{1}$ is a constant independent of $\eta > 0$.  Then we have
 \begin{equation*} 
v_{1\eta}(t) \leq  e^{K_{1}(t-s)}v_{1\eta}(s), \ 0 \leq s \leq t \leq T,
\end{equation*}
 which shows 
 \begin{equation} \label{5.12}
\Vert U_w(t,s)f\Vert_1 \leq e^{K_1(t-s)}\Vert f \Vert_{1}, \ 0 \leq s \leq t \leq T
 \end{equation}
for $f \in (B^3)^l$ as $\eta \to 0$.  From \eqref{5.12} we can prove \eqref{5.6} with $a = 1$ for $f \in (B^1)^l$ as in the  proof of the case of $a = 0$.
\par
	We can prove \eqref{5.6} generally as in the proof of \eqref{5.6} with $a = 0, 1$.  To make sure we will give the outline of its proof.  Let $f \in (B^{a+2})^l\ (a = 2,3,\dots)$.  Then $u(t) = U_w(t,s)f \in C^0_t([s,T];(B^{a+2})^l) \cap C^1_t([s,T];(B^{a})^l)$. 
	 For $|\alpha| = a$ we have
 \begin{align} \label{5.13}
 i\partial_t\,\partial_{x}^{\alpha} u(t)& = (H + H_{\ds} - iW_{\ds})\partial_{x}^{\alpha} u(t) \notag \\
 & +  \bigl[\partial_{x}^{\alpha},H + H_{\ds} - iW_{\ds} \bigr]\Lambda_a^{-1}\bigl(\Lambda_au(t)\bigr),
    \end{align}
 \begin{equation} \label{5.14}
\sup_{0 \leq t \leq T}\Vert \bigl[\partial_{x}^{\alpha},H + H_{\ds} - iW_{\ds}\bigr]\Lambda_a^{-1}\Vert_{L^2\to L^2} < \infty,
   \end{equation}
where we used $|\xi|^{a-1}|x| \leq |\xi|^{a}+|x|^a$ to prove \eqref{5.14}.  Hence we have
 \begin{equation*}
 \frac{d}{dt}\Vert \partial_{x}^{\alpha}u(t)\Vert^2 
 \leq C_3\Vert\Lambda_au(t)\Vert\cdot \Vert\partial_{x}^{\alpha}u(t)\Vert .
   \end{equation*}
  In the same way we have
 \begin{equation*}
 \frac{d}{dt}\Vert x^{\alpha}u(t)\Vert^2 
  \leq  C_4\Vert\Lambda_au(t)\Vert\cdot \Vert x^{\alpha}u(t)\Vert .
  \end{equation*}
  Consequently, setting 
 \begin{align*} 
v_{a\eta}(t) = \sqrt{\Vert u(t)\Vert^{2}+ \eta} + \sum_{|\alpha| = a}\Bigl(\sqrt{\Vert x^{\alpha}u(t)\Vert^{2}+ \eta}  + \sqrt{\Vert \partial_{x}^{\alpha}u(t)\Vert^{2}+ \eta} \ \Bigr)
\end{align*}
with  a constant $\eta > 0$ for a while, as in the proof of \eqref{5.11} we have
 \begin{equation} \label{5.15}
\frac{d}{dt}v_{a\eta}(t) \leq C_{5}\Vert\Lambda_au(t)\Vert \leq K_{a}\Vert u(t)\Vert_{a} \leq K_{a}v_{a\eta}(t) 
\end{equation}
with constatns $C_{5}$ and $K_{a}$  independent of $\eta > 0$, which shows
  \par
 \begin{equation} \label{5.16}
\Vert U_w(t,s)f\Vert_a \leq e^{K_a(t-s)}\Vert f \Vert_{a}, \ 0 \leq s \leq t \leq T
 \end{equation}
for $f \in (B^{a+2})^l$ as $\eta \to 0$.  From \eqref{5.16} we can prove \eqref{5.6} for $f \in (B^a)^l$ as in the proof of the cases of $a = 0, 1$.
	\end{proof}
	Let $U(t,s)f$ be the solution to \eqref{2.3} with $\Wstx = 0$ as denoted in Theorem 2.1 and $\Wstx \in \Mlc$ the matrix-valued function  in \eqref{2.3}.
We write 
 \begin{align} \label{5.17}
& \Ulw(t,s;\kappa')f = e^{-(t-s)W_{\ds}(\kappa')} U(t,s)f, \notag \\
 &  \Urw(t,s;\kappa')f = U(t,s)\left(e^{-(t-s)W_{\ds}(\kappa')}f\right)
    \end{align}
  with a constant $\kappa' \in [0,T]$.  We often write $\Ujw(t,s;\kappa')f\ (J = L,W)$ as $\Ujw(t,s)f$, omitting $\kappa'$.
  \begin{pro} \label{pro 5.3}
We suppose the same assumptions as in Theorem 2.1.  Then there exist constants $C_a \geq 0$ and $C'_a \geq 0\ (a = 0,1,2,\dots)$
such that 
 \begin{equation} \label{5.18}
\Vert e^{-(t-s)W_{\ds}(\kappa')}f\Vert_a \leq e^{C_a(t-s)}\Vert f \Vert_{a}, 
 \end{equation}
 \begin{equation} \label{5.19}
\Vert \Ujw(t,s;\kappa')f\Vert_a \leq e^{C'_a(t-s)}\Vert f \Vert_{a}, \ 0 \leq s \leq t \leq T
 \end{equation}
 for $J = L, R$ and all $\kappa' \in [0,T]$.
  \end{pro}
  \begin{proof}
We can easily see
 \begin{align} \label{5.20}
\partial_{x_j}&\,e^{-(t-s)W_{\ds}(\kappa',x)}  = -(t-s)\int_0^1 e^{-(1-\theta)(t-s)W_{\ds}(\kappa',x)}\notag \\
& \times \left\{\partial_{x_j}W_{\ds}(\kappa',x)\right\}e^{-\theta(t-s)W_{\ds}(\kappa',x)}d\theta
 \end{align}
 (cf. Lemma 3.1 of \cite{Ichinose 2014}). Using $\Wstx \geq 0$ and \eqref{2.13}, from  \eqref{5.20} we have
 \begin{align*} 
& \Vert e^{-(t-s)W_{\ds}(\kappa')}f\Vert_1 = \Vert e^{-(t-s)W_{\ds}(\kappa')}f\Vert \\
& \quad + \sum_{j=1}^d\Bigl\{\Vert \partial_{x_j}\bigl(e^{-(t-s)W_{\ds}(\kappa')}f\bigr) \Vert + \Vert x_j\bigl(e^{-(t-s)W_{\ds} (\kappa')}f\bigr) \Vert\Bigr\} \\
& \leq \Vert f \Vert_1 + C(t-s)\Vert f \Vert_1 \leq e^{C(t-s)}\Vert f \Vert_1, \ 0 \leq s \leq t \leq T
 \end{align*}
 with a constant $C \geq 0$.  In the same way we can prove \eqref{5.18} in general, where we use 
 \begin{equation} \label{5.21}
<x>^j<\xi>^{a-j}\,  \leq\  <x>^a + <\xi>^a, \ j = 1,2,\dots,a
 \end{equation}
 and (ii) of Lemma 5.1.
 \par
 	From \eqref{5.6} and \eqref{5.18} we have
 \begin{align*} 
&\Vert \Ulw(t,s)f\Vert_a = \bigl\Vert e^{-(t-s)W_{\ds}(\kappa')}U(t,s)f \bigr\Vert_a \\
&\leq e^{C_a(t-s)}\Vert U(t,s)f \Vert_{a} \leq e^{(C_a+K_a)(t-s)}\Vert f \Vert_{a}.
 \end{align*}
  In the same way we have
 \begin{align*} 
&\Vert \Urw(t,s)f\Vert_a = \Bigl\Vert U(t,s)\left(e^{-(t-s)W_{\ds}(\kappa')}f\right) \Bigr\Vert_a \\
&\leq e^{K_a(t-s)}\Vert e^{-(t-s)W_{\ds}(\kappa')}f \Vert_{a} \leq e^{(C_a+K_a)(t-s)}\Vert f \Vert_{a}.
 \end{align*}
 Thus we  obtain  \eqref{5.19}.
  \end{proof}
  \begin{lem} \label{lem 5.4}
We suppose the same assumptions as in Theorem 3.1.  Then we have
 \begin{equation}  \label{5.22}
  \bigl[i\partial_{t} - H(t) - H_{\ds}(t,x)  + iW_{\ds}(t,x)\bigr]\Ujw(t,s;\kappa')f = \Rjw(t,s;\kappa')f,
 \end{equation}
 \begin{equation}  \label{5.23}
 \Vert\Rjw(t,s;\kappa')f\Vert_{a} \leq C_{a}\bigl\{(t-s)+\sigma(|t-\kappa'|)\bigr\}\Vert f \Vert_{a+4},\ 0 \leq s \leq t \leq T
 \end{equation}
 for $J = L, R, \kappa' \in [0,T]$ and $a = 0,1,2,\dots$.
  \end{lem}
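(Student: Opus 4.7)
My plan is to prove Lemma 5.4 by a direct computation of $R_{JW}$ using the product rule, then estimating each resulting piece by splitting it into a ``time‐shift'' part controlled by $\sigma(|t-\kappa'|)$ and a ``commutator'' part controlled by $(t-s)$.

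For $J=L$, I would start with $U_{LW}(t,s;\kappa')f = e^{-(t-s)W_{\ds}(\kappa')}U(t,s)f$. Since $U(t,s)$ satisfies the equation \eqref{2.3} with $W_{\ds}=0$, the product rule gives
\begin{equation*}
i\partial_t U_{LW} = -iW_{\ds}(\kappa')U_{LW} + e^{-(t-s)W_{\ds}(\kappa')}(H+H_{\ds})U(t,s)f .
\end{equation*}
Subtracting $(H+H_{\ds}-iW_{\ds}(t))U_{LW}$ then yields the clean decomposition
\begin{equation*}
R_{LW} = i\bigl[W_{\ds}(t)-W_{\ds}(\kappa')\bigr]U_{LW} + \bigl[e^{-(t-s)W_{\ds}(\kappa')},\,H+H_{\ds}\bigr]U(t,s)f .
\end{equation*}
The first piece I would estimate using \eqref{3.1} (giving $\|W_{\ds}(t)-W_{\ds}(\kappa')\|_{\mathbb{C}^l}\leq \sigma(|t-\kappa'|)\langle x\rangle^2$, and analogous bounds for its spatial derivatives), together with $\|e^{-(t-s)W_{\ds}(\kappa')}\|_{\mathrm{op}}\leq 1$ (since $W_{\ds}\geq 0$) and \eqref{5.6}.

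For the commutator piece, I would split $[e^{-(t-s)W_{\ds}(\kappa')},H+H_{\ds}]$ into its two parts. For $[H,e^{-(t-s)W_{\ds}(\kappa')}]$ I would use the explicit formula for $[(\partial_j-i\mathfrak{q}A_j)^2,\cdot]$ computed via Leibniz, so that the commutator equals $(\partial_j^2 e^{-(t-s)W_{\ds}(\kappa')}) + 2(\partial_j e^{-(t-s)W_{\ds}(\kappa')})\partial_j - 2i\mathfrak{q}A_j(\partial_j e^{-(t-s)W_{\ds}(\kappa')})$ (the $\mathfrak{q}V$ term commutes with a scalar‐times‐identity). Formula \eqref{5.20} (and its one further differentiation) shows that each spatial derivative of $e^{-(t-s)W_{\ds}(\kappa')}$ carries a factor $(t-s)$, while the relevant derivatives of $W_{\ds}$, $A_j$, $H_{\ds}$ grow at most polynomially in $\langle x\rangle$ by \eqref{2.9}, \eqref{2.13} and \eqref{2.17}. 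For $[H_{\ds},e^{-(t-s)W_{\ds}(\kappa')}]$ I would use the matrix identity $[A,e^B] = \int_0^1 e^{(1-\theta)B}[A,B]e^{\theta B}d\theta$ to write it as
\begin{equation*}
-(t-s)\int_0^1 e^{-(1-\theta)(t-s)W_{\ds}(\kappa')}\,[H_{\ds}(t),W_{\ds}(\kappa')]\,e^{-\theta(t-s)W_{\ds}(\kappa')}\,d\theta,
\end{equation*}
extracting the factor $(t-s)$ explicitly; the sandwiching exponentials are bounded by $1$, and $\|[H_{\ds},W_{\ds}(\kappa')]\|_{\mathbb{C}^l}$ is controlled using the linear (resp. quadratic) growth of $H_{\ds}$ (resp. $W_{\ds}$). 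Combining with $\|U(t,s)f\|_{a+4}\leq C\|f\|_{a+4}$ from Proposition 5.2 gives \eqref{5.23}.

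For $J=R$, I would compute $i\partial_t U_{RW}$ by the same product rule — now with the $t$-dependence entering through the initial datum — to obtain
\begin{equation*}
R_{RW} = iW_{\ds}(t)U(t,s)\phi - iU(t,s)W_{\ds}(\kappa')\phi, \qquad \phi=e^{-(t-s)W_{\ds}(\kappa')}f,
\end{equation*}
which I would rewrite as $i[W_{\ds}(t)-W_{\ds}(\kappa')]U(t,s)\phi + i[W_{\ds}(\kappa'),U(t,s)]\phi$. The first summand is treated exactly as before (noting $\|\phi\|_{a+4}\leq e^{C_{a+4}(t-s)}\|f\|_{a+4}$ by \eqref{5.18}). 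For the commutator $[W_{\ds}(\kappa'),U(t,s)]$ I would use the Duhamel identity, derived by differentiating $G(\tau)=U(t,\tau)W_{\ds}(\kappa')U(\tau,s)$ between $\tau=s$ and $\tau=t$ and using the equation satisfied by $U(\cdot,\cdot)$, yielding
\begin{equation*}
W_{\ds}(\kappa')U(t,s) - U(t,s)W_{\ds}(\kappa') = i\int_s^t U(t,\tau)[H+H_{\ds},\,W_{\ds}(\kappa')]U(\tau,s)\,d\tau,
\end{equation*}
from which the factor $(t-s)$ is extracted as the length of the integration interval.

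The main obstacle I expect is the careful bookkeeping needed to arrive at the specific Sobolev index $a+4$: one has to track simultaneously the order of the differential operator $H$ (which contributes up to two $\langle D\rangle$'s), the polynomial growth of coefficients in $H$, $H_{\ds}$, $W_{\ds}$ (each up to $\langle x\rangle^2$ for the symmetrized product $[H_{\ds},W_{\ds}]$), and the extra derivatives produced by the Leibniz rule when the $B^a$-norm is written via $\|\cdot\|_a = \|\Lambda_a(x,D)\cdot\|$ from Lemma 5.1(ii). The bounds \eqref{5.3}–\eqref{5.5}, together with the identity $\langle x\rangle^j\langle\xi\rangle^{a-j}\leq \langle x\rangle^a + \langle\xi\rangle^a$ of \eqref{5.21}, should convert every polynomial/differential factor cleanly into loss of Sobolev indices, yielding the stated $\|f\|_{a+4}$ bound.
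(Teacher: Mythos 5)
Your decomposition for $J=L$ coincides exactly with the paper's: $R_{LW} = i\{W_{\ds}(t)-W_{\ds}(\kappa')\}U_{LW} + [e^{-(t-s)W_{\ds}(\kappa')},\,H+H_{\ds}]U(t,s)f$, and your treatment of $[H,e^{-(t-s)W_{\ds}(\kappa')}]$ via \eqref{5.20} and of $[H_{\ds},e^{-(t-s)W_{\ds}(\kappa')}]$ via the integral commutator identity (the paper instead writes the latter as $H_{\ds}\{e^{-(t-s)W_{\ds}}-I\}U - \{e^{-(t-s)W_{\ds}}-I\}H_{\ds}U$) are equivalent ways to extract the factor $(t-s)$.

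For $J=R$, however, you take a genuinely different route. The paper never introduces a commutator $[W_{\ds}(\kappa'),U(t,s)]$; it writes $R_{RW} = -i\{U(t,s)-I\}W_{\ds}(\kappa')\phi + iW_{\ds}(t)\{U(t,s)-I\}\phi - i\{W_{\ds}(\kappa')-W_{\ds}(t)\}\phi$ with $\phi = e^{-(t-s)W_{\ds}(\kappa')}f$, and then uses only the elementary estimate \eqref{5.24}, $\|U(t,s)g-g\|_a \leq C_a|t-s|\|g\|_{a+2}$, which follows directly from the forward-in-$t$ $C^1$ regularity in Theorem 2.1. Your Duhamel identity $[W_{\ds}(\kappa'),U(t,s)] = i\int_s^t U(t,\tau)[H+H_{\ds},W_{\ds}(\kappa')]U(\tau,s)\,d\tau$ needs the $\tau$-differentiability of $U(t,\tau)$ (the backward/Duhamel equation in the second argument), which Theorem 2.1 does not directly state; it can be derived from the cocycle property but is an extra step the paper avoids entirely. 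If you prefer your route, you should either derive this backward regularity or appeal to Theorem 2.1 at a higher regularity level and approximate; otherwise the $U(t,s)-I$ trick is cleaner and sufficient. One small slip: you say $H_{\ds}$ has ``linear'' growth, but \eqref{2.14} only bounds the \emph{derivatives} of $H_{\ds}$ by $\langle x\rangle$, so $H_{\ds}$ itself may grow like $\langle x\rangle^2$ — which is in fact what forces the loss of four Sobolev indices (since $[H_{\ds},W_{\ds}]$ then grows like $\langle x\rangle^4$), so your final index $a+4$ is right for a slightly different reason than your sketch suggests.
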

  \begin{proof}
  For simplicity we write $H_{\ds}(t,x)$ and $W_{\ds}(t,x)$ as $H_{\ds}(t)$ and $W_{\ds}(t)$ respectively. 
  Let $f\in (B^{a+2})^l\ (a = 0,1,2,\dots)$. Then we see  $U(t,s)f \in C^0_t([s,T];(B^{a+2})^l) \\
  \cap C^1_t([s,T];(B^{a})^l)$
  from Theorem 2.1. Hence we have
\begin{align*}
& i\bigl\{U(t,s)f - f\bigr\} = i(t-s)\int_{0}^{1}\frac{\partial U}{\partial t}(s + \theta\rho,s)fd\theta \\
& = (t-s)\int_{0}^{1}\big\{H(s+\theta\rho) + H_{\ds}(s+\theta\rho)\bigr\}U(s + \theta\rho,s)fd\theta, \ \rho = t-s
\end{align*}
from \eqref{2.3}.  Consequently we have
 \begin{equation}  \label{5.24}
 \Vert U(t,s)f - f\Vert_{a} \leq C_{a}|t-s|\Vert f \Vert_{a+2},\ (t,s \in [0,T])
 \end{equation}
 for $f\in (B^{a+2})^l\ (a = 0,1,2,\dots)$ from Theorem 2.1 and (iii) of Lemma 5.1.
 \par
 	From \eqref{2.3} we have
\begin{align}  \label{5.25}
 & \bigl[i\partial_{t} - H(t) - H_{\ds}(t)  + iW_{\ds}(t)\bigr]U_{RW}(t,s;\kappa')f \notag\\
 & =  \bigl[i\partial_{t} - H(t) - H_{\ds}(t)  + iW_{\ds}(t)\bigr]U(t,s)\bigl(e^{-(t-s)W_{\ds}(\kappa')}f\bigr) \notag \\
 & =  \bigl[i\partial_{t} - H(t) - H_{\ds}(t)\bigr]U(t,s)\bigl(e^{-(t-s)W_{\ds}(\kappa')}f\bigr)
 + iW_{\ds}(t)U(t,s)\bigl(e^{-(t-s)W_{\ds}(\kappa')}f\bigr)
  \notag \\
 & = - iU(t,s)W_{\ds}(\kappa')\bigl(e^{-(t-s)W_{\ds}(\kappa')}f\bigr) + iW_{\ds}(t)U(t,s)\bigl(e^{-(t-s)W_{\ds}(\kappa')}f\bigr)  \notag \\
 & = - i\bigl\{ U(t,s) - I\big\}W_{\ds}(\kappa')\bigl(e^{-(t-s)W_{\ds}(\kappa')}f\bigr) + iW_{\ds}(t)\bigl\{ U(t,s) - I\big\}\bigl(e^{-(t-s)W_{\ds}(\kappa')}f\bigr)
  \notag \\
 & - i \bigl\{W_{\ds}(\kappa') - W_{\ds}(t)\big\}\bigl(e^{-(t-s)W_{\ds}(\kappa')}f\bigr) = R_{RW}(t,s;\kappa')f.
 \end{align}
 Applying \eqref{3.1}, \eqref{5.18} and \eqref{5.24} to \eqref{5.25}, we have
 \begin{align*}  
 & \Vert R_{RW}(t,s;\kappa')f \Vert_{a} \leq C_a\Bigl[ (t-s)\Vert W_{\ds}(\kappa')\bigl(e^{-(t-s)W_{\ds}(\kappa')}f\bigr)\Vert_{a+2}  \\
 & \quad + \Vert (U(t,s) - I)\bigl(e^{-(t-s)W_{\ds}(\kappa')}f\bigr)\Vert_{a+2}
 + \sigma(|t-\kappa'|)\Vert e^{-(t-s)W_{\ds}(\kappa')}f\Vert_{a+2} \Bigr]\\
 & \leq C'_a(t-s)\Vert e^{-(t-s)W_{\ds}(\kappa')}f\Vert_{a+4} + C'_a\sigma(|t-\kappa'|)\Vert f\Vert_{a+2}\\
 & \leq C'_a\bigl\{(t-s) + \sigma(|t-\kappa'|)\bigr\}\Vert f\Vert_{a+4},
  \end{align*}
 which shows  \eqref{5.23} for $J = R$.
 \par
 	We also have
 \begin{align}  \label{5.26}
 & \bigl[i\partial_{t} - H(t) - H_{\ds}(t)  + iW_{\ds}(t)\bigr]U_{LW}(t,s;\kappa')f \notag\\
 & =  \bigl[i\partial_{t} - H(t) - H_{\ds}(t)  + iW_{\ds}(t)\bigr]e^{-(t-s)W_{\ds}(\kappa')}U(t,s)f\notag \\
 & =  \Bigl[i\partial_{t} - H(t) - H_{\ds}(t),e^{-(t-s)W_{\ds}(\kappa')}\Bigr]U(t,s)f + iW_{\ds}(t)U_{LW}(t,s)f
 \notag \\
 & =  i \bigl\{W_{\ds}(t)- W_{\ds}(\kappa')\big\}U_{LW}(t,s)f - \bigl[H(t),e^{-(t-s)W_{\ds}(\kappa')}\bigr]U(t,s)f \notag \\
 &  - \bigl[H_{\ds}(t),e^{-(t-s)W_{\ds}(\kappa')}\bigr]U(t,s)f = R_{LW}(t,s;\kappa')f \equiv -\sum_{j=1}^{3}I_{j}(t,s;\kappa')f.
 \end{align}
 Then we see
 \begin{equation}  \label{5.27}
 \Vert I_{1}(t,s;\kappa')f\Vert_{a} \leq C_a\sigma(|t-\kappa'|)\Vert \Ulw(t,s)f \Vert_{a+2} \leq C'_a\sigma(|t-\kappa'|)\Vert f \Vert_{a+2}
 \end{equation}
 from \eqref{3.1} and \eqref{5.19}.  Noting 
 \[
 [PQ,R] = P[Q,R] + [P,R]Q
 \]
 for operators $P, Q$ and $R$, from \eqref{1.4} we have
 \begin{align}  \label{5.28}
 & I_{2}(t,s;\kappa')f = \bigl[H(t),e^{-(t-s)W_{\ds}(\kappa')}\bigr]U(t,s)f  \notag \\
 & = \frac{1}{2m}\sum_{j=1}^{d}(i^{-1}\partial_{x_{j}} - A_{j})\Bigl[i^{-1}\partial_{x_{j}},e^{-(t-s)W_{\ds}(\kappa',x)}\Bigr]U(t,s)f
 \notag \\
 &\quad + \frac{1}{2m}\sum_{j=1}^{d}\Bigl[i^{-1}\partial_{x_{j}},e^{-(t-s)W_{\ds}(\kappa',x)}\Bigr](i^{-1}\partial_{x_{j}} - A_{j})U(t,s)f.
 \end{align}
 Hence, using \eqref{2.15}, \eqref{5.18} and \eqref{5.20}, we have
 \begin{align}  \label{5.29}
 &\Vert I_{2}(t,s;\kappa')f\Vert_{a} \leq C_a \sum_{j=1}^{d}\Bigl\Vert\Bigl[i^{-1}\partial_{x_{j}},e^{-(t-s)W_{\ds}(\kappa',x)}\Bigr]U(t,s)f\Bigr\Vert_{a+1}
 \notag \\
 & \qquad  + C_a(t-s)\sum_{j=1}^{d}\Vert (i^{-1}\partial_{x_{j}} - A_{j})U(t,s)f\Vert_{a+1}  \notag \\
 & \leq C'_a(t-s)\Vert U(t,s)f\Vert_{a+2} \leq C''_a(t-s)\Vert f\Vert_{a+2}
  \end{align}
 from \eqref{2.9}, \eqref{2.13} and \eqref{2.15}.
 We write 
 \begin{align}  \label{5.30}
 & I_{3}(t,s;\kappa')f = \bigl[H_{\ds}(t),e^{-(t-s)W_{\ds}(\kappa')}\bigr]U(t,s)f  = \bigl\{H_{\ds}(t)e^{-(t-s)W_{\ds}(\kappa')} \notag \\
 &\quad - e^{-(t-s)W_{\ds}(\kappa')}H_{\ds}(t)\bigr\}U(t,s)f =  H_{\ds}(t)\bigl\{e^{-(t-s)W_{\ds}(\kappa')}- I\bigr\}U(t,s)f  \notag \\
 & \qquad - \bigl\{e^{-(t-s)W_{\ds}(\kappa')}- I\bigr\}H_{\ds}(t)U(t,s)f.
 \end{align}
 We have
 \begin{equation*} 
\left\Vert \left(e^{-(t-s)W_{\ds}(\kappa',x)} - I\right)f \right\Vert_a  \leq  C_a(t-s)\Vert f \Vert_{a+2}
\end{equation*}
for $a = 0,1,2,\dots$ from \eqref{2.13} and \eqref{5.18} because of
 \begin{equation*} 
e^{-(t-s)W_{\ds}(\kappa',x)} - I  = -(t-s)\int_0^1W_{\ds}(\kappa',x)e^{-\theta(t-s)W_{\ds}(\kappa',x)}d\theta.
\end{equation*}
 Consequently, we obtain 
 \begin{equation}  \label{5.31}
 \Vert I_{3}(t,s)f\Vert_{a} \leq C'_a(t-s)\Vert U(t,s)f \Vert_{a+4} \leq C''_a(t-s)\Vert f \Vert_{a+4}
 \end{equation}
 from \eqref{2.14} and \eqref{2.15}.
 Thus we can prove \eqref{5.23} for $J = L$ from \eqref{5.26}, \eqref{5.27}, \eqref{5.29} and \eqref{5.31}.
 %  %
  \end{proof}
  \begin{pro} \label{pro 5.5}
  Under the assumptions of Theorem 3.1 we have
 \begin{equation}  \label{5.32}
 \Vert \Ujw(t,s;\kappa')f - U_w(t,s)f \Vert_{a} \leq C_a\rho\{\rho+\sigma(|t-\kappa'|)\}\Vert f \Vert_{a+6}, \quad 0 \leq s \leq t \leq T
 \end{equation}
 with $\rho = t -s$ for $J = L, R, \kappa' \in [0,T]$ and $\adots$.
  \end{pro}
\begin{proof}
We note that \eqref{5.22} and \eqref{5.23} in Lemma 5.4 are corresponding to \eqref{4.1} and \eqref{4.2} in \cite{Ichinose 2023}, respectively.  Accordingly,
we can prove \eqref{5.33} below from Lemma 5.4  as in the proof of (4.14) in \cite{Ichinose 2023}, where we replace $\mathcal{C}(t,s), U(t,s)$ and $\sqrt{t-s}R(t,s)$ with $\Ujw(t,s), U_w(t,s)$ and $\Rjw(t,s)$ respectively. Writing $\rho = t -s$, from \eqref{2.3} we  have
\begin{align} \label{5.33}
&  i\left\{\Ujw(t,s)f - U_w(t,s)f\right\} = \rho\int_0^1\sqrt{\theta}\Rjw(s+\theta\rho,s)fd\theta + \frac{\rho^2}{i}\int_0^1\theta H_w(s+\theta\rho)d\theta \notag \\
& \quad \cdot\int_0^1\bigl\{H_w(s+\theta'\theta\rho)\Ujw(s+\theta'\theta\rho,s)f 
+ \sqrt{\theta'\theta}\Rjw(s+\theta'\theta\rho,s)f\bigr\}d\theta'\notag \\
& \quad - \frac{\rho^2}{i}\int_0^1\theta H_w(s+\theta\rho)d\theta \int_0^1H_w(s+\theta'\theta\rho)U_w(s+\theta'\theta\rho,s)f d\theta'.
\end{align}
Applying \eqref{2.15}, \eqref{5.19} and \eqref{5.23} to \eqref{5.33}, we have
\begin{align*} 
  \Vert\Ujw(\ts;\kappa')f &- U_w(t,s)f \Vert_a  \leq C\Bigl[ \rho\{\rho+\sigma(|t-\kappa'|)\}\Vert f\Vert_{a+4} +  \rho^{2}\Vert f\Vert_{a+4}
\\
& + \rho^2\{\rho+\sigma(|t-\kappa'|)\}\Vert f\Vert_{a+6}\Bigr] 
 \leq C'\rho\{\rho+\sigma(|t-\kappa'|)\}\Vert f\Vert_{a+6},
\end{align*}
which shows \eqref{5.32}.
\end{proof}
	Now we will prove Theorem 3.1.  The inequalities \eqref{5.32} show
\begin{align} \label{5.34}
& \Vert \Ulw(\kappa_j,\tau_j;\kappa'_j)f - U_w(\kappa_j,\tau_j)f\Vert_a 
 \leq 2C_a(\kappa_j - \tau_j)\sigma(|\Delta|)\Vert f \Vert_{a+6}, \notag \\
 & \Vert \Urw(\tau_{j+1},\kappa_j;\kappa'_j)f - U_w(\tau_{j+1},\kappa_j)f\Vert_a 
 \leq 2C_a(\tau_{j+1} - \kappa_j)\sigma(|\Delta|)\Vert f \Vert_{a+6}
\end{align}
for $j= 0,1,\dots,\nu-1$ because $0 \leq \varrho \leq \sigma(\varrho)$ and $\sigma(\varrho)$ is non-decreasing. Let's consider \eqref{3.2}.  Then we can write
\begin{align} \label{5.35}
& U(t,\kappa_{\nu-1})e^{-(t-\tau_{\nu-1})W_{\ds}(\kappa'_{\nu-1})}
U(\kappa_{\nu-1},\kappa_{\nu-2})e^{-(\tau_{\nu-1}-\tau_{\nu-2})W_{\ds}(\kappa'_{\nu-2})} \cdots e^{-(\tau_{2}-\tau_{1})W_{\ds}(\kappa'_{1})}   \notag \\
&\quad \cdot U(\kappa_{1},\kappa_{0})e^{-\tau_{1}W_{\ds}(\kappa'_{0})}U(\kappa_0,0)f
 =  U(t,\kappa_{\nu-1})e^{-(t-\kappa_{\nu-1})W_{\ds}(\kappa'_{\nu-1})} \notag \\
& \quad \cdot e^{-(\kappa_{\nu-1}-\tau_{\nu-1})W_{\ds}(\kappa'_{\nu-1})}
 U(\kappa_{\nu-1},\tau_{\nu-1})\cdot U(\tau_{\nu-1},\kappa_{\nu-2})e^{-(\tau_{\nu-1}-\kappa_{\nu-2})W_{\ds}(\kappa'_{\nu-2})} 
\notag \\
&\quad \cdot e^{-(\kappa_{\nu-2}-\tau_{\nu-2})W_{\ds}(\kappa'_{\nu-2})} \cdots e^{-(\tau_{2}-\kappa_{1})W_{\ds}(\kappa'_{1})} \cdot e^{-(\kappa_{1}-\tau_{1})W_{\ds}(\kappa'_{1})}  U(\kappa_{1},\tau_1) \notag \\
& \quad \cdot U(\tau_{1},\kappa_{0})e^{-(\tau_{1}-\kappa_{0})W_{\ds}(\kappa'_{0})}
\cdot e^{-\kappa_{0}W_{\ds}(\kappa'_{0})} U(\kappa_{0},0)f \notag \\
& =  \Urw(t,\kappa_{\nu-1};\kappa'_{\nu-1})\Ulw(\kappa_{\nu-1},\tau_{\nu-1};\kappa'_{\nu-1})\Urw(\tau_{\nu-1},\kappa_{\nu-2};\kappa'_{\nu-2})
\notag \\
& \quad \cdot  \cdots \Ulw(\kappa_{1},\tau_{1};\kappa'_{1}) \Urw(\tau_{1},\kappa_{0};\kappa'_{0}) \Ulw(\kappa_{0},0;\kappa'_{0})f.
\end{align}
We also write
\begin{align} \label{5.36}
U_{W}(t,0)f =  & \ U_{W}(t,\kappa_{\nu-1})U_{W}(\kappa_{\nu-1},\tau_{\nu-1}) U_{W}(\tau_{\nu-1},\kappa_{\nu-2})\cdots \notag \\
& \quad \cdot U_{W}(\kappa_{1},\tau_{1}) U_{W}(\tau_{1},\kappa_{0})U_{W}(\kappa_{0},0)f
\end{align}
from Theorem 2.1.  Hence, to prove \eqref{3.2}, we have only to prove
\begin{align} \label{5.37}
& \lim_{|\Delta|\to 0} \bigl\Vert \Urw(t,\kappa_{\nu-1};\kappa'_{\nu-1})\Ulw(\kappa_{\nu-1},\tau_{\nu-1};\kappa'_{\nu-1})\Urw(\tau_{\nu-1},\kappa_{\nu-2};\kappa'_{\nu-2})\cdots  \notag \\
&\qquad  \cdot \Urw(\tau_{1},\kappa_{0};\kappa'_{0}) \Ulw(\kappa_{0},0;\kappa'_{0})f - U_{W}(t,\kappa_{\nu-1})
U_{W}(\kappa_{\nu-1},\tau_{\nu-1})\cdot \notag \\
& \qquad \quad \cdot U_{W}(\tau_{\nu-1},\kappa_{\nu-2})\cdots U_{W}(\tau_{1},\kappa_{0})U_{W}(\kappa_{0},0)f
\bigr\Vert_{a}= 0
\end{align}
 uniformly in $t \in [0,T]$.
Using \eqref{5.6}, \eqref{5.19} and \eqref{5.34}, and  following (5.3) - (5.7)  in the proof of Theorem 2.2 of \cite{Ichinose 2023}, we can easily complete the proof of \eqref{5.37}. 
%
% Section 6
%%
\section{Proof of Theorems 3.2 - 3.5}
	   We assume $C_W = 0$ in Assumption 2.D throughout \S 6 as noted in \S 5.
\begin{lem} \label{lem 6.1}
We assume \eqref{2.11} and \eqref{2.12}.  Let $\varrho_0 > 0$ be a constant. Then we have
 \begin{equation}  \label{6.1}
 \Vert \partial_x^{\alpha}e^{-\varrho\Wstx}\Vert_{\bC^l} \leq C_{\alpha} < \infty, \quad 0 \leq \varrho \leq \varrho_0
 \end{equation}
 in $\domain$ for all $\alpha$ with constants $C_{\alpha} \geq 0$ independent of $\varrho \in [0,\varrho_0]$.
\end{lem}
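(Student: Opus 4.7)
The plan is to combine a Duhamel-type iterated-integral representation of $\partial_x^{\alpha} e^{-\varrho W_{\ds}(t,x)}$ with the standing pointwise bounds $W_{\ds}(t,x) \geq w(t,x)I \geq 0$ (from \eqref{2.11} with the convention $C_W = 0$) and $\Vert \partial_x^{\beta} W_{\ds}(t,x)\Vert_{\bC^l} \leq C_{\beta}\bigl(1 + w(t,x)\bigr)$ for $|\beta|\geq 1$ (from \eqref{2.12}). The base case $\alpha = 0$ is immediate: since $W_{\ds}$ is Hermitian with $W_{\ds} \geq 0$, one has $\Vert e^{-\varrho W_{\ds}(t,x)}\Vert_{\bC^l} \leq e^{-\varrho w(t,x)} \leq 1$.

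For $|\alpha| \geq 1$ I would iterate the first-order identity \eqref{5.20} to obtain, by induction on $|\alpha|$, a representation
\begin{equation*}
\partial_x^{\alpha} e^{-\varrho W_{\ds}} = \sum_{k=1}^{|\alpha|} \sum_{\substack{\beta_1+\cdots+\beta_k=\alpha \\ |\beta_i|\geq 1}} c_{\beta_1,\dots,\beta_k}\, \varrho^{k} \int e^{-\varrho_{1} W_{\ds}} (\partial_x^{\beta_{1}} W_{\ds})\, e^{-\varrho_{2} W_{\ds}} \cdots (\partial_x^{\beta_{k}} W_{\ds})\, e^{-\varrho_{k+1} W_{\ds}}\, d\theta_{1}\cdots d\theta_{k},
\end{equation*}
where the integration runs over a $k$-simplex parametrising nonnegative $\varrho_1,\dots,\varrho_{k+1}$ with $\sum_{i}\varrho_i = \varrho$, and $c_{\beta_1,\dots,\beta_k}$ are combinatorial constants whose precise values do not matter for the estimate. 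The inductive step is simply: differentiate the hypothesis once more using the Leibniz rule, apply \eqref{5.20} to each exponential factor that gets hit by the new derivative, and observe that every new term still has the same structural form with $k$ increased by at most one and the total derivative order in the $\beta_i$'s increased by exactly one.

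Taking $\bC^l$-norms, each exponential contributes $\Vert e^{-\varrho_{i} W_{\ds}}\Vert_{\bC^l} \leq e^{-\varrho_{i} w(t,x)}$ and each derivative factor contributes $C_{\beta_i}(1+w(t,x))$ by \eqref{2.12}; since $\sum_i \varrho_i = \varrho$ the exponentials multiply to $e^{-\varrho w(t,x)}$, leaving the pointwise bound
\begin{equation*}
\Vert \partial_x^{\alpha} e^{-\varrho W_{\ds}(t,x)}\Vert_{\bC^l} \leq C_{\alpha} \sum_{k=1}^{|\alpha|} \varrho^{k} \bigl(1 + w(t,x)\bigr)^{k} e^{-\varrho w(t,x)}.
\end{equation*}
Setting $u := \varrho\, w(t,x) \geq 0$ gives the algebraic identity $\varrho^{k}(1+w)^{k}\,e^{-\varrho w} = (\varrho + u)^{k} e^{-u}$, which by the binomial theorem is a sum of terms $\binom{k}{j}\varrho^{k-j} u^{j} e^{-u}$; each of these is uniformly bounded for $\varrho \in [0,\varrho_0]$ and $u \in [0,\infty)$ by $\varrho_0^{k-j}(j/e)^{j}$. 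This yields \eqref{6.1} with $C_{\alpha}$ independent of $\varrho$ and of $(t,x)\in \domain$.

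The only real obstacle is writing down the iterated-integral representation cleanly; once that structural form is in hand, the analytic estimate reduces to the one-line observation that $(\varrho+u)^{k} e^{-u}$ is bounded on $[0,\varrho_0]\times[0,\infty)$. If one wishes to avoid the combinatorics of the simplex formula entirely, an equivalent route is to prove the $\bC^l$-norm bound itself by direct induction on $|\alpha|$: applying $\partial_{x_j}$ to the inductive estimate via \eqref{5.20} produces only products of terms already controlled by the induction hypothesis and by \eqref{2.12}, and the same $(\varrho+u)^{k}e^{-u}$ bound closes the argument.
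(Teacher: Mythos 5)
Your proposal is correct and follows essentially the same route as the paper: differentiating the identity \eqref{5.20}, using $\Vert e^{-\varrho W_{\ds}}\Vert_{\bC^l}\leq e^{-\varrho w}$ together with \eqref{2.12} to obtain the pointwise bound $C_{\alpha}e^{-\varrho w}\sum_{k=1}^{|\alpha|}\varrho^{k}(1+w)^{k}$ (the paper's \eqref{6.4}), and then observing that a polynomial in $\varrho w$ times $e^{-\varrho w}$ is uniformly bounded. Your explicit Duhamel/simplex representation is just a more formal packaging of the paper's inductive differentiation of \eqref{5.20}, and the $(\varrho+u)^{k}e^{-u}$ bound is the same closing estimate as the paper's $\sup_{a\geq 0}(1+a)e^{-a}<\infty$.
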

\begin{proof}
When $\alpha = 0$, \eqref{6.1} is clear. Let $|\alpha| = 1$. Applying \eqref{2.11} and \eqref{2.12} to \eqref{5.20}, we have
\begin{align*} 
& \bigl\Vert \partial_{x_j}\,e^{-\varrho\Wstx}\bigr\Vert_{\bC^l} \leq \varrho \int_0^1  \bigl\Vert e^{-(1-\theta)\varrho\Wstx}\bigr\Vert_{\bC^l} 
\cdot \bigl\Vert \partial_{x_j}\,\Wstx \bigr\Vert_{\bC^l}\cdot \bigl\Vert e^{-\theta\varrho\Wstx}\bigr\Vert_{\bC^l} d\theta
\\
& \leq \varrho \int_0^1 e^{-(1-\theta)\varrho w(t,x)}C_{\alpha}(1+w(t,x)) e^{-\theta\varrho w(t,x)}d\theta
  \end{align*}
with constants $C_{\alpha}$ independent of $\varrho$ from the inequality mentioned below, which shows 
 \begin{equation}  \label{6.2}
 \Vert \partial_{x_{j}}\,e^{-\varrho\Wstx}\Vert_{\bC^l} \leq C_{\alpha}\varrho(1+w(t,x)) e^{-\varrho w(t,x)}, \quad 0 \leq \varrho \leq \varrho_0.
 \end{equation}
 Here we used 
\begin{equation*} 
  \Vert e^{-\varrho\Wstx}\Vert_{\bC^l} = \sup_{\df \in \bC^l,|\df|=1}<e^{-\varrho\Wstx}\df,\df>_{\bC^l}\, \leq  e^{-\varrho w(t,x)}
    \end{equation*}
 (cf. (3.20) in \cite{Mizohata}) because we have $e^{-\varrho\Wstx} \leq  e^{-\varrho w(t,x)}$ from \eqref{2.11}, where $<\cdot,\cdot>_{\bC^l}$ denotes the inner product of $\bC^l$.
 Consequently we have
 \begin{align}  \label{6.3}
 &\Vert \partial_{x_{j}}\, e^{-\varrho\Wstx}\Vert_{\bC^l}  \leq C'_{\alpha}(1+\varrho w) e^{-\varrho w}  \notag\\
 &\quad \leq C'_{\alpha}\sup_{a\geq 0}(1+a)e^{-a} < \infty, \quad 0 \leq \varrho \leq \varrho_0,
 \end{align}
 which shows \eqref{6.1} with $|\alpha| = 1$.
 \par
 	From \eqref{5.20} we have
 \begin{align*} 
&\partial_{x_k}\partial_{x_j}\,e^{-\varrho W_{\ds}(t,x)}  = 
-\varrho \int_0^1 \Bigl\{\partial_{x_k}e^{-(1-\theta)\varrho W_{\ds}(t,x)}\cdot \partial_{x_j} W_{\ds}(t,x)\cdot 
e^{-\theta\varrho W_{\ds}(t,x)}
\notag \\
&\quad   + e^{-(1-\theta)\varrho W_{\ds}(t,x)}\partial_{x_k}\partial_{x_j} W_{\ds}(t,x)\cdot 
e^{-\theta\varrho W_{\ds}(t,x)}  
\\
 & \qquad + e^{-(1-\theta)\varrho W_{\ds}(t,x)}\partial_{x_j} W_{\ds}(t,x)\cdot 
\partial_{x_k}e^{-\theta\varrho W_{\ds}(t,x)}\Bigr\} d\theta.
 \end{align*}
Hence, using \eqref{2.12} and \eqref{6.2}, we see
\begin{align*} 
& \bigl\Vert \partial_{x_k}\partial_{x_j}\,e^{-\varrho\Wstx}\bigr\Vert_{\bC^l} 
\leq C \varrho \int_0^1 \Bigl\{(1-\theta) \varrho  (1+w)e^{-(1-\theta)\varrho w} (1+w)e^{-\theta\varrho w}\notag \\
&\quad + e^{-(1-\theta)\varrho w}  (1+w)e^{-\theta\varrho w} +  e^{-(1-\theta)\varrho w}(1+w)\theta\varrho(1+w)e^{-\theta\varrho w}\Bigr\}d\theta,
  \end{align*}
which shows 
\begin{equation*} 
 \bigl\Vert \partial_{x_k}\partial_{x_j}\,e^{-\varrho\Wstx}\bigr\Vert_{\bC^l} 
\leq C  e^{-\varrho w(t,x)}\bigl\{\varrho(1+w(t,x)) + \varrho^{2}(1+w(t,x))^{2} \bigr\}.
  \end{equation*}
Consequently we can prove \eqref{6.1} with $|\alpha| = 2$ as in the proof of \eqref{6.3}.  In the same way we can prove
\begin{equation}  \label{6.4}
 \bigl\Vert \partial_{x}^{\alpha}\,e^{-\varrho\Wstx}\bigr\Vert_{\bC^l}  
\leq C_{\alpha}  e^{-\varrho w(t,x)}\sum_{n=1}^{|\alpha|}
\varrho^{n}(1+w(t,x))^{n} ,
  \end{equation}
which shows \eqref{6.1} in general.
\end{proof}
{\bf Proof of Theorems 3.2 and 3.3.}
We will first prove Theorem 3.3. Let $f \in (B^{a})^{l}\ (\adots)$.  We can see from Lemma 6.1 that $Z_j(x) = \exp\bigl\{-(\tau_{j+1}-\tau_{j})W_{\ds}(\kappa_{j},x)\bigr\}\ (j = 0,1,2,\dots,\nu-1)$ satisfy \eqref{2.19} with $\mathfrak{M}_j = 0$.  Consequently, from Theorem 2.5 we see that there exists 
	$\Bigl<\prod_{j=0}^{\nu-1}\exp\bigl\{-(\tau_{j+1}-\tau_j) 
	W_{\ds}(\kappa_j,q(\kappa_j))\bigr\}\Bigr>f$ in $(B^a)^l$ and we have
\begin{align} \label{6.5}
& \left<\prod_{j=0}^{\nu-1}\exp\bigl\{-(\tau_{j+1}-\tau_j)W_{\ds}(\kappa_j,q(\kappa_j))\bigr\}\right>f 
= U(t,\kappa_{\nu-1})e^{-(t-\tau_{\nu-1})W_{\ds}(\kappa_{\nu-1})} \notag \\
& 
\quad \cdot U(\kappa_{\nu-1},\kappa_{\nu-2})e^{-(\tau_{\nu-1}-\tau_{\nu-2})W_{\ds}(\kappa_{\nu-2})}U(\kappa_{\nu-2},\kappa_{\nu-3}) \cdots \notag \\
& \qquad \cdot
e^{-(\tau_{2}-\tau_{1})W_{\ds}(\kappa_{1})}U(\kappa_{1},\kappa_{0})e^{-\tau_{1}W_{\ds}(\kappa_{0})}U(\kappa_0,0)f
\end{align}
in $(B^{a})^{l}$.
Applying \eqref{3.2} to the right-hand side of \eqref{6.5}, we have \eqref{3.3} from (3) in Theorem 2.3,
which completes the proof of Theorem 3.3.
\par
	Theorem 3.2 can be derived from Theorem 3.3 as  Theorem 2.1 was done from Theorem 2.2 in \cite{Ichinose 2023}.  That is, 
we first determine a potential $(V',A')$ satisfying Assumption 2.C from Assumption 2.A.  Then Theorem 3.3 with this potential $(V',A')$ holds.
In the end, using the gauge transformation \eqref{2.18}, we can complete the proof of Theorem 3.2.
\par

\vspace{0.5cm}
	{\bf Proof of Theorems 3.4 and 3.5.}
	Let $\omega(\varrho)\ (0 \leq \varrho \leq 1)$ be the function in Theorem 3.4.  We can write
\begin{equation}  \label{6.6}
\omega(\varrho) = \varrho \int_0^1\omega'(\theta\varrho)d\theta \equiv \varrho\omega_1(\varrho)
  \end{equation}
from $\omega(0) = 0$.  Then we have
\begin{equation}  \label{6.7}
0 \leq \omega_1(\varrho)\leq \omega'(\varrho), \quad \rho \in [0,1]
  \end{equation}
because of $\omega'(\theta\varrho) \leq \omega'(\varrho)\ (0 \leq \theta \leq 1)$.
We write 
 \begin{align} \label{6.8}
& \Utlw(t,s;\lambda,\kappa')f = e^{-(t-s)\omega_1(\lambda)W_{\ds}(\kappa')} U(t,s)f, \notag \\
 &  \Utrw(t,s;\lambda,\kappa')f = U(t,s)\left( e^{-(t-s)\omega_1(\lambda)W_{\ds}(\kappa')}f\right)
    \end{align}
with constants $\lambda \in [0,1]$ and $\kappa' \in [0,T]$ as in \eqref{5.17}.
\par
	We can easily see the following from Proposition 5.3, where we replace $W_{\ds}(\kappa')$ with $\omega_1(\lambda)W_{\ds}(\kappa')$.
	\begin{pro} \label{pro 6.2}
Suppose the same assumptions as in Theorem 3.4. Then there exist constants $C_a \geq 0$ and $C'_a \geq 0 \ (\adots)$
such that 
 \begin{equation} \label{6.9}
\Vert e^{-(t-s)\omega_1(\lambda)W_{\ds}(\kappa')}f\Vert_a \leq e^{C_a(t-s)}\Vert f \Vert_{a}, 
 \end{equation}
 \begin{equation} \label{6.10}
\Vert \Utjw(t,s;\lambda,\kappa')f\Vert_a \leq e^{C'_a(t-s)}\Vert f \Vert_{a}, \ 0 \leq s \leq t \leq T
 \end{equation} 
 for $J = L, R, \kappa' \in [0,T]$ and $\lambda \in [0,1]$.
	\end{pro}
	\begin{lem} \label{lem 6.3}
Under the assumptions of Theorem 3.4 we have
 \begin{equation}  \label{6.11}
  \bigl[i\partial_{t} - H(t) - H_{\ds}(t,x) \bigr]\Utjw(t,s;\lambda,\kappa')f = \Rtjw(t,s;\lambda,\kappa')f,
 \end{equation}
 \begin{equation}  \label{6.12}
 \Vert\Rtjw(t,s;\lambda,\kappa')f\Vert_{a} \leq C_{a}\{t-s+\omega'(\lambda)\}\Vert f \Vert_{a+4},\ 0 \leq s \leq t \leq T
 \end{equation}
 for $J = L, R, \kappa' \in [0,T], \lambda \in [0,1]$ and $a = 0,1,2,\dots$.	%
	\end{lem}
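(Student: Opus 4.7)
The plan is to follow the blueprint of Lemma 5.4 with two structural changes. The absorption term $+iW_{\ds}(t)$ is no longer present on the left-hand side of \eqref{6.11}, so the telescoping cancellation of \eqref{5.25} is unavailable; to compensate, every surviving piece in $\Rtjw$ will carry a factor of $\omega_1(\lambda)$ or $(t-s)\omega_1(\lambda)$, and by \eqref{6.7} these are harmless since $\omega_1(\lambda)\le\omega'(\lambda)$ and $(t-s)\omega_1(\lambda)\le(t-s)\omega'(1)$. Setting $E(t):=\exp\{-(t-s)\omega_1(\lambda)W_{\ds}(\kappa')\}$ and using that $U(t,s)$ solves the free equation (Theorem 2.1), the Leibniz rule gives
\[
\Rtrw(t,s;\lambda,\kappa')f=-i\omega_1(\lambda)\,U(t,s)W_{\ds}(\kappa')E(t)f,
\]
\[
\Rtlw(t,s;\lambda,\kappa')f=-i\omega_1(\lambda)W_{\ds}(\kappa')\Utlw f+[E(t),H(t)+H_{\ds}(t)]U(t,s)f.
\]

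The two pieces involving $\omega_1(\lambda)W_{\ds}(\kappa')$ are handled directly: assumption \eqref{2.13} integrated from the origin makes $W_{\ds}(\kappa',\cdot)$ a bounded multiplier $B^{a+2}\to B^a$, Lemma 6.1 together with \eqref{6.9} and \eqref{6.10} controls $E(t)f$ and $\Utlw f$ in $B^{a+2}$-norm, and $\omega_1(\lambda)\le\omega'(\lambda)$ yields a contribution $\le C_a\omega'(\lambda)\|f\|_{a+2}$, comfortably inside the target bound.

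The heart of the argument, and the main obstacle, is the commutator $[E(t),H(t)+H_{\ds}(t)]U(t,s)f$ in $\Rtlw$. The key technical input is the pointwise estimate $\|\partial_x^\alpha E(t)\|_{\bC^l}\le C_\alpha(t-s)\omega_1(\lambda)<x>^{|\alpha|}$ for $|\alpha|\ge 1$, obtained by iterating \eqref{5.20} and using $\|e^{-\varrho W_{\ds}}\|_{\bC^l}\le 1$ (valid since $W_{\ds}\ge 0$ after the reduction $C_W=0$) together with \eqref{2.13}. Since the scalar $V$ and $A_j$ commute with the matrix $E(t)$, the commutator $[E(t),H(t)]$ reduces to $[E(t),(i^{-1}\partial_{x_j})^2]$ and, expanded as in \eqref{5.28}, becomes a first-order differential operator with small-factor coefficients, producing a contribution $\le C(t-s)\|f\|_{a+3}$ via Theorem 2.1. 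For the pointwise matrix commutator $[E(t),H_{\ds}(t)]=[E(t)-I,H_{\ds}(t)]$, I would use the Duhamel representation
\[
E(t)-I=-(t-s)\omega_1(\lambda)\int_0^1 W_{\ds}(\kappa')e^{-\theta(t-s)\omega_1(\lambda)W_{\ds}(\kappa')}d\theta,
\]
combined with $\|H_{\ds}(t,\cdot)\|_{\bC^l}\le C<x>^2$ from \eqref{2.14} integrated and the analogous $\|W_{\ds}(\kappa',\cdot)\|_{\bC^l}\le C<x>^2$, to obtain a multiplier bound $C(t-s)\omega_1(\lambda)<x>^4$ and hence a contribution $\le C(t-s)\|f\|_{a+4}$ via Theorem 2.1. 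This $<x>^4$-weight is precisely what forces the regularity loss of $4$ in \eqref{6.12}. Summing all three pieces and absorbing the $\omega'(1)$-dependent constants yields the required estimate uniformly in $\kappa'\in[0,T]$ and $\lambda\in[0,1]$.
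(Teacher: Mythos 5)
Your proposal is correct and follows essentially the same route as the paper's proof: the same explicit identities for $\Rtjw$ (the paper's \eqref{6.13}--\eqref{6.14} are exactly your two Leibniz displays), the same three-term splitting of $\Rtlw$, and the same Sobolev-multiplier accounting with the $<x>^4$ weight from $[E(t),H_{\ds}(t)]$ forcing the $+4$ loss. The only cosmetic difference is that you credit the $[E(t),H(t)]$ commutator a loss of $3$ derivatives, whereas the paper's \eqref{5.29}-type argument gets it down to $2$; this does not matter since $[E(t),H_{\ds}(t)]$ already costs $4$, which dominates.
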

	\begin{proof} We omit $\lambda$ and $\kappa'$ in $\Utjw(t,s;\lambda,\kappa')$ for the sake of simplicity.
	We have
 \begin{align}  \label{6.13}
 & \bigl[i\partial_{t} - H(t) - H_{\ds}(t) \bigr]\Utrw(t,s)f \notag\\
 & =  \bigl[i\partial_{t} - H(t) - H_{\ds}(t)\bigr]U(t,s)\bigl(e^{-(t-s)\omega_1(\lambda)W_{\ds}(\kappa')}f\bigr) \notag \\
 & = -U(t,s)\Bigl\{i\omega_1(\lambda)W_{\ds}(\kappa')e^{-(t-s)\omega_1(\lambda)W_{\ds}(\kappa')}f\Bigr\}  \notag \\
 & =  -i\omega_1(\lambda)\Utrw(t,s)W_{\ds}(\kappa')f  = \Rtrw(t,s)f,
 \end{align}
 which shows 
 \begin{align*}  
 & \Vert \Rtrw(t,s)f \Vert_{a} \leq C_a\, \omega_1(\lambda)\Vert W_{\ds}(\kappa')f\Vert_a \leq C'_a\,\omega'(\lambda)
 \Vert f\Vert_{a+2} 
  \end{align*}
from \eqref{6.7} and \eqref{6.10}.  This proves  \eqref{6.12} with $J = R$.
\par
	Next we have
 \begin{align}  \label{6.14}
 & \bigl[i\partial_{t} - H(t) - H_{\ds}(t)\bigr]\Utlw(t,s;\lambda,\kappa')f \notag\\
 & =  \bigl[i\partial_{t} - H(t) - H_{\ds}(t)\bigr]e^{-(t-s)\omega_1(\lambda)W_{\ds}(\kappa')}U(t,s)f\notag \\
 & =  -i\omega_1(\lambda)W_{\ds}(\kappa')\Utlw(t,s)f - \bigl[H(t),e^{-(t-s)\omega_1(\lambda)W_{\ds}(\kappa')}\bigr]U(t,s)f \notag\\
 &\quad - \bigl[H_{\ds}(t),e^{-(t-s)\omega_1(\lambda)W_{\ds}(\kappa')}\bigr]U(t,s)f  = \Rtlw(t,s)f \equiv -\sum_{j=1}^3 I_j(t,s)f
 \end{align}
 as in the proof of \eqref{5.26}.
 We obtain 
 \begin{equation*}  
 \Vert I_{1}(t,s)f\Vert_{a} \leq C_a\,\omega_1(\lambda)\Vert \Utlw(t,s)f \Vert_{a+2} \leq C_a'\,\omega'(\lambda)\Vert f \Vert_{a+2}
 \end{equation*}
 from \eqref{6.7} and \eqref{6.10}.  We can prove
 \begin{equation*}  
 \Vert I_{2}(t,s)f\Vert_{a} \leq C_a(t-s)\Vert f \Vert_{a+2},\quad \Vert I_{3}(t,s)f\Vert_{a} \leq C_a(t-s)\Vert f \Vert_{a+4}
 \end{equation*}
 as in the proof of \eqref{5.29} and \eqref{5.31} respectively. Hence we obtain  \eqref{6.12} for $J = L$ from
 \eqref{6.14}.
	\end{proof}
	\begin{pro} \label{pro 6.4}
Under the assumptions of Theorem 3.4 we have
 \begin{equation}  \label{6.15}
 \Vert \Utjw(t,s;\lambda,\kappa')f - U(t,s)f \Vert_{a} \leq C_a\rho\{\rho+\omega'(\lambda)\}\Vert f \Vert_{a+6}, \quad 0 \leq s \leq t \leq T
 \end{equation}
 with $\rho = t -s$ for $J = L, R, \kappa' \in [0,T], \lambda \in [0,1]$ and $\adots$.
	\end{pro}
	\begin{proof}
We omit  $\kappa' \in [0,T]$ and $\lambda \in [0,1]$ for simplicity. We write
\[
\widetilde{H}(t) = H(t) + H_{\ds}(t)
\]
for a while. Then, using Lemma 6.3, we can prove \eqref{6.16} below  as in the proof of \eqref{4.14} in \cite{Ichinose 2023}, where we replace $\mathcal{C}(t,s), H_w(t)$ and $\sqrt{t-s}R(t,s)$ with $\Utjw(t,s), \widetilde{H}(t)$ and $\Rtjw(t,s)$ respectively. We have
\begin{align} \label{6.16}
&  i\left\{\Utjw(t,s)f - U(t,s)f\right\} = \rho\int_0^1\sqrt{\theta}\Rtjw(s+\theta\rho,s)fd\theta + \frac{\rho^2}{i}\int_0^1\theta \Htilde(s+\theta\rho)d\theta \notag \\
& \quad \cdot\int_0^1\Bigl\{\Htilde(s+\theta'\theta\rho)\Utjw(s+\theta'\theta\rho,s)f 
+ \sqrt{\theta'\theta}\Rtjw(s+\theta'\theta\rho,s)f\Bigr\}d\theta'\notag \\
& \quad - \frac{\rho^2}{i}\int_0^1\theta \Htilde(s+\theta\rho)d\theta \int_0^1\Htilde(s+\theta'\theta\rho)U(s+\theta'\theta\rho,s)f d\theta'.
\end{align}
Hence, applying \eqref{2.15} with $W_{\ds} = 0$, \eqref{6.10}  and \eqref{6.12} to \eqref{6.16}, we obtain
\begin{align*} 
 & \Vert\Utjw(\ts)f - U(t,s)f \Vert_a  \leq C_a \Bigl[\rho\,\{\rho+\omega'(\lambda)\}\Vert f\Vert_{a+4} +  \rho^{2}\Vert f\Vert_{a+4}
+  \\
&\hspace{1cm} + \rho^2\,\{\rho+\omega'(\lambda)\}\Vert f\Vert_{a+6}\Bigr]\leq C'_a\rho\{\rho + \omega'(\lambda)\}\Vert f\Vert_{a+6},
\end{align*}
which shows \eqref{6.15}.
	\end{proof}
	{\bf Proof of Theorem 3.4.}  Let 
 \begin{equation}  \label{6.17}
 \rho_j = \tau_{j+1}-\tau_j\quad (j= 0,1,2,\dots,\nu-1).
 \end{equation}
 Proposition 6.4 shows 
 \begin{align} \label{6.18}
 &\Vert \Utlw(\kappa_{j},\tau_{j};\rho_{j},\kappa'_{j})f - U(\kappa_{j},\tau_{j})f \Vert_{a} \leq 2C_a\rho_j\,\omega'(|\Delta|)\Vert f \Vert_{a+6}, \notag \\
 & \Vert \Utrw(\tau_{j+1},\kappa_{j};\rho_{j},\kappa'_{j})f - U(\kappa_{j},\tau_{j})f \Vert_{a} \leq 2C_a\rho_j\,\omega'(|\Delta|)\Vert f \Vert_{a+6}
 \end{align}
 from $0 \leq \rho_j \leq \omega'(\rho_j) \leq \omega'(|\Delta|)$ for $j= 0,1,\dots,\nu-1$ and $a = 0,1,2,\dots.$
 \par
 	Using \eqref{6.6}, as in the proof of \eqref{5.35} we have
\begin{align} \label{6.19}
& U(t,\kappa_{\nu-1})e^{-\omega(t-\tau_{\nu-1})W_{\ds}(\kappa'_{\nu-1})}
U(\kappa_{\nu-1},\kappa_{\nu-2})e^{-\omega(\tau_{\nu-1}-\tau_{\nu-2})W_{\ds}(\kappa'_{\nu-2})}\cdots 
\notag \\
&\quad \cdot e^{-\omega(\tau_{2}-\tau_{1})W_{\ds}(\kappa'_{1})} U(\kappa_{1},\kappa_{0})e^{-\omega(\tau_{1})W_{\ds}(\kappa'_{0})}U(\kappa_0,0)f
 \notag \\
 & = U(t,\kappa_{\nu-1})e^{-(t-\tau_{\nu-1})\omega_1(\rho_{\nu-1})W_{\ds}(\kappa'_{\nu-1})}
U(\kappa_{\nu-1},\kappa_{\nu-2})e^{-(\tau_{\nu-1}-\tau_{\nu-2})\omega_1(\rho_{\nu-2})W_{\ds}(\kappa'_{\nu-2})} \notag \\
& \quad \cdot  \cdots e^{-(\tau_{2}-\tau_{1})\omega_1(\rho_{1})W_{\ds}(\kappa'_{1})}U(\kappa_{1},\kappa_{0})e^{-\tau_{1}\omega_1(\rho_{0})W_{\ds}(\kappa'_{0})}U(\kappa_0,0)f
 \notag \\
& =  U(t,\kappa_{\nu-1})e^{-(t-\kappa_{\nu-1})\omega_1(\rho_{\nu-1})W_{\ds}(\kappa'_{\nu-1})}\cdot e^{-(\kappa_{\nu-1}-\tau_{\nu-1})\omega_1(\rho_{\nu-1})W_{\ds}(\kappa'_{\nu-1})} U(\kappa_{\nu-1},\tau_{\nu-1})
\notag \\
&\quad \cdot U(\tau_{\nu-1},\kappa_{\nu-2})e^{-(\tau_{\nu-1}-\kappa_{\nu-2})\omega_1(\rho_{\nu-2})W_{\ds}(\kappa'_{\nu-2})}
 \cdots  e^{-(\kappa_{1}-\tau_{1})\omega_1(\rho_{1})W_{\ds}(\kappa'_{1})} U(\kappa_{1},\tau_{1})
   \notag \\
& \qquad \cdot U(\tau_{1},\kappa_{0})e^{-(\tau_{1}-\kappa_{0})\omega_1(\rho_{0})W_{\ds}(\kappa'_{0})}\cdot e^{-\kappa_{0}\omega_1(\rho_{0})W_{\ds}(\kappa'_{0})} U(\kappa_{0},0)f \notag \\
&
=  \Utrw(t,\kappa_{\nu-1};\rho_{\nu-1},\kappa'_{\nu-1})\Utlw(\kappa_{\nu-1},\tau_{\nu-1};\rho_{\nu-1},\kappa'_{\nu-1})
\Utrw(\tau_{\nu-1},\kappa_{\nu-2};\rho_{\nu-2},\kappa'_{\nu-2})
 \notag \\
 & 
 \quad \cdot \cdots \Utlw(\kappa_{1},\tau_{1};\rho_{1},\kappa'_{1})\Utrw(\tau_{1},\kappa_{0};\rho_{0},\kappa'_{0}) \Utlw(\kappa_{0},0;\rho_{0},\kappa'_{0})f.
\end{align}
Hence, using \eqref{5.6} with $W_{\ds}(t,x) = 0$, \eqref{6.10}, \eqref{6.18} and \eqref{6.19}, we can prove Theorem 3.4 as in the proof of Theorem 3.1.
\par
\vspace{0.5cm}
{\bf Proof of Theorem 3.5.}  It holds from Lemma 6.1 that we  have
 \begin{equation}  \label{6.20}
 \Vert \partial_x^{\alpha}e^{-\omega(\varrho)\Wstx}\Vert_{\bC^{l}} \leq C_{\alpha} < \infty, \quad 0 \leq \varrho \leq 1
 \end{equation}
 % %
in $\domain$ for all $\alpha$ with constants $C_{\alpha}$.
Hence, using Theorem 2.5, for $f \in (B^a)^l\ (\adots)$ we have
\begin{align} \label{6.21}
& \left<\prod_{j=0}^{\nu-1}\exp\bigl\{-\omega(\tau_{j+1}-\tau_j)W_{\ds}(\kappa_j,q(\kappa_j))\bigr\}\right>f 
= U(t,\kappa_{\nu-1})e^{-\omega(t-\tau_{\nu-1})W_{\ds}(\kappa_{\nu-1})} \notag \\
& 
\quad \cdot U(\kappa_{\nu-1},\kappa_{\nu-2})e^{-\omega(\tau_{\nu-1}-\tau_{\nu-2})W_{\ds}(\kappa_{\nu-2})}U(\kappa_{\nu-2},\kappa_{\nu-3}) \cdots e^{-\omega(\tau_{2}-\tau_{1})W_{\ds}(\kappa_{1})}U(\kappa_{1},\kappa_{0})
\notag \\
& \qquad \cdot
e^{-\omega(\tau_{1})W_{\ds}(\kappa_{0})}U(\kappa_0,0)f
\end{align}
in $(B^{a})^{l}$ as in the proof of \eqref{6.5}.  Applying \eqref{3.5} to the right-hand side of \eqref{6.21}, we have \eqref{3.6} from (3) in Theorem 2.3.
Thus we have proved Theorem 3.5.
%%%%%%%%%%%%%%%%%%%%%%%%%5
%%%%%%%%%%%%%%%%%%%%%%
\appendix
\section{ An extension of Zworski's theorem}
We will prove the following theorem which is an extension of Theorem 13.13 in \cite{Z}.
\begin{thm} \label{thm A.1}  Let $p_{ij}(x,\xi,x') \ (i,j = 1,2,\dots,l)$ be  functions satisfying \eqref{5.1} with $M=0$ for all
$\alpha, \beta$ and $\beta'$, and define the pseudo-differential operators $p_{ij}(x,\h D_x,x')$ with symbol $p_{ij}(x,\h\xi,x')$ by \eqref{5.2}, where $\h > 0$ is a constant.  We write
\[
p(x,\h D_x,x') = \Bigl(p_{ij}(x,\h D_x,x'); i,j =1,2,\dots,l\Bigr).
\]
  Then we have
 \begin{equation} \label{A.1}
\Vert p(x,\mathfrak{h}  D_x,x')\Vert_{(L^2)^l\to (L^2)^l} \leq \sup_{x,\xi}\Vert p(x,\xi,x)\Vert_{\bC^l} + O(\mathfrak{h} ).
 \end{equation}
\end{thm}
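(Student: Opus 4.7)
The plan is to reduce the matrix-valued double-symbol operator $p(x, \mathfrak{h} D_x, x')$ to a matrix-valued left-quantized \emph{single}-symbol operator plus an $O(\mathfrak{h})$ remainder, and then invoke a matrix-valued extension of Zworski's Theorem 13.13. For the reduction I would Taylor-expand the symbol about $x' = x$,
\begin{equation*}
p(x,\xi,x') = p(x,\xi,x) + \sum_{j=1}^d (x'_j - x_j)\, q_j(x,\xi,x'), \quad q_j := \int_0^1 \partial_{x'_j} p(x,\xi, x+\theta(x'-x))\,d\theta,
\end{equation*}
and then exploit the identity $(x'_j - x_j) e^{i(x-x')\cdot \xi/\mathfrak{h}} = -i\mathfrak{h}\, \partial_{\xi_j} e^{i(x-x')\cdot \xi/\mathfrak{h}}$ inside the oscillatory integral defining $p(x, \mathfrak{h} D_x, x')$, integrating by parts in $\xi_j$ to trade each factor $x'_j - x_j$ for a factor of $\mathfrak{h}$. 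The outcome is the decomposition
\begin{equation*}
p(x, \mathfrak{h} D_x, x') = p_0(x, \mathfrak{h} D_x) + \mathfrak{h}\, r(x, \mathfrak{h} D_x, x'),
\end{equation*}
with $p_0(x,\xi) := p(x,\xi,x)$ and $r$ a new matrix double symbol still satisfying \eqref{5.1} with $M=0$ uniformly in $\mathfrak{h} \in (0,1]$.

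\textbf{Matrix extension of Zworski 13.13 and conclusion.} I would then establish that for any matrix symbol $a(x,\xi) \in M_l(\bC)$ with all derivatives bounded,
\begin{equation*}
\|a(x, \mathfrak{h} D_x)\|_{(L^2)^l \to (L^2)^l} \leq \sup_{x,\xi} \|a(x,\xi)\|_{\bC^l} + O(\mathfrak{h}).
\end{equation*}
Setting $T = a(x, \mathfrak{h} D_x)$, the $C^*$-identity $\|T\|^2 = \|T^* T\|$ reduces the task to bounding the self-adjoint matrix pseudodifferential operator $T^* T$, whose principal symbol (by the adjoint and composition calculus of Kumano-go) is $a(x,\xi)^* a(x,\xi)$. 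With $M := \sup_{x,\xi} \|a(x,\xi)\|_{\bC^l}^2$, the matrix symbol $M I - a^* a$ is pointwise Hermitian positive semidefinite, so the matrix sharp G\aa rding inequality yields $((M I - T^* T) u, u) \geq -C\mathfrak{h}\, \|u\|^2$, i.e., $\|T^* T\| \leq M + O(\mathfrak{h})$; taking square roots gives the displayed bound. Applying this to $p_0$ yields the desired leading term $\sup_{x,\xi} \|p(x,\xi,x)\|_{\bC^l} + O(\mathfrak{h})$, while applying it to $r$ (after one further iteration of the reduction to turn its double symbol into a single symbol) gives $\|r(x, \mathfrak{h} D_x, x')\|_{(L^2)^l \to (L^2)^l} = O(1)$, so the remainder contributes $O(\mathfrak{h})$, and summation produces \eqref{A.1}.

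\textbf{Main obstacle.} The delicate ingredient is the matrix sharp G\aa rding inequality: the naive proof via smooth matrix square roots of $M I - a^* a$ breaks down wherever the eigenvalues of $a^* a$ touch $M$, since matrix square roots lose smoothness at such contact points. The standard way out is to use Weyl quantization together with a Wick/anti-Wick or Friedrichs symmetrization, which exploits pointwise matrix positivity of the symbol directly without requiring explicit square roots, at the cost of converting between Weyl and left quantizations; these differ by $O(\mathfrak{h})$ in operator norm, a routine consequence of Kumano-go's calculus. A secondary technicality is keeping all remainder estimates uniform in $\mathfrak{h} \in (0, 1]$, which follows from \eqref{5.1} because each integration by parts and each composition/adjoint reduction produces only further derivatives of $p$ or $a$, again uniformly bounded.
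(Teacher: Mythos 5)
Your proposal is correct and reaches the stated bound, but it takes a genuinely different route from the paper. You reduce (A.1) to bounding a single-symbol left-quantized operator (your $p_0(x,\mathfrak{h}D_x)$), and then prove the sharp matrix bound via the $C^*$-identity $\|T\|^2 = \|T^*T\|$ plus a matrix sharp G\aa rding inequality applied to the pointwise nonnegative Hermitian symbol $MI - a^*a$. The paper instead reduces to the Weyl quantization $b^w(x,\mathfrak{h}D_x)$ of the diagonal symbol $b(x,\xi)=p(x,\xi,x)$, conjugates by the FBI transform $\mathcal{T}_\varphi$ into the Fock space $H_\Phi(\bC^d)^l$, and approximates $a^w_\Phi$ by the Toeplitz operator $\Pi_\Phi M_a\Pi_\Phi$ (Zworski's Theorem 13.10); the matrix estimate then becomes trivial, because $\Pi_\Phi M_a\Pi_\Phi$ is multiplication by $a(z)$ sandwiched between projections, so $\|\Pi_\Phi M_a\Pi_\Phi\|\leq\sup_z\|a(z)\|_{\bC^l}$ follows from a one-line computation (the paper's (A.20)--(A.21)). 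In effect, the paper transfers the pointwise matrix operator norm directly to the operator norm via the Toeplitz picture, avoiding sharp G\aa rding entirely. Your route is more classical and self-contained (no Bargmann/FBI machinery), but it does lean on the matrix-valued sharp G\aa rding inequality, which you correctly identify as the delicate point: one cannot take smooth matrix square roots of $MI-a^*a$ where eigenvalues touch $M$, and the standard fix (anti-Wick/Friedrichs symmetrization, which preserves pointwise Hermitian positivity) is exactly what you propose. Two minor points: the ``one further iteration of the reduction'' to control $r(x,\mathfrak{h}D_x,x')$ is not needed --- the Calder\'on--Vaillancourt theorem bounds the double-symbol remainder directly, uniformly in $\mathfrak{h}\in(0,1]$; and one should note the degenerate case $M=0$ separately before taking square roots, though it is trivial there.
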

\begin{rem} \label{rem A.1} As a matter of fact, under the assumptions of Theorem A.1 we can  prove 
 \begin{equation} \label{A.2}
\Vert p(x,\mathfrak{h}  D_x,x')\Vert_{(L^2)^l\to (L^2)^l} = \sup_{x,\xi}\Vert p(x,\xi,x)\Vert_{\bC^l} + O(\mathfrak{h} ).
\end{equation}
We don't give the proof of the opposite inequality of \eqref{A.1} in the present paper, because 
 we don't use it. Its proof  will be published in 
\cite{Ichinose 2024} elsewhere.
\end{rem}
\begin{proof}  We write
 \begin{equation} \label{A.3}
b(x,\xi) = \Bigl(b_{ij}(x,\xi);i,j=1,2,\dots,l\Bigr) : =  \Bigl(p_{ij}(x,\xi,x);i,j=1,2,\dots,l\Bigr).
 \end{equation}
Let $b_{ij}^w(x,\h D_x)\ (i,j = 1,2,\dots,l)$ be the Weyl operators defined by
 \begin{equation} \label{A.4}
 b_{ij}^w(x,\h D_x)f = 
\int e^{ix\cdot \xi}\ \dbar\xi \int e^{-iy\cdot \xi}b_{ij}((x+y)/2,\h\xi)f(y)dy
 \end{equation}
for $f \in \Sspace(\bR^d)$ (cf. (4.1.1) in \cite{Z}).  Then from Theorem 13.13 in \cite{Z} we have
 \begin{equation*} 
\Vert b_{ij}^w(x,\h D_x)\Vert_{L^2\to L^2} = \sup_{x,\xi} |b_{ij}(x,\xi)| + O(\mathfrak{h}) = \sup_{x,\xi} |p_{ij}(x,\xi,x)| + O(\mathfrak{h})
\end{equation*}
 and also have
 \begin{equation*} 
\Vert p_{ij}(x,\h D_x,x')\Vert_{L^2\to L^2} = \sup_{x,\xi} |p_{ij}(x,\xi,x)| + O(\mathfrak{h})
\end{equation*}
(cf. (A.12) in \cite{Ichinose 2023}), which show
\par
 \begin{equation} \label{A.5}
\Vert b_{ij}^w(x,\h D_x)\Vert_{L^2\to L^2} = \Vert p_{ij}(x,\h D_x,x')\Vert_{L^2\to L^2}+ O(\mathfrak{h} ).
 \end{equation}
  Hence we have only  to prove
 \begin{equation} \label{A.6}
\Vert b^w(x,\mathfrak{h}  D_x)\Vert_{(L^2)^l\to (L^2)^l} \leq \sup_{x,\xi}\Vert b(x,\xi)\Vert_{\bC^l} + O(\mathfrak{h} )
 \end{equation}
for the proof of \eqref{A.1},
where $b^w(x,\mathfrak{h}  D_x) =  \Bigl(b_{ij}^w(x,\h D_x);i,j= 1,2,\dots,l\Bigr)$.  The inequality \eqref{A.6} is essentially proved as in 
 the proof of Theorem 13.13 in \cite{Z}.  For this reason, we give only a rough sketch of its proof, using the notations and the results in \cite{Z} without detailed explanation.
\par
	For a multi-index $\alpha = (\alpha_1,\dots,\alpha_d)$ and $z_j = x_j + iy_j \in \bC\ (x_j,y_j \in \bR, 
	j= 1,2,\dots,d)$ we write $\overline{z}_j = x_j - iy_j, \rittaire z_j = x_j, \rittaiim z_j = y_j,
	\partial_{z_j}  = \partial/\partial z_j = (\partial_{x_j} - i\partial_{y_j})/2, 
	\partial_{\overline{z}_j} = \partial/\partial \overline{z}_j
	= (\partial_{x_j} + i\partial_{y_j})/2, z = (z_1,\dots,z_d) \in \bC^d, \overline{z} = (\overline{z}_1,\dots,\overline{z}_d),\partial_z^{\alpha} =  \partial_{z_1}^{\alpha_1}\cdots  \partial_{z_d}^{\alpha_d}, \partial_{\overline{z}}^{\alpha} =  \partial_{\overline{z}_1}^{\alpha_1}\cdots  \partial_{\overline{z}_d}^{\alpha_d}$
	  and $z^2 = \sum_{j=1}^d z_j^2$. 
	\par
	We set 
 \begin{equation} \label{A.7}
\Phi(z) = \frac{1}{2}(\rittaiim z)^2 =  \frac{1}{2}y^2 =  -\frac{1}{8}(z - \zbar)^2
 \end{equation}
and introduce the Hilbert space of $L^2$ weighted functions on $\bC^d$
 \begin{equation} \label{A.8}
L^2_{\Phi}(\bC^d) = \biggl\{f:\bC^d \to \bC; \Vert f \Vert_{L^2_{\Phi}} := \biggl(\int_{\bC^d}|f(z)|^2 e^{-2\Phi(z)/\h}dm(z)\biggr)^{1/2}
 < \infty \biggr\},
 \end{equation}
where $dm(z) = dx_1dy_1\cdots dx_ddy_d$ is the Lebsgue measure on $\bC^d$ (cf. (13.1.1) in \cite{Z}). We define 
 \begin{equation} \label{A.9}
H_{\Phi}(\bC^d)= \{f \in L^2_{\Phi}(\bC^d);\partial_{\zbar_j}f = 0, j = 1,2,\dots,d\}
 \end{equation}
consisting of all holomorphic functions in $L^2_{\Phi}(\bC^d)$ in the sense of distribution. We can easily see that $H_{\Phi}(\bC^d)$ is a closed subspace of $L^2_{\Phi}(\bC^d)$.
We note that if $\partial_{\zbar_j}f = 0\ (j = 1,2,\dots,d)$ for $f \in L^2_{\Phi}$, then $f(z) \in C^{\infty}(\bC^d)$ 
because of $f \in L^2_{\text{loc}}(\bC^d)$ and $\sum_{j=1}^d(\partial_{x_j}^2 + \partial_{y_j}^2)f 
= 4\sum_{j=1}^d \partial_{z_j}\partial_{\zbar_j}f = 0$ (cf. Theorem 3.22 in \cite{Mizohata}), which means that 
$f(z) \in H_{\Phi}(\bC^d)$ is a holomorphic function in the sense of complex function theory.  We define a real linear subspace 
in $\bC^{2d}$
 \begin{equation} \label{A.10}
\Lambda_{\Phi} = \left\{\left(z,\frac{2}{i}\frac{\partial \Phi}{\partial z}(z)\right) \in \bC^{2d}; z \in \bC^d\right\}
= \bigl\{(z,-\rittaiim z) \in \bC^{2d}; z \in \bC^d \bigr\},
 \end{equation}
which we identify with $\bC^d$ by using the map: $\bC^d \ni z \to (z, -\rittaiim z) \in \Lambda_{\Phi}$ from now on (cf. Theorem 13.5 in \cite{Z}). Let 
$a(z) \in S(\Lambda_{\Phi}) = S(\bC^d)$, i.e. $a(z)$  a function on $\Lambda_{\Phi}$ satisfying 
 \begin{equation*} 
|\partial_{z}^{\alpha}\partial_{\zbar}^{\beta}a(z)| \leq C_{\alpha\beta} < \infty, \ z \in \bC^d
 \end{equation*}
for all $\alpha$ and $\beta$. Then we define the Weyl quantization of $a(z)$ by (13.4.5) in \cite{Z}, which is written as 
 \begin{align} \label{A.11}
&a^w_{\Phi}(z,\h D_z)u(z) = 
\frac{1}{(2\pi\h)^d}
\cdot 2^d\det \frac{\partial^2\Phi}{\partial z \partial \zbar}  
\int_{\bC^d} a\bigl((z+w)/2\bigr)\notag \\
&\qquad  \times \left\{\exp \left(2\h^{-1}(z-w)\cdot \partial_z\Phi((z+w)/2)\right)\right\}u(w)dm(w)
 \end{align}
for $u \in \Sspace(\bC^d)$ 
by using (13.2.1) in \cite{Z} (cf. the proof of Theorem 13.8 in \cite{Z}), where $z\cdot z'= \sum_{j=1}^d z_jz'_j.$
\par
	We set
 \begin{equation} \label{A.12}
\varphi(z,x) = \frac{i}{2}(z-x)^2
 \end{equation}
for $z$ and $x$ in $\bC^d$ (cf. (13.12) in \cite{Z}).  Then we define the FBI transform  on $\Sspace(\bR^d)$ associated to $\varphi$ by
 \begin{equation} \label{A.13}
(\mathcal{T}_{\varphi}u)(z) = c_{\varphi}\h^{-3d/4}\int_{\bR^d}e^{i\h^{-1}\varphi(z,x)}u(x)dx
 \end{equation}
 with an appropriate constant $c_{\varphi}$ (cf. (13.3.3) in \cite{Z}) and a complex linear symplectic 
 transformation by $\kappa_{\varphi}: \bC^{2d} \ni (x,-\partial_x\varphi(z,x)) \to (z,\partial_z\varphi(z,x)) \in  \bC^{2d}$, i.e.
 \begin{equation} \label{A.14}
\kappa_{\varphi}: \bC^{2d} \ni (x,\xi) \to (x-i\xi,\xi) \in  \bC^{2d}
 \end{equation}
  (cf. (13.3.7) in \cite{Z}).
 \par
	Now we are ready to proceed with the proof of Theorem A.1.  We see from \eqref{A.10} and \eqref{A.14}  that $\kappa_{\varphi}: \bR^{2d} \ni (x,\xi) \to (x-i\xi,\xi) \in  \Lambda_{\Phi}$ is bijection.  For $b(x,\xi)$ defined by \eqref{A.3} we set
 \begin{equation} \label{A.15}
a = (\kappa_{\varphi}^{-1})^*b \in M_l(\bC),
 \end{equation}
 i.e. $a(z) = b(x,-\xi)$ with $z = x + i\xi \in \bC^d$ on $\Lambda_{\Phi}$ because we identify $(x-i\xi,\xi) \in \Lambda_{\Phi}$
  with $x-i\xi \in \bC^d$.
Then we have
 \begin{equation} \label{A.16}
b^w(x,\h D_x) = \Tphi^{\,\dag}a^w_{\Phi}(z,\h D_z)\Tphi
 \end{equation}
on $\Sspace(\bR^d)$ from Theorem 13.9 in \cite{Z}, noting that each component of $a(z) \in M_l(\bC)$ belongs to $S(\Lambda_{\Phi})$,
 where $\Tphi^{\,\dag}$ means the adjoint operator of $\Tphi$ (cf. Theorem 13.7 in \cite{Z}).  Consequently, we have
 \begin{equation} \label{A.17}
\Vert b^w(x,\h D_x)f\Vert_{(L^2)^l} = \Vert \Tphi^{\,\dag}a^w_{\Phi}(z,\h D_z)\Tphi f\Vert_{(L^2)^l} = \Vert a^w_{\Phi}(z,\h D_z)\Tphi f\Vert_{H_{\Phi}^{l}}
 \end{equation}
because we see $a^w_{\Phi}(z,\h D_z)\Tphi f \in H_{\Phi}(\bC^d)^l$ for $f  \in L^2(\bR^d)^l$ from Theorems 13.7 
and 13.8 in \cite{Z} and that
  $\Tphi^{\,\dag}: H_{\Phi}(\bC^d) \to L^2(\bR^d)$
 is unitary. 
   Hence we obtain
 \begin{align} \label{A.18}
&\Vert b^w(x,\h D_x)\Vert_{(L^2)^l \to (L^2)^l} = \sup \bigl\{\Vert b^w(x,\h D_x)f\Vert_{ (L^2)^l}; \Vert f \Vert_{\Ll}= 1\bigr\} \notag \\
& = \sup \bigl\{\Vert a^w_{\Phi}(z,\h D_z)\Tphi f\Vert_{H_{\Phi}^l}; \Vert f \Vert_{\Ll}= 1\bigr\} = 
 \Vert a^w_{\Phi}(z,\h D_z)\Vert_{H_{\Phi}^l\to H_{\Phi}^l}
\end{align}
because $\Tphi: L^2(\bR^d) \to H_{\Phi}(\bC^d)$ is unitary from Theorem 13.7 in \cite{Z}.
\par
	Let $\Pi_{\Phi}: L^2_{\Phi}(\bC^d) \to H_{\Phi}(\bC^d)$ be the orthogonal projector and $M_a: L^2_{\Phi}(\bC^d)^l \ni u(z) \to 
a(z)u(z) \in  L^2_{\Phi}(\bC^d)^l$ the multiplication operator. Then we have
 \begin{equation} \label{A.19}
\Vert \Pi_{\Phi}M_a\Pi_{\Phi} \Vert_{H_{\Phi}^l\to H_{\Phi}^l} = \Vert a^w_{\Phi}(z,\h D_z)\Vert_{H_{\Phi}^l\to H_{\Phi}^l}
+ O(\h)
 \end{equation}
from Theorem 13.10 in \cite{Z}.
Let $g \in H_{\Phi}(\bC^d)^l$.  Then, noting \eqref{A.8} and that $\Pi_{\Phi}: L^2_{\Phi}(\bC^d)\rightarrow H_{\Phi}(\bC^d)$
is the orthogonal projector, we have
 \begin{align} \label{A.20}
& \Vert \Pi_{\Phi}M_a\Pi_{\Phi}g \Vert_{H_{\Phi}^l}^2 = \Vert \Pi_{\Phi}M_ag \Vert_{H_{\Phi}^l}^2 
\leq \Vert M_ag \Vert_{(L^2_{\Phi})^l}^2 \notag \\
&  = \int_{\bC^d}dm(z)\,\bigl<a(z)g(z),a(z)g(z)\bigr>_{\bC^l}\,e^{-2\h^{-1}\Phi(z)} \notag \\
& \leq \int_{\bC^d}dm(z)\,\Vert a(z)\Vert_{\bC^l}^2|g(z)|^2e^{-2\h^{-1}\Phi(z)} \notag \\
& \leq \sup_{z}\Vert a(z) \Vert_{\bC^d}^2 \int_{\bC^d}dm(z)|g(z)|^2\,e^{-2\h^{-1}\Phi(z)} \notag \\
& = \sup_{z}\Vert a(z) \Vert_{\bC^l}^2\cdot \Vert g \Vert_{H_{\Phi}^l}^2,
\end{align}
where $<\cdot,\cdot>_{\bC^l}$ denoted the inner product of $\bC^l$ as in \S 6.
The expression \eqref{A.20} shows 
 \begin{equation} \label{A.21}
\Vert \Pi_{\Phi}M_a\Pi_{\Phi} \Vert_{H_{\Phi}^l\to H_{\Phi}^l}\leq  \sup_{z}\Vert a(z) \Vert_{\bC^d}.
 \end{equation}
Therefore, we obtain \eqref{A.6} from \eqref{A.15}, \eqref{A.18} and \eqref{A.19}, which completes the proof.
\end{proof}
\section{Proof of Theorems 2.2 - 2.5}
	In this section we will always assume $C_W = 0$ in Assumption 2.D as noted in \S 5. Let $M \geq 0$ be an integer and  $p(x,v) \in \Cspace(\bR^{2d})$ a function satisfying
\begin{equation} \label{B.1}
|\partial_{v}^{\alpha}\partial_{x}^{\beta}p(x,v)| \leq C_{\alpha\beta}<x;v>^M, \ (x,v) \in \bR^{2d}
\end{equation}
for all $\alpha$ and $\beta$, where $<x;v> = \sqrt{1 + |x|^2 + |v|^2}$. Let $S(t,s;q)$ be the classical action defined by \eqref{1.3} and 
\begin{equation}  \label{B.2}
  q^{t,s}_{x,y}(\theta) = y + \frac{\theta-s}{t-s}(x-y) = x- \frac{t-\theta}{t-s}(x-y) , \quad s\leq \theta \leq t.
\end{equation}
For $f \in \Czerospace$ we define
\begin{equation}  \label{B.3}
P(t,s)f =
        \begin{cases}
            \begin{split}
              & \sqrt{m/(2\pi i\rho)}^{\ d}
                  \int \bigl(\exp iS(t,s; q^{t,s}_{x,y})\bigr) \\
         &\hspace{2cm} \times    p(x,(x-y)/\sqrt{\rho})f(y)dy,\quad  s < t,
                      \end{split}
                  \\
\begin{split}
        & \sqrt{m/(2\pi i)}^{\ d}
    \text{Os}-\int (\exp im|v|^2/2)\\
             &\hspace{2cm}\times p(x,v)dv\, f(x), \quad s = t.
             \end{split}
        \end{cases}
\end{equation}
Through this section  we  write $\rho = t-s$.
The proposition mentioned below follows from Theorem 4.4 in \cite{Ichinose 1999} and Proposition 3.6 in \cite{Ichinose 2023}.
\begin{pro} \label{pro B.1}  Suppose either Assumption 2.A or 2.B.  In addition, we suppose \eqref{2.9} and that there exists an integer
$M_1 \geq 1$ satisfying
\begin{equation} \label{B.4} 
|\partial_{x}^{\alpha}V(t,x)| \leq C_{\alpha}<x>^{M_1}, \ |\alpha| \geq 1.
\end{equation}
Then there exists a constant $\rho^* > 0$ such that
\begin{equation} \label{B.5} 
\Vert P(t,s)f\Vert_a \leq 
     C_a \Vert f\Vert_{M+aM_1}, \quad  0 \leq t - s \leq \rho^*
\end{equation}
for $a = 0,1,2,\dots$ and all $f \in B^{M+aM_1}$ with constants $C_a \geq 0$.
\end{pro}
\begin{rem} \label{rem B.1} 
The constant $\rho^* >0$ in Proposition B.1 is determined in Lemma 3.3 of \cite{Ichinose 2023}, which is equal to $\rho^*$ in Theorem 2.3 of the present paper.
\end{rem}
\begin{lem} \label{lem B.2}
Suppose \eqref{2.11}, \eqref{2.12} and \eqref{2.17}.  Let $\mathcal{F}_w(t,s;q)$ be the solution to \eqref{2.4} with $\mathcal{U}(s) = I$.  Then the following (1) and (2) hold  for
$0 \leq s \leq s' \leq t' \leq t \leq T$: (1) Let $\{a_n\}_{n=0}^{\infty}$ be an arbitrary sequence satisfying 
$1 = a_0 > a_1> \dots > a_n>\dots > 0.$  Then we have
\begin{align} \label{B.6} 
& \Vert \partial_x^{\alpha}\partial_y^{\beta}\Fw(t',s';\qts)\Vert_{\bC^l}  \notag \\
& \leq C_n \exp \Bigl(-a_n\int_{s'}^{t'}w\bigl(\theta,\qts(\theta)\bigr)d\theta\Bigr), \ |\alpha + \beta| = n
\end{align}
for $n= 0,1,2,\dots$ with constants $C_n \geq 0$, where $C_0 = 1$. (2) Moreover, we assume that 
there exists an 
$M' \geq 0$ satisfying
\begin{equation} \label{B.7} 
\Vert\partial_{x}^{\alpha}\Wstx\|_{\bC^l} \leq C_{\alpha}<x>^{M'}
\end{equation}
in $\domain$ for all $\alpha$.  Then we have
\begin{equation} \label{B.8} 
 \Vert \partial_x^{\alpha}\partial_y^{\beta}\Fw(t',s';\qts)\Vert_{\bC^l} \leq C_{\alpha\beta}(t'-s')<x;y>^{M'} , 
 \ |\alpha + \beta| \geq 1.
\end{equation}
\end{lem}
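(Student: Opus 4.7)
The plan is to view $\mathcal{U}(\theta) := \Fw(\theta, s'; \qts)$ as the solution of the ODE \eqref{2.4} restricted to $\theta \in [s', t']$, and to differentiate it in $(x,y)$ via Duhamel's principle. Since $\qts$ is affine in $(x,y)$ with $|\partial_{x_j}\qts(\tau)|,\,|\partial_{y_j}\qts(\tau)|\le 1$ and all higher derivatives vanishing, the chain rule controls $\|D_x^{\alpha}D_y^{\beta}[A(\tau,\qts(\tau))]\|_{\bC^l}$ (where $A(\tau,x) := iH_{\ds}(\tau,x)+W_{\ds}(\tau,x)$) by a bounded combination of $\|(\partial_x^{\gamma}A)(\tau,\qts(\tau))\|_{\bC^l}$ with $|\gamma|\le|\alpha+\beta|$. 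The base estimate
\begin{equation*}
\|\Fw(\theta_2,\theta_1;\qts)\|_{\bC^l}\ \le\ \exp\Bigl(-\int_{\theta_1}^{\theta_2}w(\tau,\qts(\tau))\,d\tau\Bigr),\quad s'\le\theta_1\le\theta_2\le t',
\end{equation*}
follows from the energy identity $\tfrac{d}{d\theta}|\mathcal{U}(\theta)v|^2 = -2\langle W_{\ds}\mathcal{U}(\theta)v,\mathcal{U}(\theta)v\rangle \le -2w(\theta,\qts(\theta))|\mathcal{U}(\theta)v|^2$, in which the $iH_{\ds}$ contribution drops because $H_{\ds}$ is Hermitian and $W_{\ds}\ge wI$ by \eqref{2.11}; this is \eqref{B.6} for $n=0$ with $a_0=C_0=1$.

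For part (1), I induct on $n=|\alpha+\beta|$. For $n\ge 1$, the inhomogeneous ODE satisfied by $D_x^{\alpha}D_y^{\beta}\mathcal{U}(\theta)$ yields the Duhamel formula
\begin{equation*}
D_x^{\alpha}D_y^{\beta}\mathcal{U}(t') = -\sum_{0<\gamma\le\alpha+\beta}c_{\gamma}\int_{s'}^{t'}\Fw(t',\tau;\qts)\bigl[D^{\gamma}A(\tau,\qts(\tau))\bigr]\bigl[D^{\alpha+\beta-\gamma}\Fw(\tau,s';\qts)\bigr]d\tau
\end{equation*}
with combinatorial coefficients $c_{\gamma}$. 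By \eqref{2.12} and \eqref{2.17}, $\|D^{\gamma}A(\tau,\qts(\tau))\|_{\bC^l}\le C(1+w(\tau,\qts(\tau)))$ whenever $|\gamma|\ge 1$; combined with the inductive bound on $D^{\alpha+\beta-\gamma}\Fw(\tau,s';\qts)$ and the base bound on $\Fw(t',\tau;\qts)$, the integrand is dominated by $C(1+w(\tau,\qts(\tau)))\exp(-a_{n-1}\int_{s'}^{t'}w(\theta,\qts(\theta))d\theta)$. Setting $I:=\int_{s'}^{t'}w(\theta,\qts(\theta))d\theta\ge 0$, one has $\int_{s'}^{t'}(1+w(\tau,\qts(\tau)))d\tau\le T+I$, and the absorption inequality $(T+I)\exp(-(a_{n-1}-a_n)I)\le C_n'$ (valid for $I\ge 0$ since $a_n<a_{n-1}$) converts $e^{-a_{n-1}I}$ into $e^{-a_nI}$, closing the induction.

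For part (2), the added hypothesis \eqref{B.7} upgrades the bound to $\|D^{\gamma}A(\tau,\qts(\tau))\|_{\bC^l}\le C_{\gamma}<\qts(\tau)>^{M'}\le C_{\gamma}<x;y>^{M'}$ for $|\gamma|\ge 1$, since $\qts(\tau)$ is a convex combination of $x$ and $y$. For $|\alpha+\beta|=1$, Duhamel together with $\|\Fw\|_{\bC^l}\le 1$ at once yields $\|D_x^{\alpha}D_y^{\beta}\mathcal{U}(t')\|_{\bC^l}\le C(t'-s')<x;y>^{M'}$. For $n\ge 2$, I induct via the same identity: in every term $(D^{\gamma}A)(D^{\delta}\Fw)$ with $|\gamma|\ge 1$ and $|\delta|\le n-1$, part (1) already controls $\|D^{\delta}\Fw\|_{\bC^l}$ by a constant $C_{|\delta|}$ \emph{independent of $<x;y>$}, so each integrand is at most $C<x;y>^{M'}$, and $\tau$-integration supplies the prefactor $(t'-s')$.

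The main obstacle is the absorption step in part (1): the polynomial factor $1+w(\tau,\qts(\tau))$ arising whenever $D^{\gamma}W_{\ds}$ hits the integrand must be traded for an arbitrarily small reduction $a_{n-1}-a_n$ of the decay exponent, which is precisely why one introduces the monotone sequence $\{a_n\}$. The delicate point in part (2) is that the polynomial exponent stays at $M'$ (rather than growing to $|\alpha+\beta|M'$ as naive iteration would suggest) only because part (1) supplies bounds on lower-order derivatives of $\Fw$ that are uniform in $<x;y>$; the $<x;y>^{M'}$ factor then enters exactly once, through the single differentiated copy of $A$.
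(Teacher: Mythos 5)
Your proof is correct and follows essentially the same strategy as the paper's: the energy identity for the base case $n=0$, the Duhamel-type representation for derivatives (the paper's \eqref{B.10} is your first-derivative case; your general formula is its Leibniz extension), the bound $\|D^{\gamma}A\|_{\bC^l}\le C(1+w)$ from \eqref{2.12} and \eqref{2.17}, and the absorption trick $\sup_{b\ge 0}b\,e^{-\epsilon b}<\infty$ with $\epsilon = a_{n-1}-a_n$ to trade the polynomial factor for a decrement in the decay exponent. Your treatment of part (2) — noting that exactly one differentiated copy of $A$ contributes $<x;y>^{M'}$ and the remaining $\Fw$ factors are controlled by part (1) uniformly in $(x,y)$ — is what the paper compresses into ``In the same way we can prove \eqref{B.8} in general from \eqref{B.10}, using \eqref{B.6}.''
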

\begin{proof}  (1) Let $\df \in \bC^l$.  Then, settin $\mathcal{U}(\theta) = \Fw(\theta,s;q)$, we have
\begin{align*} 
& \frac{d}{d\theta}\bigl<\mathcal{U}(\theta)\df,\mathcal{U}(\theta)\df\bigr>_{\bC^{l}} = 2\rittaire \bigl<\frac{d}{d\theta}\,\mathcal{U}(\theta)\df,\mathcal{U}(\theta)\df\bigr> _{\bC^{l}} \\
& = -2\rittaire \bigl<W_{\ds}(\theta,q(\theta))\,\mathcal{U}(\theta)\df,\mathcal{U}(\theta)\df\bigr> _{\bC^{l}}
\leq -2w(\theta,q(\theta))\bigl<\mathcal{U}(\theta)\df,\mathcal{U}(\theta)\df\bigr> _{\bC^{l}}
\end{align*}
from \eqref{2.4} and \eqref{2.11}, which shows 
\begin{equation*} 
 \bigl<\mathcal{U}(t)\df,\mathcal{U}(t)\df\bigr>_{\bC^{l}} \leq e^{-2\int_s^tw(\theta,q(\theta))d\theta}|\df|^2.
\end{equation*}
Hence we obtain 
\begin{equation} \label{B.9} 
 \|\Fw(t,s;q)\|_{\bC^{l}} \leq e^{-\int_s^tw(\theta,q(\theta))d\theta},\ 0 \leq s \leq t \leq T,
\end{equation}
which shows \eqref{B.6} with $|\alpha + \beta| = 0$.
\par
	From \eqref{2.4} we can easily prove
\begin{align} \label{B.10}
   & \frac{\partial}{\partial x_k}\mathcal{F}_w (t',s';\qts)  = -\int_{s'}^{t'} \mathcal{F}_w (t',\theta;\qts)
   \left[\frac{\partial}{\partial x_k}\Bigl\{
iH_{\ds}(\theta,\qts(\theta)) +  W_{\ds}(\theta,\qts(\theta)) \Bigr\}\right]
\notag \\
& \qquad 
\times \mathcal{F}_w (\theta,s';\qts)d\theta
\end{align}
(cf. (3.3) in \cite{Ichinose 2007}).  Consequently, using \eqref{2.12}, \eqref{2.17}, \eqref{B.2} and \eqref{B.9}, we see
\begin{align} \label{B.11}
   & \|\partial_{x_k}\,\mathcal{F}_w (t',s';\qts)\|_{\bC^{l}}  \leq  \int_{s'}^{t'} \|\mathcal{F}_w (t',\theta;\qts)\|_{\bC^{l}} \cdot
   \Bigl\|\partial_{x_k}\,\Bigl\{iH_{\ds}(\theta,\qts(\theta))  \notag \\ 
& \quad +  W_{\ds}(\theta,\qts(\theta)) \Bigr\}\Bigr\|_{\bC^{l}}  \cdot \|\mathcal{F}_w (\theta,s';\qts)\|_{\bC^{l}} d\theta \notag \\ 
& \leq e^{-\int_{s'}^{t'}w(\theta,\qts(\theta))d\theta}\int_{s'}^{t'}\bigl(C + \|(\partial_{x_k} W_{\ds})(\theta,\qts(\theta)\bigr)\|_{\bC^{l}})d\theta
\notag \\
& \leq C'e^{-\int_{s'}^{t'}w(\theta,\qts(\theta))d\theta}\int_{s'}^{t'}\bigl(1 + \sum_{i,j=1}^l |(\partial_{x_k} w_{ij})(\theta,\qts(\theta))|\bigr)d\theta \notag \\
& \leq C''e^{-\int_{s'}^{t'}w(\theta,\qts(\theta))d\theta}\int_{s'}^{t'}\bigl(1 + w(\theta,\qts(\theta))\bigr)d\theta.
\end{align}
Applying $\sup_{b \geq 0}be^{-\epsilon b}\leq C_{\epsilon} < \infty$ for arbitrary constant $\epsilon > 0$ to \eqref{B.11}
as $b = \displaystyle{\int_{s'}^{t'} }w(\theta,\qts(\theta))d\theta$ and $\epsilon = 1 - a_1 > 0$,  we obtain \eqref{B.6} with $|\alpha| = 1$ and $\beta = 0$.
 In the same way we obtain \eqref{B.6} with $|\alpha + \beta| = 1$.  Using \eqref{B.10}, we can prove \eqref{B.6} in general  by the same argument inductively.
\par
	(2)  
	Applying \eqref{2.17}, \eqref{B.7} and \eqref{B.9} to \eqref{B.10}, 
	we can prove \eqref{B.8} with $|\alpha|=1$ and $\beta = 0$. 
	In the same way we can prove \eqref{B.8} in general from \eqref{B.10}, using \eqref{B.6}.
\end{proof}
	We define
\begin{equation}  \label{B.12}
\mathcal{C}_{\ds w}(t,s)f =
        \begin{cases}
            \begin{split}
              & \sqrt{m/(2\pi i\rho)}^{\ d}
                  \int \bigl(\exp iS(t,s; q^{t,s}_{x,y})\bigr) \\
         &\quad \times    \mathcal{F}_w(t,s;\qts)f(y)dy,\quad  s < t,
                      \end{split}
                  \\
         f, \quad s = t
        \end{cases}
\end{equation}
with $\rho = t-s$ for $f \in \Czerospace^l$.
\begin{pro} \label{pro B.3}  Suppose either Assumption 2.A or 2.B.  In addition, we suppose \eqref{2.9}, \eqref{2.11}, \eqref{2.12}, \eqref{2.17} and \eqref{B.4}.
  Let $\rho^* > 0$ be the constant determined in Proposition B.1.  Then there exists a constant $K_0 \geq 0$ such that
\begin{equation}   \label{B.13}
 \Vert \mathcal{C}_{\ds w}(t,s)f\Vert \leq e^{K_0(t-s)}\Vert f \Vert, \quad 0 \leq t - s \leq \rho^*
\end{equation}
           for all $f \in (L^2)^l$.
\end{pro}
\begin{proof} We take the function $\Phi(t,s;x,y,z) = (\Phi_1,\dots,\Phi_d) \in \bR^d$  defined by (3.11) or (3.12) in \cite{Ichinose 2023} correspondingly to Assumption 2.A or 2.B.  The constant $\rho^*>0$  is the same as   in Lemma 3.3 of 
\cite{Ichinose 2023} as noted in Remark B.1.  Let $z(t,s;x,\xi,y) = (z_1,\dots,z_d) \in \bR^d$ be the function determined  in Lemma 3.3 of \cite{Ichinose 2023}.
 We note that $\Phi(t,s;x,y,z)$ and $z \in \bR^d$ introduced above are different from $\Phi(z)$ and $z \in \bC^d$ in \S A.
We write
\begin{equation}   \label{B.14}
\Fw(t,s;x,y) = \Fw(t,s;\qts)
\end{equation}
for a while.  Let $f \in \Czerospace^l$.  Then  we can prove 
\begin{align}  \label{B.15}
    & \Csw^{\dag}\chi(\epsilon\cdot)^2\Csw f = \left(\frac{m}{2\pi\rho}\right)^d\iint_{\bR^{2d}}\chi(\epsilon z)^2\left(\exp i(x - y)\cdot \frac{m\Phi}{t-s} \right)
    \notag \\
    &\qquad \times   \Fw(t,s;z,x)^{\dag}\Fw(t,s;z,y)f(y)dydz \notag \\
    & = \left(\frac{1}{2\pi}\right)^d\int e^{i(x-y)\cdot\eta}d\eta \int \chi(\epsilon z)^2 \Fw(t,s;z,x)^{\dag}\Fw(t,s;z,y) \notag \\
    & \times \det\frac{\partial z}{\partial \xi}(t,s;x,(t-s)\eta/m,y)f(y)dy, \quad 0 \leq t - s \leq \rho^*
  \end{align}
  with $z = z(t,s;x,(t-s)\eta/m,y)$ from \eqref{B.12} as in the proof of (3.25) in \cite{Ichinose 2023}.
  \par
  In Lemma 3.3 of \cite{Ichinose 2023} we have proved
\begin{equation}   \label{B.16}
           \sum_{j=1}^d |\partial_{\xi}^{\alpha}\partial_x^{\beta}\partial_{y}^{\gamma}
            z_j(t,s;x,\xi,y)|  \leq C_{\alpha\beta\gamma},
\  |\alpha + \beta  + \gamma| \geq 1
\end{equation}
in $0 \leq t - s \leq \rho^*$ and $(x,\xi,y) \in \bR^{3d}$, and
\begin{align}  \label{B.17}  
           & \det \frac{\partial z}{\partial\xi}(t,s;x,\xi,y) = 1 + (t - s)h(t,s;x,\xi,y) > 0,\notag \\
            &  |\partial_{\xi}^{\alpha}\partial_x^{\beta}\partial_{y}^{\gamma}
            h(t,s;x,\xi,y)|  \leq C_{\alpha\beta\gamma} < \infty
\end{align}
for all $\alpha, \beta$ and $\gamma$ in $0 \leq t - s \leq \rho^*$ and $(x,\xi,y) \in \bR^{3d}$.
Hence from \eqref{B.6} and \eqref{B.15}-\eqref{B.17} we can prove for $f \in \Sspace^l$
\begin{align}  \label{B.18}
   & \lim_{\epsilon\to 0+0} \Csw^{\dag}\chi(\epsilon\cdot)^2\Csw f  = \int e^{i(x-y)\cdot\eta}\,\dbar \eta 
   \int \Fw(t,s;z,x)^{\dag}
   \notag \\
    &\quad  \times \Fw(t,s;z,y)f(y)dy + (t-s)\int e^{i(x-y)\cdot\eta}\,\dbar \eta \int \Fw(t,s;z,x)^{\dag}
     \notag \\
     &\quad  \times \Fw(t,s;z,y)h(t,s;x,(t-s)\eta/m,y)f(y)dy, \quad 0 \leq t - s \leq \rho^*    
\end{align}
with $z = z(t,s;x,(t-s)\eta/m,y)$ in the topology of $\Sspace^l$, which we write as $\Csw^{\dag}\Csw f $ formally. Noting
\begin{equation*}   
\|\Fw(t,s;z,x)^{\dag}\Fw(t,s;z,y)\|_{\bC^l} \leq \|\Fw(t,s;z,x)\|_{\bC^l} \cdot \|\Fw(t,s;z,y)\|_{\bC^l} \leq 1
\end{equation*}
from \eqref{B.6}, we apply Theorem A.1 to the right-hand side of \eqref{B.18} as $\h = (t-s)/m$.
Then we can prove that the $L^2$-norm of \eqref{B.18} is bounded above by
\begin{equation}   \label{B.19}
\|f\|  + 2K_0(t-s)\|f\|  \leq e^{2K_0(t-s)}\|f\|
\end{equation}
with a constant $K_0 \geq 0$, which shows 
\begin{align*}   
 & \|\Csw f\|^{2}  \leq \lim_{\epsilon \to 0+0}\bigl(\chi(\epsilon\cdot)\Csw f,\chi(\epsilon\cdot)\Csw f\bigr) \\
 & \quad = \lim_{\epsilon \to 0+0}\bigl(\Csw^{\dagger} \chi(\epsilon\cdot)^2\Csw f, f\bigr) = \bigl(\Csw^{\dagger}\Csw f, f\bigr)
  \\
    &\quad  \leq \|\Csw^{\dagger}\Csw f\|\cdot \|f\|
   \leq e^{2K_0(t-s)}\|f\|^2,  \quad 0 \leq t - s \leq \rho^*
\end{align*}
for $f \in \Sspace^{l}$.  Hence we can complete the proof of  \eqref{B.13}.
\end{proof}
	Changing the variables: $\bR^d \ni y \to v = (x-y)/\sqrt{\rho} \in \bR^d$, we can write $P(t,s)f$ defined by \eqref{B.3} as 
\begin{equation}  \label{B.20}
    P(t,s)f  = \sqrt{\frac{m}{2\pi i}}^{\, d}
                  \int e^{i\phi(t,s;x,v)}p(x,v)f(x-\sqrt{\rho}v)dv,\  \rho = t - s > 0
\end{equation}
for $f \in \Czerospace$, where
\begin{align}  \label{B.21}
   & \phi(t,s;x,v)  = \frac{m}{2}|v|^2 + \sqrt{\rho}v\cdot\int_0^1A(t-\theta\rho,x-\theta\sqrt{\rho}v)d\theta
   \notag \\
   &\quad  -\rho\int_0^1V(t-\theta\rho,x-\theta\sqrt{\rho}v)d\theta  \equiv \frac{m}{2}|v|^2 + \psi(t,s;x,\sqrt{\rho}v)
\end{align}
(cf. (3.8) in \cite{Ichinose 2003}).
\begin{lem} \label{lem B.4}
Suppose Assumptions 2.C, 2.D and \eqref{2.17}.   Then, for an arbitrary multi-index $\kappa$ both of commutators $[\partial_x^{\kappa},\Csw]f$ and $[x^{\kappa},\Csw]f$ for $f \in \Czerospace^{l}$ are written in the form
\begin{align}  \label{B.22}
   & (t-s)\sum_{|\gamma| \leq |\kappa|}\widetilde{P}_{\gamma}(t,s)(\partial_x^{\gamma}f) := (t-s) \sum_{|\gamma| \leq |\kappa|}   \sqrt{\frac{m}{2\pi i}}^{\, d}\notag \\
   & \times \int e^{i\phi(t,s;x,v)}p_{\gamma}(t,s;x,\sqrt{\rho}v)(\partial_x^{\gamma}f)(x-\sqrt{\rho}v)dv,             
   \end{align}
   where $p_{\gamma}(t,s;x,\zeta)$ satisfy
\begin{equation}  \label{B.23}
   \|\partial_{\zeta}^{\alpha}\partial_x^{\beta}p_{\gamma}(t,s;x,\zeta)\|_{\bC^l} \leq C_{\alpha\beta}<x;\zeta>^{|\kappa| - |\gamma|}
\end{equation}
for all $\alpha$ and $\beta$.
\end{lem}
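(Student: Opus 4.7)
The plan is to compute each commutator as a single oscillatory integral in $v$ and then integrate by parts against the Gaussian phase factor $e^{im|v|^{2}/2}$ hidden inside $e^{i\phi}$ in order to manufacture the factor $(t-s)$ claimed in \eqref{B.22}. Throughout I work with the representation $\Csw f(x)=\sqrt{m/(2\pi i)}^{d}\int e^{i\phi(t,s;x,v)}F(t,s;x,\sqrt{\rho}v)f(x-\sqrt{\rho}v)\,dv$ with $F(t,s;x,\zeta):=\mathcal{F}_{w}(t,s;q^{t,s}_{x,x-\zeta})$ and $\phi$ as in \eqref{B.21}. Two preparatory bounds are central. First, by \eqref{B.21} and Assumption 2.C, every $x$-derivative of $\psi(t,s;x,\sqrt{\rho}v)$ of order $\geq 1$ splits as $\zeta\cdot G_{A}^{(\mu)}(t,s;x,\zeta)+\rho\, G_{V}^{(\mu)}(t,s;x,\zeta)$ at $\zeta=\sqrt{\rho}v$, and all mixed $(x,\zeta)$-derivatives of $\psi$ of total order $\geq 1$ are bounded by $C<x;\zeta>$. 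Second, differentiating \eqref{B.10} iteratively and using $\|\mathcal{F}_{w}\|_{\bC^{l}}\leq 1$ (from $W_{\ds}\geq 0$) together with \eqref{2.13} and \eqref{2.17}, one obtains $\|\partial_{\zeta}^{\alpha}\partial_{x}^{\beta}F(t,s;x,\zeta)\|_{\bC^{l}}\leq C_{\alpha\beta}(t-s)<x;\zeta>^{|\alpha|+|\beta|}$ for every $|\alpha|+|\beta|\geq 1$, since each differentiation of $F$ inserts one bounded factor $\partial_{x}(iH_{\ds}+W_{\ds})$ under a new time integral over $[s,t]$.

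For $[x^{\kappa},\Csw]f$ I substitute directly into \eqref{B.20}, obtaining $[x^{\kappa},\Csw]f=\sqrt{m/(2\pi i)}^{d}\int e^{i\phi}F\cdot\{x^{\kappa}-(x-\sqrt{\rho}v)^{\kappa}\}f(x-\sqrt{\rho}v)\,dv$, and the bracketed polynomial expands as $\sum_{\beta\neq 0}\binom{\kappa}{\beta}(-1)^{|\beta|+1}x^{\kappa-\beta}(\sqrt{\rho}v)^{\beta}$. For each monomial I pick an index $j$ with $\beta_{j}\geq 1$ and integrate by parts once in $v_{j}$ via the identity $v_{j}e^{i\phi}=(im)^{-1}(\partial_{v_{j}}e^{i\phi}-ie^{i\phi}\partial_{v_{j}}\psi)$. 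The crucial point is that every subsequent $v$-derivative falling on $F(t,s;x,\sqrt{\rho}v)$, on the remaining monomial in $v$, on $f(x-\sqrt{\rho}v)$, or on $\psi(t,s;x,\sqrt{\rho}v)$ produces a new $\sqrt{\rho}$ by the chain rule, and this combines with the initial $\sqrt{\rho}$ from $(\sqrt{\rho}v)^{\beta}$ to supply exactly $\rho=(t-s)$. All residual powers of $v$ regroup into powers of $\zeta=\sqrt{\rho}v$, so the resulting $p_{\gamma}(t,s;x,\zeta)$ is a finite sum of products of $x^{\kappa-\beta}$, $\zeta$-monomials, and bounded quantities ($F$, $\partial_{\zeta}F$, $\partial_{\zeta}\psi$, $\partial_{x}\psi$); a power count yields $|p_{\gamma}|\leq C<x;\zeta>^{|\kappa|-|\gamma|}$, and the same argument applied recursively to $(x,\zeta)$-derivatives of the symbol verifies \eqref{B.23} in full.

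For $[\partial_{x}^{\kappa},\Csw]f$ I apply Leibniz to $\partial_{x}^{\kappa}(\Csw f)$; after the trivial term $\Csw(\partial_{x}^{\kappa}f)$ cancels, only splittings $\alpha'+\beta+\gamma=\kappa$ with $|\alpha'|+|\beta|\geq 1$ survive. Fa\`{a} di Bruno expands $\partial_{x}^{\alpha'}e^{i\phi}/e^{i\phi}$ as a finite sum of products $\prod_{i}i\partial_{x}^{\mu_{i}}\phi$ with $\sum|\mu_{i}|=|\alpha'|$; decomposing each factor as $\zeta\cdot G_{A}^{(\mu_{i})}+\rho G_{V}^{(\mu_{i})}$, any product containing a ``$\rho G_{V}$'' factor already carries $(t-s)$, and any term with $|\beta|\geq 1$ receives $(t-s)$ from the estimate on $\partial_{x}^{\beta}F$ above, with residual symbol bound $(t-s)^{|\beta|-1}<x;\zeta>^{|\beta|}\leq C<x;\zeta>^{|\beta|}$ valid for $\rho\leq\rho^{*}$. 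The only case needing additional work is $\beta=0$ together with a pure product of ``$\zeta\cdot G_{A}$'' factors; there the factor $\zeta^{\alpha'}=(\sqrt{\rho}v)^{\alpha'}$ is treated by a single IBP on any one of the $v_{j}$'s exactly as in the previous paragraph, again extracting $\rho$ and regrouping the remaining $v$'s as $\zeta$'s inside the symbol.

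The main obstacle is the bookkeeping needed to certify that, once the factor $(t-s)$ has been extracted, the integrand really defines a symbol $p_{\gamma}(t,s;x,\zeta)$ bounded uniformly in $\rho\in(0,\rho^{*}]$ by $<x;\zeta>^{|\kappa|-|\gamma|}$ in every $(x,\zeta)$-derivative. The structural reason no spurious negative power of $\rho$ appears is that every IBP-produced $v_{j}$-derivative lands on a function depending on $v$ only through $\sqrt{\rho}v=\zeta$, automatically outputting an additional $\sqrt{\rho}$ that cancels the one lost in the substitution $v_{j}\to (im)^{-1}\partial_{v_{j}}$; combined with the $(t-s)$-improved bound on $\partial F$ coming from Assumption 2.D and the structure of \eqref{B.10}, this is exactly what forces both commutators into the announced normal form.
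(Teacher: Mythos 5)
Your overall architecture — writing the commutators as a single oscillatory integral in $v$, integrating by parts against the Gaussian factor $e^{im|v|^{2}/2}$ to trade a power of $v$ for a $v$-derivative that necessarily produces an extra $\sqrt{\rho}$, and thereby extracting the prefactor $(t-s)$ — is the same mechanism the paper uses: compare the identity just below \eqref{B.25} with your identity $v_{j}e^{i\phi}=(im)^{-1}(\partial_{v_{j}}e^{i\phi}-ie^{i\phi}\partial_{v_{j}}\psi)$. What differs is a genuine gap in the symbol estimates.

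The gap is the bound you claim for derivatives of $F(t,s;x,\zeta):=\mathcal{F}_{w}(t,s;q^{t,s}_{x,x-\zeta})$, namely $\|\partial_{\zeta}^{\alpha}\partial_{x}^{\beta}F\|_{\bC^{l}}\leq C_{\alpha\beta}(t-s)\langle x;\zeta\rangle^{|\alpha|+|\beta|}$. This is what a naive iteration of \eqref{B.10}, using only $\|\mathcal{F}_{w}\|\leq 1$ and \eqref{2.13}, does produce: each new $x$-derivative lands on $\partial_{x}W_{\ds}$ and costs a factor $\langle x;\zeta\rangle$, so the exponent grows with the order. But an exponent growing in $|\alpha|+|\beta|$ is fatal here. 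Already for the simplest contribution to $p_{0}(t,s;x,\zeta)$ coming from $\{\partial_{x_{j}}\mathcal{F}_{w}\}(t,s;x,x-\zeta)/(t-s)$ with $|\kappa|=1$ and $\gamma=0$, your estimate gives $\|\partial_{\zeta}^{\alpha}\partial_{x}^{\beta}p_{0}\|\leq C\langle x;\zeta\rangle^{\,1+|\alpha|+|\beta|}$, whereas \eqref{B.23} demands $\leq C_{\alpha\beta}\langle x;\zeta\rangle^{\,1}$ with the exponent independent of $(\alpha,\beta)$. This fixed exponent is not a luxury: Lemma B.4 feeds into Proposition B.1 and Proposition B.5, which require symbols of a fixed order $M$ in the sense of \eqref{B.1}, so the proof cannot proceed with a degree that drifts upward as you take more derivatives.

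What closes the gap in the paper is Lemma B.2. Part (1), estimate \eqref{B.6}, shows that \emph{every} mixed derivative $\partial_{x}^{\alpha}\partial_{y}^{\beta}\mathcal{F}_{w}$ of order $n$ is bounded by a \emph{constant} $C_{n}$ (in fact by $C_{n}\exp(-a_{n}\int w)$); this uses \eqref{2.12}, i.e.\ $\|\partial_{x}^{\alpha}W_{\ds}\|\leq C_{\alpha}(1+w)$, so that the $w$'s produced by differentiation can be absorbed back into the exponential decay via $\sup_{b\geq 0}be^{-\epsilon b}<\infty$. Part (2), estimate \eqref{B.8}, then differentiates \eqref{B.10} once more, using these uniform bounds on the outer $\mathcal{F}_{w}$-factors and the polynomial bound \eqref{B.7} on the single middle factor, to conclude $\|\partial_{x}^{\alpha}\partial_{y}^{\beta}\mathcal{F}_{w}\|\leq C_{\alpha\beta}(t'-s')\langle x;y\rangle^{M'}$ with $M'$ \emph{fixed} (here $M'=1$, from \eqref{2.13}). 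You only invoke $\|\mathcal{F}_{w}\|\leq 1$ and \eqref{2.13}; you never use \eqref{2.12}, never use \eqref{B.6}, and this is precisely why your exponent grows. Replacing your preparatory bound on $\partial F$ by \eqref{B.8} (together with \eqref{B.6} for the inductive step) repairs the argument; without it, the claimed recursive verification of \eqref{B.23} does not go through.
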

\begin{proof}
We can write
\begin{equation*}  
    \Csw f = \sqrt{\frac{m}{2\pi i}}^{\, d}
                  \int e^{i\phi(t,s;x,v)}\Fw(t,s;x,x-\sqrt{\rho}v)f(x-\sqrt{\rho}v)dv
\end{equation*}
as in \eqref{B.20}, where we used \eqref{B.14}.  Let $\lambda$ be an arbitrary multi-index and set $\mathfrak{l}= |\lambda|$. Then we have
\begin{align} \label{B.24}
   & \partial_{x_j}(\Csw \partial_x^{\lambda}f)
- \Csw(\partial_{x_j}\partial_x^{\lambda}f) =  \sqrt{\frac{m}{2\pi i}}^{\,d} \int (i\partial_{x_{j}}\phi) e^{i\phi}\Fw(t,s;x,x-\sqrt{\rho}v) \notag \\
& \quad \times (\partial_x^{\lambda}f)(x -\sqrt{\rho}v)dv  
+ \sqrt{\frac{m}{2\pi i}}^{\,d} \int e^{i\phi}\left\{\frac{\partial}{\partial x_j}\Fw(t,s;x,x-\sqrt{\rho}v)\right\} 
\notag \\
& \qquad \times (\partial_x^{\lambda}f)(x -\sqrt{\rho}v)dv.
\end{align}
Noting \eqref{B.6}, we can see as in the proof of Lemma 3.2 of \cite{Ichinose 2003} that the first term on the right-hand side of \eqref{B.24} is expressed in the form of \eqref{B.22} with $|\kappa| = \dl +1$.  It is also proved from \eqref{B.8} with $M' = 1$ that the second term is expressed in the form of \eqref{B.22} with $|\kappa| = \dl +1$.  Hence we can prove the statement of Lemma B.4 for $[\partial_x^{\kappa},\Csw]f$ by induction as in the proof of Lemma 3.2 of \cite{Ichinose 2003}.  In the same way we have
\begin{align} \label{B.25}
   & x_j\Csw(x^{\lambda}f)- \Csw(x_j x^{\lambda}f) = \sqrt{\rho} \sqrt{\frac{m}{2\pi i}}^{\,d}
   \int e^{i\phi}
      \notag \\
       &\qquad \times \Fw(t,s;x,x-\sqrt{\rho}v)v_j(x -\sqrt{\rho}v)^{\lambda}
     f(x - \sqrt{\rho}v)dv  \notag \\
        & = \frac{i\sqrt{\rho}}{m} \sqrt{\frac{m}{2\pi i}}^{\,d}
\int e^{im|v|^2/2}\partial_{v_j}\Bigl\{e^{i\psi(t,s;x,\sqrt{\rho}v)}
\Fw(t,s;x,x-\sqrt{\rho}v) \notag \\
& \qquad \times(x -\sqrt{\rho}v)^{\lambda}f(x - \sqrt{\rho}v)\Bigr\}dv.
\end{align}
Noting \eqref{B.6}, we can prove as in the proof of Lemma 3.2 of \cite{Ichinose 2003} that
\eqref{B.25} is expressed in the form of \eqref{B.22} with $|\kappa| = \dl +1$.   Hence we can prove the statement of Lemma B.4 for $[x^{\kappa},\Csw]f$ by induction as in the proof of Lemma 3.2 of \cite{Ichinose 2003}.
\end{proof}
\begin{pro} \label{B.5}
Suppose the same assumptions as in Theorem 2.3. Let $\rho^* > 0$ be the constant  in Proposition B.1.  
Then, for $a = 0, 1, 2, \dots$ there exist constants $K_a \geq 0$ such that 
\begin{equation} \label{B.26}
\Vert \Csw f\Vert_a \leq e^{K_a(t - s)}\Vert f\Vert_{a}, \quad 0 \leq t - s \leq \rho^*
\end{equation}
for all $f \in (B^{a}(\bR^d))^l$.
\end{pro}
\begin{proof} 
Using Proposition B.1 with $M_1 = 1$ and Proposition B.3, we can prove Proposition B.5 from Lemma B.4 as in the proof of 
Proposition 3.4 of \cite{Ichinose 2003}.
\end{proof}
\begin{lem} \label{lem B.6}
We suppose \eqref{2.11}, \eqref{2.12} and \eqref{2.17}.  Moreover, we assume that there exists a constant $M \geq 0$ satisfying 
\begin{align} \label{B.27}
 &|\partial_x^{\alpha}V(t,x)| + \sum_{j=1}^d\big\{|\partial_x^{\alpha}A_j(t,x)| + 
 |\partial_x^{\alpha}\partial_tA_j(t,x)|\bigr\} \notag \\
 &\qquad + \|\partial_x^{\alpha}W_{\ds}(t,x)\|_{\bC^l}
 \leq C_{\alpha}<x>^{M}
\end{align}
in $\domain$ for  all $\alpha$.  Then we have
\begin{align}  \label{B.28}
      & \Bigl[i\partial_t  - H(t) - H_{\ds}(t) + iW_{\ds}(t) \Bigr]\Csw f \notag \\
      & = \sqrt{t - s}R_{\ds w}(t,s)f  \equiv \sqrt{t-s}\Bigl(R_{ij}(t,s);i,j=  1,2,\dots,l\Bigr)f,
\end{align}
 where $R_{ij}(t,s)$ are the
operators defined by \eqref{B.3} with $r_{ij}(x,v)$ satisfying \eqref{B.1} for an integer  $M' \geq 0$.
\end{lem}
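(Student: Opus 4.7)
The plan is to compute $L(\Csw f)$, where $L := i\partial_t - H(t) - H_{\ds}(t) + iW_{\ds}(t)$, by differentiating the kernel
\[
K(t,s;x,y) := \sqrt{m/(2\pi i\rho)}^{\,d}\exp\{iS(t,s;\qts)\}\Fw(t,s;\qts)
\]
under the integral sign with respect to $(t,x)$ and then to change variables $y = x - \sqrt{\rho}v$ to identify the result in the form $\sqrt{\rho}$ times an integral of type \eqref{B.3} with a symbol satisfying \eqref{B.1}. The $(t,x)$-dependence of $K$ sits in three ingredients, namely the prefactor $\rho^{-d/2}$, the phase $S$, and the matrix amplitude $\Fw(t,s;\qts)$, and $L$ must be applied to each.

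The argument will rely on two semi-classical cancellations standard in the theory of Feynman path integrals and carried out in the proof of Theorem 4.4 of \cite{Ichinose 1999} and of Lemma 3.2 of \cite{Ichinose 2003}. First, the phase-derivative contributions produced by $i\partial_t(\rho^{-d/2}e^{iS})$ and by the action of the quadratic form $\frac{1}{2m}(i^{-1}\partial_{x_j} - \mathfrak{q}A_j)^2$ on $e^{iS}$ combine into a Hamilton--Jacobi-like residue: the $\rho^{-1}$-singular contribution from $i\partial_t\rho^{-d/2}$ cancels exactly against the leading term $i\Delta_x S/(2m)$ arising when the Laplacian hits $e^{iS}$, the $(x-y)/\rho$-singular contributions from $-\partial_t S$ and from $|\nabla_x S - \mathfrak{q}A|^2/(2m)$ cancel exactly, and what remains can be written, after Taylor expanding $A$ and $V$ along $\qts$ using \eqref{B.27}, as $\sqrt{\rho}$ times a scalar symbol of $(x, (x-y)/\sqrt{\rho})$ satisfying \eqref{B.1}. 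Second, the amplitude contributions $i\partial_t\Fw$, $-H_{\ds}(t,x)\Fw$, $iW_{\ds}(t,x)\Fw$, the transport $\frac{i}{m}(\nabla_x S - \mathfrak{q}A(t,x))\cdot\nabla_x\Fw$, and the Laplacian $\frac{1}{2m}\Delta_x\Fw$ are compared with the ODE \eqref{2.4} for $\Fw$ along $\qts$ evaluated at $\theta = t$; after Taylor expansion of $H_{\ds}$ and $W_{\ds}$ from interior path points to $(t,x)$, and the sharp derivative bound \eqref{B.8} of Lemma B.2(2), which supplies an explicit factor $t-s$ for each spatial derivative of $\Fw$, the sum of these contributions is likewise $\sqrt{\rho}$ times a matrix symbol in \eqref{B.1}.

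Collecting, $LK$ equals $\rho^{-d/2}e^{iS}\sqrt{\rho}$ times a matrix-valued symbol $r(t,s;x,v)$ whose $(\alpha,\beta,\beta')$-derivatives are polynomially bounded in $<x;v>$ via \eqref{B.27} and Lemma B.2, which is precisely the representation \eqref{B.28}. The main technical obstacle will be the bookkeeping needed to extract the explicit $\sqrt{\rho}$ factor from every contribution while verifying the symbol bound \eqref{B.1} uniformly in $0 \leq t-s \leq \rho^*$: Lemma B.2 must be used in both its supremum form \eqref{B.6} and its $O(t-s)$-derivative form \eqref{B.8}, the matrix-valued structure of $\Fw$ must be preserved through all Taylor expansions, and the polynomial growth assumption \eqref{B.27} must be translated into the polynomial bound in $<x;v>^{M'}$ required by \eqref{B.1}. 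The overall calculation closely mirrors Lemma B.4 and the proofs of \cite{Ichinose 1999, Ichinose 2003}.
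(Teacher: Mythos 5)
Your proposal is correct and follows essentially the same approach the paper takes: the paper's own proof is a one-line citation to the proof of (3.16) in \cite{Ichinose 2007} (with $H_1(t)$ replaced by $H_{\ds} - iW_{\ds}(t)$) together with the instruction to apply the bounds \eqref{B.6} and \eqref{B.8} of Lemma B.2 to the relevant equations there, and your sketch is an accurate reconstruction of exactly that computation — the Hamilton--Jacobi cancellation in the phase, the ODE \eqref{2.4} comparison for $\Fw$, the Taylor expansions controlled by \eqref{B.27}, and the use of \eqref{B.6}/\eqref{B.8} to extract the $\sqrt{t-s}$ factor while keeping the symbol in class \eqref{B.1}.
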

\begin{proof}
We can prove Lemma B.6 as in the proof of (3.16) in \cite{Ichinose 2007}, where we replace $H_1(t)$ with $H_{\ds} - iW_{\ds}(t)$.
In fact, applying \eqref{B.6} and \eqref{B.8} to (3.21), (3.22) and (3.24) in \cite{Ichinose 2007}, we can complete the proof.
\end{proof}
 \begin{pro} \label{pro B.7}
 Suppose the same assumptions as in Theorem 2.3.   Let $U_w(t,s)f$ be the solution to \eqref{2.3} with $u(s) = f$ found in Theorem 2.1.  Then there exists an integer $M'' \geq 0$ such that  we have
 \begin{equation} \label{B.29}
 \Vert \Csw f - U_w(t,s)f\Vert_a \leq  C_a(t-s)^{3/2}\Vert f\Vert_{a+M''},\quad  0\leq  t-s \leq \rho^*
 \end{equation}
 for $f \in (B^{a+M''})^l\ (a = 0,1,2,\dots)$, 
 where $\rho^* > 0$ is the constant in Proposition B.1.
 \end{pro}
 \begin{proof}
 From \eqref{1.1} we have $\partial_tA_j = - E_j - \partial_{x_j}V\ (j = 1,2,\dots,d)$. Using these expressions, we can easily see that \eqref{B.27} hold for all $\alpha$ under our assumptions (cf. the proof of Proposition 4.2 of \cite{Ichinose 2023}).  Hence,
 using \eqref{2.15}, \eqref{B.5}  with $M_1 = 1$, \eqref{B.26}  and Lemma B.6, we can prove \eqref{B.29} as in the proof of Proposition 4.6 of
 \cite{Ichinose 2023}.
 \end{proof}
 \vspace{0.5cm}
 {\bf Proof of Theorems 2.2 and 2.3.}
 \par
 Let $\kdelta (t,0)f$ be the operator defined by \eqref{2.5}. We can write
\begin{align}  \label{B.30}
    \kdelta (t,0)f  = & \limepsilon{\cal C}_{\ds w}(t,\tau_{\nu-1})\chi(\epsilon\cdot){\cal C}_{\ds w}(\tau_{\nu-1},\tau_{\nu-2})\notag \\
    &\quad \cdot \chi(\epsilon\cdot)\cdots \chi(\epsilon\cdot)
{\cal C}_{\ds w}(\tau_1,0)f
\end{align}
for $f \in C^{\infty}_0(\bR^d)^l$ from \eqref{B.12} as in the proof of (6.3) in \cite{Ichinose 2023}.  
Hence, using \eqref{2.15}, \eqref{B.26} and \eqref{B.29}, we can complete the proof of
Theorems 2.2 and 2.3 as in the proof of Theorems 2.1 and 2.2 of \cite{Ichinose 2023}, respectively.
\par
We will prove Theorem 2.5 below.  Theorem 2.4 can be proved from Theorem 2.5 as Theorem 2.2 can be done from Theorem 2.3.
	For the sake of simplicity we assume $N = 1$.   We can prove Theorem 2.5 generally in the same way. In addition, we will prove 
	the  case of $t_1 \in [0,t)$ below.  In the same way we can prove the case of $t_1 = t$.
\par
	Let $\mathcal{F}_w(\theta,s;q)$ be the solution to \eqref{2.4} with $I$ at $\theta = s$. 
	 Let $0 \leq s \leq t' \leq \theta \leq T.$  Then, both of $\mathcal{F}_w(\theta,s;q)$ and $\mathcal{F}_w(\theta,t';q)\mathcal{F}_w(t',s;q)$
	 satisfy \eqref{2.4} with $\mathcal{F}_w(t',s;q)$ at $\theta = t'$. Hence we have
	 \begin{equation} \label{B.31}
	 \mathcal{F}_w(\theta,s;q) = \mathcal{F}_w(\theta,t';q)\mathcal{F}_w(t',s;q).
	 \end{equation}
\par
	
	We take $j$ such that $\tau_j \leq t_1 < \tau_{j+1}$.  Then, using \eqref{B.31}, we can write 
\begin{align}  \label{B.32}
     & \Fw(t,t_1;\qdelta)Z_1(\qdelta(t_1))\Fw(t_1,0;\qdelta) = \Fw(t,\tau_{j+1};\qdelta)\Fw(\tau_{j+1},t_1;\qdelta)
                   \notag \\
     &\quad \cdot Z_1(\qdelta(t_1))\Fw(t_1,\tau_j;\qdelta)\Fw(\tau_{j},0;\qdelta) 
     = \Fw(t,\tau_{\nu-1};q^{t,\tau_{\nu-1}}_{x,x^({\nu-1})})\cdots 
     \notag \\
     & \quad \cdot \Fw(\tau_{j+2},\tau_{j+1};q^{\tau_{j+2},\tau_{j+1}}_{x^{(j+2)},x^{(j+1)}})
     \Fw(\tau_{j+1},t_1;q^{\tau_{j+1},\tau_{j}}_{x^{(j+1)},x^{(j)}})Z_1(q^{\tau_{j+1},\tau_{j}}_{x^{(j+1)},x^{(j)}}(t_1))
     \notag \\
     & \quad \cdot \Fw(t_1,\tau_{j};q^{\tau_{j+1},\tau_{j}}_{x^{(j+1)},x^{(j)}})
     \Fw(\tau_{j},\tau_{j-1};q^{\tau_{j},\tau_{j-1}}_{x^{(j)},x^{(j-1)}})\cdots 
     \Fw(\tau_{1},0;q^{\tau_{1},0}_{x^{(1)},x^{(0)}}).
     \end{align}
Consequently, from \eqref{2.20} and \eqref{B.12} we can write
\begin{align}  \label{B.33}
     & \left<Z_1(\qdelta(t_1))\right>_w f = \lim_{\epsilon \to 0+0}\mathcal{C}_{\ds w}(t,\tau_{\nu-1})\chi(\epsilon \cdot)
     \cdots \chi(\epsilon \cdot)\mathcal{C}_{\ds w}(\tau_{j+2},\tau_{j+1})\chi(\epsilon \cdot)
                   \notag \\
     &\quad \cdot \widetilde{\mathcal{C}}_{\ds w}(\tau_{j+1},t_1,\tau_{j})\chi(\epsilon \cdot)
     \mathcal{C}_{\ds w}(\tau_{j},\tau_{j-1})\chi(\epsilon \cdot)\cdots \chi(\epsilon \cdot)
     \mathcal{C}_{\ds w}(\tau_{1},0)f,
              \end{align}
	where 
\begin{align}  \label{B.34}
      \widetilde{\mathcal{C}}_{\ds w}(t,t_1,s)f & = \sqrt{\frac{m}{2\pi i\rho}}^{\ d}\int e^{iS(t,s; q^{t,s}_{x,y})}
     \mathcal{F}_w(t,t_1;\qts)Z_1\bigl(\qts(t_1)\bigr)
                   \notag \\
     &\qquad \times \mathcal{F}_w(t_1,s;\qts)f(y)dy, \quad \rho = t - s > 0.
                         \end{align}
Applying \eqref{B.6} to \eqref{B.34}, we have
\begin{equation}  \label{B.35}
      \Vert \widetilde{\mathcal{C}}_{\ds w}(t,t_1,s)f\Vert_a \leq C_a\Vert f \Vert_{a+\dM_1}
  \end{equation}
from \eqref{B.5} with $M_1 = 1$. Hence, noting \eqref{B.26}, the expression \eqref{B.33} shows 
\begin{align}  \label{B.36}
     & \left<Z_1(\qdelta(t_1))\right>_w f = \mathcal{C}_{\ds w}(t,\tau_{\nu-1})\mathcal{C}_{\ds w}(\tau_{\nu-1},\tau_{\nu-2})
     \cdots \mathcal{C}_{\ds w}(\tau_{j+2},\tau_{j+1})
                   \notag \\
     &\quad \cdot \widetilde{\mathcal{C}}_{\ds w}(\tau_{j+1},t_1,\tau_{j})
     \mathcal{C}_{\ds w}(\tau_{j},\tau_{j-1})\cdots 
     \mathcal{C}_{\ds w}(\tau_{1},0)f
 \end{align}
in $(B^a)^l$ for $f \in (B^{a+\mathfrak{M}_1})^l\ (a = 0,1,2,\dots)$ and 
\begin{equation}  \label{B.37}
      \Vert \left<Z_1(\qdelta(t_1))\right>_w f\Vert_a \leq C_ae^{K_a'T}\Vert f \Vert_{a+\dM_1}
  \end{equation}
with the same constants $C_a$ as in \eqref{B.35} and $K'_a = \max\{K_a,K_{a+\dM_1}\}$ for all $\Delta$ satisfying $|\Delta| \leq \rho^*$, which proves (1) in Theorem 2.5.
\par
  We have
\begin{align*}  
    & Z_1(\qts(t_1))- Z_1(y) = Z_1\Bigl(y + \frac{t_1-s}{t-s}(x-y)\Bigr) - Z_1(y)
   \notag \\
    & = \sqrt{t-s}\cdot \frac{t_1-s}{t-s}\sum_{j=1}^d\frac{x_j-y_j}{\sqrt{t-s}}
    \int_0^1\frac{\partial Z_1}{\partial x_j} \Bigl(y + \theta\frac{t_1-s}{t-s}(x-y)\Bigr)d\theta \\
    & \equiv \sqrt{t-s}\,\ddp_1(t,t_1,s;x,v) 
\end{align*}
from \eqref{B.2}, where $v = (x-y)/\sqrt{t-s}$ or $y = x - \sqrt{t-s}v$.  In addition, since 
 we  see
\begin{align*}  
    &\mathcal{F}_w(t,s;q) - I = \int_{0}^{1}\frac{d}{d\theta}\mathcal{F}_w(s+\theta(t-s),s;q) d\theta  
     \notag \\
    &\quad  = -(t-s) \int_{0}^{1}\Bigl\{iH_{\ds}\bigl(s+\theta(t-s),q(s+\theta(t-s))\bigr) \\
    & \qquad + W_{\ds}\bigl(s+\theta(t-s),q(s+\theta(t-s))\bigr)\Bigr\}\mathcal{F}_w(s+\theta(t-s),s;q) d\theta
\end{align*}
 from \eqref{2.4},  we can write
\begin{align*}  
    &\mathcal{F}_w(t_{1},s;\qts) - I = (t_{1}  -s)\ddp_{2}(t,t_{1},s;x,v),
     \notag \\
    &\mathcal{F}_w(t,t_{1};\qts) - I = (t-{t_{1}})\ddp_{3}(t,t_{1},s;x,v).
\end{align*}
We note 
\begin{align*}   
    &\qts(s+\theta(t_{1}-s)) = y+ \theta\frac{t_{1}-s}{t-s}(x-y),
     \notag \\
    &\qts(t_{1}+\theta(t-t_{1})) = x- (1 - \theta)\frac{t-t_{1}}{t-s}(x-y)
\end{align*}
from \eqref{B.2}. Applying the above expressions of $Z_1,\mathcal{F}_w$ and $\qts$ to \eqref{B.34}, we have
\begin{equation} \label{B.38}
\widetilde{\mathcal{C}}_{\ds w}(t,t_1,s)f = \mathcal{C}_{\ds w}(t,s)Z_1(\cdot)f + \sqrt{t-s}\mathfrak{P}(t,t_{1},s)f,
\end{equation}
where $\mathfrak{P}(t,t_{1},s)f = \Bigl(\mathfrak{P}_{ij}(t,t_{1},s);i,j = 1,2,\dots,l\Bigr)f$ is the operator defined by \eqref{B.3}.
Noting \eqref{B.6}, we can easily see that $\ddp_{ij}(t,t_{1},s;x,v)\ (i,j = 1,2,\dots,l) $ satisfy \eqref{B.1}  with $M = \dM_{1}+2$.
Hence from \eqref{B.36} we obtain 
\begin{align}  \label{B.39}
     & \left<Z_1(\qdelta(t_1))\right>_w f = \mathcal{C}_{\ds w}(t,\tau_{\nu-1})\mathcal{C}_{\ds w}(\tau_{\nu-1},\tau_{\nu-2})
     \cdots \mathcal{C}_{\ds w}(\tau_{j+2},\tau_{j+1})
                   \notag \\
     &\quad \cdot \Big\{\mathcal{C}_{\ds w}(\tau_{j+1},\tau_j)Z(\cdot) + \sqrt{\tau_{j+1}-\tau_{j}}\,\mathfrak{P}(\tau_{j+1},t_{1},\tau_{j})\Bigr\}
     \mathcal{C}_{\ds w}(\tau_{j},\tau_{j-1})\cdots 
     \mathcal{C}_{\ds w}(\tau_{1},0)f.
 \end{align}
	Let $M''$ be the integer determined in Proposition B.7 and set $\dM'_1 = \max \{\dM_1+ M'',\dM_1+2\}$.  Then, 
	using \eqref{2.15}, \eqref{B.5}, \eqref{B.26} and \eqref{B.29},  from \eqref{B.39} we can prove
\begin{equation}  \label{B.40}
     \lim_{|\Delta|\to 0} \Bigl\Vert \left<Z_1(\qdelta(t_1))\right>_w f - U_{w}(t,\tau_{j})Z_1(\cdot)U_{w}(\tau_{j},0)f\Bigr\Vert_{a} = 0
      \end{equation}
	uniformly in $t \in [t_{1},T]$ for $f \in (B^{a+\dM'_{1}})^{l}$ as in the proof of Theorem 2.3. 
	We can easily see
\begin{align*}  
     & U_{w}(t,\tau_j)Z_1(\cdot)U_{w}(\tau_{j},0)f -  U_{w}(t,t_{1})Z_1(\cdot)U_{w}(t_{1},0)f  \\
     & = U_{w}(t,t_1)\bigl\{ U_{w}(t_1,\tau_{j})-I\bigr\}Z_1(\cdot)U_{w}(\tau_{j},0)f  \\
     & \quad + U_{w}(t,t_1)Z_1(\cdot)\bigl\{I - U_{w}(t_{1},\tau_{j})\bigr\}U_{w}(\tau_{j},0)f 
     \end{align*}
and 
\begin{equation*}  
     \Vert  U_{w}(t,s)f - f\Vert_{a} \leq C_{a}(t-s)\Vert f \Vert_{a+2}
      \end{equation*}
for $a = 0,1,2,\dots$ from \eqref{2.3}.  Consequently we have
\begin{equation}  \label{B.41}
      \bigl\Vert U_{w}(t,\tau_{j})Z_1(\cdot)U_{w}(\tau_{j},0)f -  U_{w}(t,t_{1})Z_1(\cdot)U_{w}(t_{1},0)f\bigr\Vert_{a} 
      \leq C'_{a} |\Delta|\Vert f \Vert_{a+\dM_{1}+2} \\
     \end{equation}
for $a = 0,1,2,\dots$ from \eqref{2.15}, which shows 
\begin{equation}  \label{B.42}
     \lim_{|\Delta|\to 0} \Bigl\Vert \left<Z_1(\qdelta(t_1))\right>_w f - U_{w}(t,t_{1})Z_1(\cdot)U_{w}(t_{1},0)f\Bigr\Vert_{a} = 0
      \end{equation}
uniformly in $t \in [t_{1},T]$ for $f \in (B^{a+\dM'_{1}})^{l}$  together with \eqref{B.40}.  The inequalities \eqref{2.15} and \eqref{B.37} show 
\begin{equation}  \label{B.43}
      \Bigl\Vert \left<Z_1(\qdelta(t_1))\right>_w f - U_{w}(t,t_{1})Z_1(\cdot)U_{w}(t_{1},0)f\Bigr\Vert_{a} \leq C_{a}\Vert f \Vert_{a+\dM_{1}}
      \end{equation}
	for $f \in (B^{a+\dM_{1}})^{l}, t \in [t_{1},T]$ and all $\Delta$ satisfying $|\Delta| \leq \rho^{*}$.  Then we have
\begin{align*}  
     &  \Bigl\Vert \left<Z_1(\qdelta(t_1))\right>_w f - U_{w}(t,t_{1})Z_1(\cdot)U_{w}(t_{1},0)f\Bigr\Vert_{a} 
     \leq \Bigl\Vert \left<Z_1(\qdelta(t_1))\right>_w g\\
     &\qquad   - U_{w}(t,t_{1})Z_1(\cdot)U_{w}(t_{1},0)g \Bigr\Vert_{a} 
     + C_{a}\Vert f - g \Vert_{a+\dM_{1}}
      \end{align*}
	for $f \in (B^{a+\dM_{1}})^{l}$ and $g \in (B^{a+\dM'_{1}})^{l},  t \in [t_{1},T]$ and all $\Delta$ satisfying $|\Delta| \leq \rho^{*}$.
	Using this expression and \eqref{B.42}, we can complete the proof of (2) and (3) in Theorem 2.5 as in the proof of Theorem 2.3.
\vspace{0.2cm}
\par
{\bf Data availability statement.} Data sharing is not applicable to this
article as no new data were created or analyzed in this study.
%
%%%%%%%%%%
%%%%%
%%%%%%%%%%%
 
%
%
%

\end{document}